\newtheorem{remark}{Remark}
\newtheorem{corollary}{Corollary}
\newtheorem{lemma}{Lemma}
\newtheorem{theorem}{Theorem}
\newtheorem{assm}{Assumption}
\newtheorem{setting}{Setting}
\newtheorem{defn}{Definition}[section]
\newtheorem{prop}{Proposition}[section]
\newtheorem*{result*}{Result}
\newtheorem*{remark*}{Remark}
\newtheorem*{claim*}{Claim}
\newcommand{\dto}{\stackrel{d}{\longrightarrow}}
\newcommand{\Pto}{\stackrel{p}{\longrightarrow}}
\newcommand{\asto}{\stackrel{a.s.}{\longrightarrow}}
 \newcommand{\ind}{\perp\!\!\!\!\perp} 
\newcommand{\del}{\partial}
\newcommand{\iid}{\stackrel{i.i.d.}{\sim}}
\newcommand{\Tstat}{T_{j}}
\newcommand{\repval}{p^r_{j,n}}
\numberwithin{equation}{section}
\title{The $\ell$-test: leveraging sparsity in the Gaussian linear model for improved inference}
\author{Souhardya Sengupta\footnote{Address for correspondence: Souhardya Sengupta, Department of Statistics, Harvard University, One Oxford Street, Science center, Cambridge, MA 02138, USA. Email: \href{mailto:ssengupta@g.harvard.edu}{ssengupta@g.harvard.edu}}}
\author{Lucas Janson}
\affil{Department of Statistics, Harvard University, Cambridge, MA, USA}
\date{}
\begin{document}
	\maketitle

 \begin{abstract}
We develop novel LASSO-based methods for coefficient testing and confidence interval construction in the Gaussian linear model with $n\ge d$. Our methods' finite-sample \textcolor{black}{validity is} identical to {\color{black}that} of their ubiquitous ordinary-least-squares-$t$-test-based analogues, yet have substantially higher power when the true coefficient vector is sparse. In particular, {\color{black}under sparsity} our coefficient test, which we call the $\ell$-test, performs like the \emph{one-sided} $t$-test (despite not being given any information about the sign), 
and {\color{black} $\ell$-test-based confidence intervals are correspondingly} shorter than the standard $t$-test-based intervals.
The nature of the $\ell$-test directly provides a novel exact adjustment conditional on LASSO selection for post-selection inference, allowing for the construction of post-selection $p$-values and confidence intervals. None of our methods require resampling or Monte Carlo estimation. 
\color{black} We perform a variety of simulations and a real data analysis on an HIV drug resistance data set to demonstrate the benefits of the $\ell$-test. 
\color{black} We additionally show that the $\ell$-test can be applied to a large class of asymptotically Gaussian estimators, dramatically expanding its applicability beyond linear models.
 \end{abstract}

\section{Introduction}
\label{sec:introduction}

\subsection{Motivation}
\label{sec:motivation}

Assume we have data $(\bm y, \bm X)$ from a (homoskedastic Gaussian) linear model: 
\begin{equation}
    \label{eqn:lm}
    \bm y \sim \mathcal{N}(\bm X\bm \beta, \sigma^2 \bm I),
\end{equation}
where $\bm X \in\mathbb{R}^{n\times d}$ is full column-rank (and in particular, assume $n\ge d$) and treated as non-random, and $\bm \beta
\in\mathbb{R}^d$ and $\sigma^2\in\mathbb{R}_{>0}$ are unknown. For a given covariate of interest $X_j$ this paper will consider testing $H_{j}:\beta_j = 0$ and the related problem of constructing a confidence interval for $\beta_j$. It will leverage the LASSO \citep{LASSO} to do so and our method's construction will also make it easy to construct conditionally valid versions, conditioned on LASSO selection. 

The go-to solution for this type of single covariate inference is based on the linear regression $t$-test for $H_j$, which can be efficiently inverted to obtain a $t$-test-based confidence interval for $\beta_j$.
The linear regression $t$-test dates back over a century \citep{fisher1922goodness} and is ubiquitous in introductory statistics courses and methods courses in nearly every domain of science and engineering. As a result, it is hard to overstate how universally widely used it 
is in practice. And it is easy to see why: the $t$-test is intuitive, easy to compute, and comes with strong theoretical guarantees.


The goal of this paper is to allow an analyst to leverage a belief in \emph{sparsity} (of $\bm \beta$) to conduct more informative inference (when sparsity holds) without sacrificing the statistical guarantees of the $t$-test (even when sparsity does not hold). 
Sparsity is a widely held belief throughout applications in 
\color{black}many fields, as evidenced by the widespread adoption of the LASSO \citep{LASSO} and other sparse regression methods; see \cite{rish2014sparse,64524c4c-35f2-309c-807b-12f074681130} for a textbook and review paper, respectively, on sparse regression in general, including discussion of applications, and, e.g., \cite{Lucas_Carvalho_Wang_Bild_Nevins_West_2006,annurev:/content/journals/10.1146/annurev-economics-061109-080451,https://doi.org/10.1002/cem.1418} for reviews on the important role of sparsity specifically in the domains of genomics (Section~\ref{sec:hiv} will demonstrate our methods on a concrete data set in this domain), economics, and chemistry, respectively. \color{black}
While leveraging sparsity in regression is a heavily studied subject (we review existing approaches in Section~\ref{sec:relatedwork}), methods developed to leverage it often rely on sparsity for \emph{both} validity and increased power,
while we explicitly seek to rely on it \emph{only} for increased power.

\subsection{Summary of our contributions}
\label{sec:summary}

We develop a hypothesis test for $H_j: \beta_j=0$, which we call the $\ell$-test, that uses the absolute value of the fitted LASSO coefficient as its test statistic. Using novel analysis of the conditional distribution of the LASSO estimator given the sufficient statistic of the linear model, we derive the test statistic's exact null distribution, allowing us to efficiently compute $p$-values without resampling or Monte Carlo. We argue that the $\ell$-test will often achieve nearly the power of the one-sided $t$-test when $\bm\beta$ is sparse, despite not knowing the sign of $\beta_j$ and remaining exactly valid under identical assumptions as the two-sided $t$-test. We show the $\ell$-test can be efficiently inverted to produce exact confidence intervals, which due to the power improvement of the $\ell$-test \textcolor{black}{are substantially shorter} than the $t$-test-based confidence intervals when $\bm\beta$ is sparse{\color{black}; in our simulations, we find the typical improvement under sparsity to be about $10\%$}. For both the $\ell$-test and its corresponding confidence interval, we show that a cross-validation procedure can be used to select the penalty parameter $\lambda$ in the LASSO from the data without impacting our validity guarantees, making our proposed procedures tuning-parameter-free.
%
The nature of the $\ell$-test and our formula for its null distribution make it straightforward to derive and compute (novel) post-selection $\ell$-test $p$-values such that the post-selection $\ell$-test $p$-value for $H_j$ is exactly (and non-conservatively) valid \emph{conditional} on the LASSO estimate of $\beta_j$ being nonzero; this conditional test can also be inverted to construct a confidence interval that is valid conditional on selection. \textcolor{black}{We also show that the $\ell$-test can be extended to the rich class of locally asymptotically normal models, yielding asymptotically valid $p$-values. Furthermore, we leverage the framework of Approximate Message Passing (AMP) to perform a theoretical power analysis in medium dimensions that sheds analytical light on the power benefits of the $\ell$-test.
}
A wide range of simulations and an application to HIV drug resistance demonstrate our methods to be powerful, efficient, and robust.

\subsection{Related work}
\label{sec:relatedwork}

The standard choice for testing $H_{j}:\beta_j = 0$ (and, via inversion, constructing confidence intervals) in the linear model is the (two-sided) $t$-test, or the one-sided $t$-test when the sign is known. Given the age and ubiquity of the linear model in statistics, we do not attempt to cover all related literature (though \citet{leibickel2020}, particularly their Appendix B, provides an excellent and detailed review), but just note that many such works focus on developing methods with some form of guarantees under weaker assumptions than the standard (homoskedastic) Gaussian linear model assumed in this paper \citep{friedman1937,pitman1937b,1938,krushkalwallis1952,tukey1958,hajek1962,adichie1967,jaeckel1972,efron1979,freedman1981,gutenbrunner1993,leibickel2020}. 
As their goal is robustness, these methods generally do not outperform the $t$-test in the (homoskedastic) Gaussian linear model, whereas this is exactly the goal of the current paper: maintain the same guarantees as the $t$-test while improving its power when $\bm{\beta}$ is sparse.
To our knowledge, the only other work that leverages a linear model's sparsity for testing an individual coefficient is the de-biased LASSO \citep{debiasedLASSO,zhangandzhang2014,JMLR:v15:javanmard14a}, but, when $n\ge d$, it either reduces exactly to the $t$-test or is only asymptotically valid under strong sparsity assumptions on $\bm\beta$; in contrast, the validity of the $\ell$-test holds regardless of the sparsity of $\bm\beta$.
There are a number of excellent works that aim to leverage sparsity for more powerful inference in the Gaussian linear model, including (fixed-X) knockoffs \citep{knockoff} and subsequent follow-ups that improve its performance (e.g., \citet{AS-LJ:2020,luo2022, renandbarber(2023),lee2024boosting}), as well as methods based on mirror statistics \citep{gaussmirror,junfdrdatasplit}, but all of these methods can only be used for variable selection and do not provide single-variable inference.

Two approaches share a similar goal as ours in seeking to improve the $t$-test's power without making further assumptions.
The first approach is \citet{HABIGER2014153}, which proposes to split the observations into two disjoint (thus independent) parts and uses the first part to estimate the sign of $\beta_j$ and then leverages the estimated sign in a test using the second part of the data. As we will discuss in Section~\ref{sec:improved_performance}, the $\ell$-test also leverages an estimated sign of $\beta_j$, but since it does not involve data splitting, it does not suffer from the associated loss in sample size and power. 
\color{black}The second approach is called Frequentist, assisted by Bayes (FAB), which, given a prior on $\bm\beta$ and $\sigma^2$, produces Bayes-optimal power (or confidence interval width) subject to maintaining the same frequentist validity as the $t$-test \citep{yuandhofflinearregression}. But this paper only considers (dense) Gaussian priors for $\bm\beta$ and hence, unlike the $\ell$-test, does not leverage sparsity\footnote{The discussion section of \cite{yuandhofflinearregression} mentions the possibility of using FAB with spike-and-slab priors to incorporate sparsity but does not pursue it}.\color{black}

While the goal of the $\ell$-test is most similar to the works mentioned so far, its approach is most closely related to the idea of conditioning on a sufficient statistic under the null hypothesis. This approach is perhaps most prominently used in constructing uniformly most powerful unbiased tests \citep{lehmannscheffe1955}, including the $t$-test. But this idea is also fundamental to co-sufficient sampling \citep{bartlett1937properties,CSS}, which is used for testing in a wide variety of contexts; see \citet{RB-LJ:2020} for a recent review of such tests. However, the $\ell$-test is not sampling-based, and besides, to the best of our knowledge, the only work applying co-sufficient sampling to testing in the linear model is \citet{huang2020}, but there it is applied to knockoffs \citep{knockoff,EC-ea:2018} which is a method for variable selection and cannot perform single-coefficient inference.

Related to this paper's post-selection inference methods, there is a rich literature on obtaining $p$-values that are valid post-selection \citep{Cox1975ANO,berk2013,fithianthesis,bootstrap_splitting,bachoc2016,tian2018selective} and in particular in the linear model conditioned on LASSO selection \citep{lassoinference,tibshirani2016,liu2018powerful,panigrahi2021integrative}.
Closest to our work is \citet{liu2018powerful}, which 
can be thought of as adjusting the standard $t$-test (or, more accurately, the $z$-test, since they assume $\sigma^2$ known) for a coefficient to make it valid conditional on that coefficient being selected by the LASSO. The conditional $\ell$-test can be thought of as making an analogous adjustment to the $\ell$-test rather than the $t$-test, resulting in similar gains under sparsity over \citet{liu2018powerful}'s method as the unconditional $\ell$-test achieves over the standard $t$-test; see Sections~\ref{sec:postselection}, \ref{sec:expt_lasso_adjusted_ci}, and Appendix~\ref{sec:app_expt_lasso_adjusted_ci} for further comparison and discussion.

\subsection{Notation}
Throughout this paper, we will use boldfaced symbols to denote matrices and vectors. For a matrix $\bm A$, unless otherwise stated, $\bm A_j$ denotes its $j^{\mathrm{th}}$ column, $\bm A_{-j}$ denotes its sub-matrix with the $j^{\mathrm{th}}$ column dropped while $A_{ij}$ (note the symbol is no longer boldfaced) denotes the entry in the $i^{\mathrm{th}}$ row and the $j^{\mathrm{th}}$ column. Similarly for a vector $\bm a\in \mathbb R^d$, unless otherwise stated, $a_j$ denotes its $j^{\mathrm{th}}$ entry while $\bm a_{-j}$ denotes the sub-vector of $\bm a$ without the $j^{\mathrm{th}}$ entry. We will also use $\mathbb I(\cdot)$ as the \textit{indicator function} that takes the value 1 if the condition within the parantheses is true, and 0 otherwise.

\subsection{Software}
\label{sec:software}
\color{black} An \textbf{\textsf{R}} \citep{r} package implementing our methods is available at \url{https://github.com/SSouhardya/ell_test_R} and scripts for replicating all analyzes in this paper are available at \url{https://github.com/SSouhardya/l-test}.

\color{black} 
\section{The $\ell$-test}
\label{sec:elltest}
Our proposed test for $H_j:\beta_j=0$ is primarily based on the simple idea of conditioning on sufficient statistics. The minimal sufficient statistic for the linear model under $H_j$ is $\bm S^{(j)} := (\bm X_{-j}^T\bm y, \bm y^T\bm y)$. So, by sufficiency, the conditional distribution of $\bm y\mid \bm S^{(j)}$ does not depend on any unknown parameters under $H_j$, and the same holds true for the conditional distribution of any fixed function of $\bm y$, including the LASSO coefficient estimate. 
Let
\[
    \hat{\bm \beta}^{\lambda} = \underset{\bm \beta}{\arg\min}\left(\frac{1}{2n}\|\bm y - \bm X \bm \beta\|^2 +\lambda \sum_{k}|\beta_k|\right)
\]
denote the LASSO estimator of the entire coefficient vector with penalty parameter $\lambda$. Denote the cumulative distribution function (CDF) for the conditional distribution of $\hat{\beta}^\lambda_j\mid \bm S^{(j)}$ under $H_j$, which we call the $\ell$-distribution, evaluated at some $b\in \mathbb{R}$, by 
\[F^\lambda_{\ell}(b\mid \bm S^{(j)}):=\mathbb{P}_{H_j}(\hat{\beta}^\lambda_j\leq b\mid \bm S^{(j)}).\]
By sufficiency, the $\ell$-distribution does not depend on any unknown parameters under $H_j$, and hence we can in principle compute a valid $p$-value for $H_j$ that rejects for large $|\hat{\beta}^\lambda_j|$:
\[\bar{F}_{|\ell|}^\lambda(|\hat{\beta}^\lambda_j|\mid \bm S^{(j)}), \]
where $\bar{F}_{|\ell|}^\lambda(b\mid \bm S^{(j)}) := \mathbb{P}_{H_j}(|\hat{\beta}^\lambda_j|\ge b\mid \bm S^{(j)}) = 1-\lim_{b'\rightarrow b^{-}} F_{\ell}^\lambda(b'\mid \bm S^{(j)}) + F^\lambda_{\ell}(-b\mid \bm S^{(j)})$ denotes the tail probability of $|\hat{\beta}^\lambda_j|$ and is entirely determined by the $\ell$-distribution $F_\ell^\lambda$.
We call the test that uses the above $p$-value the $\ell$-test, though our recommended usage of it involves two modifications, one about breaking ties in the $p$-value when $\hat{\beta}_j^\lambda=0$ (which currently cause a point mass at 1 in the $p$-value distribution) and the other about the choice of $\lambda$.
We defer these two choices to Sections~\ref{sec:smoothing} and \ref{sec:lambda_choice_singletest}, respectively, and first provide a characterization of the $\ell$-distribution which allows us to efficiently compute the $\ell$-test $p$-value and helps explain why, when, and by how much the $\ell$-test increases power over the $t$-test.

\subsection{The $\ell$-distribution}
\label{sec:elldist}
As the first step towards characterizing the $\ell$-distribution $F^{\lambda}_{\ell}(\cdot\mid \bm S^{(j)})$, we restate \citet[Proposition E.1]{luo2022} that exactly characterizes the distribution of $\bm y\mid \bm S^{(j)}$ under $H_j$. Let $\bm P_{-j} = \bm X_{-j}(\bm X_{-j}^T \bm X_{-j})^{-1}\bm X_{-j}^T$ denote the projection matrix onto the column space of $\bm X_{-j}$.

\begin{lemma}[\citet{luo2022}]
\label{lem:conddist}
    For the Gaussian linear model \eqref{eqn:lm}, define $\hat{\bm y}_{j} = \bm P_{-j}\bm y$ and
    $\hat \sigma_j^2 = \|\bm y -\hat{\bm y}_{j}\|^2$,
    and let $\bm V\in\mathbb{R}^{n\times (n-d+1)}$ denote an orthonormal matrix orthogonal to the column space of $\bm X_{-j}$ with first column given by $\bm V_1 = \frac{(\bm I - \bm P_{-j})\bm X_j}{\|(\bm I - \bm P_{-j})\bm X_j\|}$. Then, there exists a unique vector $\bm u\in \mathbb R^{n-d+1}$, such that $\|\bm u\| = 1$ and the following relation holds:
            \begin{equation}
        \label{eqn:conddist_lm}
        \bm y = \hat{\bm y}_j + \hat\sigma_j \bm V \bm u.
    \end{equation}
    Furthermore, under $H_j$,
    \begin{equation}
        \label{eqn:udist}
        \bm u\mid \bm S^{(j)} \sim \mathrm{Unif}\left(\mathbb S^{n-d}\right),
    \end{equation}
    where $\mathbb S^{n-d}$ denotes the unit sphere of dimension $n-d$.
\end{lemma}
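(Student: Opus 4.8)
The plan is to reduce the statement to a standard fact about the Gaussian distribution restricted to an affine subspace. First I would establish the decomposition \eqref{eqn:conddist_lm}. Since $\bm V$ has orthonormal columns spanning the orthogonal complement of the column space of $\bm X_{-j}$, the pair $(\bm P_{-j}, \bm V\bm V^T)$ gives a resolution of the identity, so any $\bm y\in\mathbb R^n$ can be written uniquely as $\hat{\bm y}_j + \bm V\bm w$ with $\bm w = \bm V^T\bm y\in\mathbb R^{n-d+1}$. Setting $\hat\sigma_j = \|\bm y - \hat{\bm y}_j\| = \|\bm V\bm w\| = \|\bm w\|$ (using orthonormality of $\bm V$), we get $\bm w = \hat\sigma_j\bm u$ for the unique unit vector $\bm u = \bm w/\|\bm w\|$, provided $\hat\sigma_j>0$; under \eqref{eqn:lm} with $\sigma^2>0$ this holds almost surely, and on that event $\bm u$ is uniquely determined. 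This gives the first part.

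Next I would identify which coordinates of $\bm y$ the sufficient statistic $\bm S^{(j)} = (\bm X_{-j}^T\bm y, \bm y^T\bm y)$ pins down. Writing $\bm y = \hat{\bm y}_j + \bm V\bm w$: because $\bm X_{-j}^T\bm V = \bm 0$, we have $\bm X_{-j}^T\bm y = \bm X_{-j}^T\hat{\bm y}_j$, which is a linear bijection of $\hat{\bm y}_j$ (equivalently, of $(\bm X_{-j}^T\bm X_{-j})^{-1}\bm X_{-j}^T\bm y$); hence conditioning on $\bm X_{-j}^T\bm y$ is the same as conditioning on $\hat{\bm y}_j$. Moreover $\bm y^T\bm y = \|\hat{\bm y}_j\|^2 + \|\bm w\|^2$, so given $\hat{\bm y}_j$, conditioning additionally on $\bm y^T\bm y$ is equivalent to conditioning on $\|\bm w\| = \hat\sigma_j$. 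Thus the conditional law of $\bm u$ given $\bm S^{(j)}$ is the conditional law of $\bm w/\|\bm w\|$ given $(\hat{\bm y}_j, \|\bm w\|)$.

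Now I would bring in the null hypothesis. Under $H_j$, $\bm X\bm\beta = \bm X_{-j}\bm\beta_{-j}$ lies in the column space of $\bm X_{-j}$, so $\bm V^T\bm X\bm\beta = \bm 0$, and therefore $\bm w = \bm V^T\bm y \sim \mathcal N(\bm 0, \sigma^2 \bm I_{n-d+1})$, independent of $\hat{\bm y}_j = \bm P_{-j}\bm y$ (orthogonality of the projections plus joint Gaussianity). Consequently the conditional law of $\bm w/\|\bm w\|$ given $(\hat{\bm y}_j, \|\bm w\|)$ equals its conditional law given $\|\bm w\|$ alone, which is the law of a spherically symmetric Gaussian direction conditioned on its radius: this is $\mathrm{Unif}(\mathbb S^{n-d})$ (indeed $\bm w/\|\bm w\|$ is uniform on $\mathbb S^{n-d}$ and independent of $\|\bm w\|$). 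That yields \eqref{eqn:udist}.

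The only genuinely delicate point is the measure-theoretic handling of the conditioning — verifying that ``$\bm u\perp\hat\sigma_j$ and $\bm u$ uniform'' for the unconditional spherical Gaussian legitimately transfers to the stated conditional statement, and that the null-event $\hat\sigma_j=0$ is negligible. This is routine given the rotational invariance of the isotropic Gaussian (the direction is uniform and independent of the norm), so I expect no real obstacle; the substance of the lemma is just the bookkeeping in the previous two paragraphs that translates $\bm S^{(j)}$ into $(\hat{\bm y}_j,\hat\sigma_j)$ and uses $H_j$ to kill the mean in the $\bm V$ directions. Since the statement is quoted from \cite{luo2022}, I would also simply cite that source and present the above as the short self-contained argument.
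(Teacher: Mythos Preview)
Your argument is correct. For the decomposition \eqref{eqn:conddist_lm} you do essentially what the paper does: the paper observes that $\bm V$ has full column rank, so there is a unique $\bm u$ with $\bm V\bm u = (\bm y-\hat{\bm y}_j)/\hat\sigma_j$, and then $\|\bm u\|=\|\bm V\bm u\|=1$; your version via $\bm w=\bm V^T\bm y$ and $\bm u=\bm w/\|\bm w\|$ is the same computation made explicit.

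For the distributional claim \eqref{eqn:udist}, the paper's own proof simply defers to \cite[Proposition~E.1]{luo2022} without reproducing any argument, whereas you supply the standard self-contained derivation: identify conditioning on $\bm S^{(j)}$ with conditioning on $(\hat{\bm y}_j,\hat\sigma_j)$, note that under $H_j$ the vector $\bm w=\bm V^T\bm y$ is isotropic Gaussian and independent of $\hat{\bm y}_j$, and invoke the direction--radius independence of a spherical Gaussian. So your proof is strictly more informative here; the only thing the paper's approach buys is brevity, at the cost of pushing the substance into a citation.
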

For completeness, a proof of the lemma is provided in Section \ref{sec:proof_conddist} of the Appendix.  
Next our main theoretical result, Theorem~\ref{thm:distexpression}, establishes a mapping between $\hat \beta_j^\lambda$ and just the first element of $\bm u$, $u_1$ (there is nothing special about index 1 here except that we defined $\bm V$ to have only its first column non-orthogonal to $\bm X_j$), which will give an immediate characterization of $F_{\ell}^\lambda$
via the known distribution of $u_1$ from Equation~\eqref{eqn:udist}.

\begin{theorem}[Characterization of the $\ell$-distribution]
    \label{thm:distexpression}
    Consider the unique decomposition~\eqref{eqn:conddist_lm} from Lemma~\ref{lem:conddist} and
    for any $\lambda>0$, $b\in\mathbb{R}$ and $\epsilon\in\{-1,1\}$, define the functions
    \begin{equation}
    \label{eqn:betax}
        \hat{\bm \beta}^{\lambda}_{-j}(b):= \underset{\bm{\beta}_{-j}\in \mathbb R^{d-1}}{\arg\min}\left(\frac{1}{2n}\|\bm y -b \bm X_j - \bm X_{-j}\bm {\beta}_{-j}\|^2 + \lambda \|\bm {\beta}_{-j}\|_{1}\right),
    \end{equation}
    \[
        \Lambda_j(b, \epsilon) = \frac{-\bm X_j^T\left(\hat {\bm y}_j - b\bm X_j - \bm X_{-j}\hat{\bm \beta}_{-j}(b)\right)+ n\lambda\epsilon}{\hat \sigma_j\|(\bm I-\bm P_{-j})\bm X_j\|}.
    \]
    Then the function $f_{\bm S^{(j)}}:\mathbb{R}\mapsto \mathbb R$, defined via its inverse as
            \begin{align}
            \label{eqn:lambda_characterization}
            f_{\bm S^{(j)}}^{-1}(b) = \begin{cases}
                \Lambda_j(b, \mathrm{sign}(b)), & \text{if }b\neq 0\\
                [\Lambda_j(0, -1), \Lambda_j(0, 1)], & \text{if }b = 0
            \end{cases},
        \end{align}
        satisfies $\hat \beta_j^{\lambda} = f_{\bm S^{(j)}}(u_1)$ and is continuous and non-decreasing in its domain, and strictly increasing on the set $\{u: f_{\bm S^{(j)}}(u)\neq 0\}$.
\end{theorem}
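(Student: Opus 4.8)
The plan is to fix $\bm S^{(j)}$ and realize $\hat\beta_j^\lambda$ as the $j$-th coordinate of the minimizer of a one-parameter family of strongly convex functions indexed by $u_1$, then read off all three claimed properties—the explicit (set-valued) inverse~\eqref{eqn:lambda_characterization}, continuity, and the two monotonicity statements—from elementary facts about such families together with the LASSO stationarity (KKT) conditions and Lemma~\ref{lem:conddist}.

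First I would observe that, conditional on $\bm S^{(j)}$, the LASSO objective $\frac{1}{2n}\|\bm y-\bm X\bm\beta\|^2+\lambda\|\bm\beta\|_1$ depends on $\bm y$ only through $u_1$: expanding the quadratic term, the $\bm y$-dependence enters through $\|\bm y\|^2$ and $\bm X^T\bm y$, but $\|\bm y\|^2$ and $\bm X_{-j}^T\bm y$ are components of $\bm S^{(j)}$, while $\bm X_j^T\bm y=\bm X_j^T\hat{\bm y}_j+\hat\sigma_j\|(\bm I-\bm P_{-j})\bm X_j\|\,u_1$ by the orthogonality structure of $\bm V$ ($\bm V_2,\dots,\bm V_{n-d+1}$ are orthogonal to $\bm X_j$ and $\bm X_j^T\bm V_1=\|(\bm I-\bm P_{-j})\bm X_j\|$), together with the fact that $\bm X_j^T\hat{\bm y}_j$ and $\hat\sigma_j$ are themselves determined by $\bm S^{(j)}$. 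Writing $\kappa:=\hat\sigma_j\|(\bm I-\bm P_{-j})\bm X_j\|/n$, which is strictly positive since $\bm X$ has full column rank and $\hat\sigma_j>0$ on the event where the decomposition~\eqref{eqn:conddist_lm} is non-degenerate, the LASSO solution thus equals $\arg\min_{\bm\beta}\Phi_{u_1}(\bm\beta)$, where $\Phi_t(\bm\beta):=\Phi_0(\bm\beta)-\kappa t\beta_j$ and $\Phi_0$ is strongly convex because its quadratic part has Hessian $\tfrac1n\bm X^T\bm X\succ0$. Defining $f_{\bm S^{(j)}}(t)$ to be the $j$-th coordinate of the unique minimizer of $\Phi_t$ gives $\hat\beta_j^\lambda=f_{\bm S^{(j)}}(u_1)$, and Lipschitz dependence of $t\mapsto\arg\min_{\bm\beta}\Phi_t(\bm\beta)$ on $t$ (from strong monotonicity of $\partial\Phi_0$) makes $f_{\bm S^{(j)}}$ continuous.

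Next I would establish monotonicity and the explicit inverse. Non-decreasingness of $f_{\bm S^{(j)}}$ is the standard comparative-statics inequality: for $t_1<t_2$ with respective minimizers $\bm\beta^{(1)},\bm\beta^{(2)}$, adding $\Phi_{t_1}(\bm\beta^{(1)})\le\Phi_{t_1}(\bm\beta^{(2)})$ and $\Phi_{t_2}(\bm\beta^{(2)})\le\Phi_{t_2}(\bm\beta^{(1)})$ cancels the $\Phi_0$ and $\ell_1$ terms and leaves $\kappa(t_1-t_2)(\beta_j^{(2)}-\beta_j^{(1)})\le0$, hence $\beta_j^{(1)}\le\beta_j^{(2)}$. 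For the explicit inverse I would use the KKT conditions of the LASSO: by block optimality the $-j$ coordinates of the LASSO solution equal $\hat{\bm\beta}_{-j}(\hat\beta_j^\lambda)$ of~\eqref{eqn:betax}, so writing the stationarity condition in coordinate $j$ at the joint minimizer, substituting $\bm y=\hat{\bm y}_j+\hat\sigma_j\bm V\bm u$ and $\bm X_j^T\bm V\bm u=\|(\bm I-\bm P_{-j})\bm X_j\|u_1$, and solving for $u_1$ gives $u_1=\Lambda_j(\hat\beta_j^\lambda,s)$ for some subgradient $s\in\partial|\hat\beta_j^\lambda|$. When $\hat\beta_j^\lambda\neq0$ this forces $s=\mathrm{sign}(\hat\beta_j^\lambda)$, so $u_1$ is uniquely determined by $\hat\beta_j^\lambda$; thus $f_{\bm S^{(j)}}$ is injective on $\{u:f_{\bm S^{(j)}}(u)\neq0\}$, and being also non-decreasing, strictly increasing there. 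When $\hat\beta_j^\lambda=0$ the same computation yields only $u_1\in\{\Lambda_j(0,s):s\in[-1,1]\}=[\Lambda_j(0,-1),\Lambda_j(0,1)]$ since $s\mapsto\Lambda_j(0,s)$ is affine and increasing. For the reverse inclusions—and hence that~\eqref{eqn:lambda_characterization} is exactly the inverse—I would verify that, for any $b$ and matching $s$ ($s=\mathrm{sign}(b)$ if $b\ne0$, any $s\in[-1,1]$ if $b=0$), the pair $(b,\hat{\bm\beta}_{-j}(b))$ satisfies every stationarity condition of $\Phi_{\Lambda_j(b,s)}$: the $-j$ conditions hold by definition of $\hat{\bm\beta}_{-j}(b)$ (which does not involve $t$), while plugging $t=\Lambda_j(b,s)$ into the coordinate-$j$ condition makes the required subgradient equal exactly $s$ by direct algebra; uniqueness of $\arg\min\Phi_t$ then gives $f_{\bm S^{(j)}}(\Lambda_j(b,s))=b$. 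Taking $b=0$ shows $\{f_{\bm S^{(j)}}=0\}=[\Lambda_j(0,-1),\Lambda_j(0,1)]$ and $b\ne0$ gives $f_{\bm S^{(j)}}^{-1}(b)=\{\Lambda_j(b,\mathrm{sign}(b))\}$, which is precisely~\eqref{eqn:lambda_characterization}.

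The conceptual content is light—monotonicity is the one-line comparative-statics inequality and continuity is strong convexity—so the main work is bookkeeping: reconciling the set-valued formula~\eqref{eqn:lambda_characterization} with a genuine continuous function, in particular pinning the flat region $\{f_{\bm S^{(j)}}=0\}$ down to exactly the closed interval $[\Lambda_j(0,-1),\Lambda_j(0,1)]$ by running the KKT verification in both directions, and keeping the subgradient of the absolute value straight through these manipulations. A secondary point to be careful about is the non-degeneracy assumption $\hat\sigma_j>0$ (together with $\|(\bm I-\bm P_{-j})\bm X_j\|>0$, automatic from full column rank), which is exactly what makes $\kappa>0$ and the decomposition~\eqref{eqn:conddist_lm} meaningful.
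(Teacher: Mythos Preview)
Your proposal is correct and in fact cleaner than the paper's own proof in two respects.

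For monotonicity, the paper (Section~\ref{sec:proof_distexpression_1}) rewrites the LASSO objective as a function of $a=\hat\sigma_j u_1/\|(\bm I-\bm P_{-j})\bm X_j\|$, then differentiates the first-order conditions with respect to $a$ on each interval where the active set is constant, obtaining the explicit slope $\frac{\partial}{\partial a}\hat\beta_j(a)=\|(\bm I-\bm P_{-j})\bm X_j\|^2/\|(\bm I-\bm P_{\mathcal A(a)})\bm X_j\|^2>0$. Your comparative-statics (``revealed preference'') inequality from the submodular form $\Phi_t(\bm\beta)=\Phi_0(\bm\beta)-\kappa t\beta_j$ gets non-decreasingness in one line, globally, with no need to handle the non-differentiable knots separately. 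The paper's computation does buy the explicit piecewise-linear slope, which it later reuses for the path algorithm in Appendix~\ref{sec:app_beta_path}, but that extra information is not needed for the theorem itself.

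For the inverse characterization~\eqref{eqn:lambda_characterization}, the paper packages the block-optimality step as a separate Theorem~\ref{thm:eventequivalence} and invokes Tseng's blockwise coordinate descent convergence result \citep{tseng2001} to argue that a block-stationary point is the LASSO minimizer. You instead verify the full KKT system directly at $(b,\hat{\bm\beta}_{-j}(b))$: the $-j$ block holds by definition of $\hat{\bm\beta}_{-j}(b)$ and is independent of $t$, and the $j$-th equation is solved by $t=\Lambda_j(b,s)$ by construction; strong convexity then gives uniqueness. This is self-contained and avoids the external reference.

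Your route to strict monotonicity is also different: the paper reads it off from the positive derivative, whereas you deduce it from injectivity (the preimage of each $b\ne0$ is the singleton $\{\Lambda_j(b,\mathrm{sign}(b))\}$) plus non-decreasingness. Both are valid; yours requires no smoothness bookkeeping.
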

Theorem \ref{thm:distexpression} exactly characterizes the $\ell$-distribution: 
for $b\neq 0$, we have
\begin{align}
    \label{eqn:dist_b_nonzero}
    \{\hat \beta_j^{\lambda} \leq b\} = \left\{f_{\bm S^{(j)}}^{-1}\left(\hat \beta_j^{\lambda}\right) \leq f_{\bm S^{(j)}}^{-1}\left(b\right)\right\} = \left\{u_1 \leq \Lambda_j(b, \mathrm{sign}(b))\right\}.
\end{align}
Thus, in particular, if we let $F_u$ denote the CDF of $u_1$ under $H_j$ \textcolor{black}{(because the null distribution of $u_1$ does not depend on $\bm S^{(j)}$, this is also the CDF conditional on $\bm S^{(j)}$)},
then $F_{\ell}^\lambda(b) = \mathbb{P}_{H_j}(\hat \beta_j^{\lambda} \leq b\mid \bm S^{(j)}) = F_u(\Lambda_j(b, \mathrm{sign}(b)))$ because $\Lambda_j(b,\mathrm{sign}(b))$ is a function of $\bm S^{(j)}$ for any fixed $b$. Similarly, for $b=0$, it follows from Theorem \ref{thm:distexpression} that $F_{\ell}^\lambda(0) = F_u\left(\Lambda_j(0,1)\right)$.
Note that $F_u$ is easily evaluated via a one-to-one mapping to a $t$-distribution, namely, $\frac{\sqrt{n-d}u_1}{\sqrt{1-u_1^2}}\sim t_{n-d}$ (see Appendix \ref{sec:proof_uquantiles} for a proof), which, along with the relations above, can be used to explicitly calculate quantiles of the $\ell$-distribution.
The proof of Theorem \ref{thm:distexpression} in Appendix \ref{sec:app_proof_distexpression} hinges on two main ideas---first,  we use block-wise coordinate descent to characterize the event $\{\hat \beta_j^\lambda = b\}$ in terms of $u_1$, and second, we characterize $f_{\bm S^{(j)}}$ by obtaining an exact expression for $\frac{\partial}{\partial u_1}f_{\bm S^{(j)}}$, which turns out to be non-negative throughout, thereby showing $f_{\bm S^{(j)}}$ is non-decreasing. We will see next that Theorem~\ref{thm:distexpression} also provides critical insights into the power of the $\ell$-test.

\subsection{The power of the $\ell$-test}
\label{sec:improved_performance}
Our simulations in Section \ref{sec:ltest_expt} show that  not only does the $\ell$-test consistently beat the usual two-sided $t$-test when $\bm\beta$ is sparse, it achieves power close to the \emph{one-sided} $t$-test (in the correct direction), being nearly identical in some cases, without any knowledge about the true sign of $\beta_j$. 

To explain this behavior, we first characterize the relationship between the $t$-test statistic and $u_1$ in Lemma \ref{lem:pvalequivalence}. It turns out that $u_1$ is a scalar multiple of $\bm X_j^T(\bm I - \bm P_{-j})\bm y$ (we argue this in Equations \eqref{eqn:olsequivalent} through \eqref{eqn:ols_uj} of the Appendix and is a major component in the proof of the lemma), and hence, $u_1$ is a measure of the association between $\bm X_j$ and the component of $\bm y$ that cannot be explained by the rest of the columns.

 \begin{lemma}
 \label{lem:pvalequivalence}
       Let $T_j$ denote the $t$-test statistic for testing $H_j:\beta_j = 0$. Then, there exists a continuous, strictly increasing, anti-symmetric function $g_{\bm S^{(j)}}$ that is a functional of the sufficient statistic $\bm S^{(j)}$, such that $T_j = g_{\bm S^{(j)}}(u_1)$.
\end{lemma}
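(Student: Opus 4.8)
The plan is to reduce $T_j$ to the classical identity expressing the $t$-statistic as a monotone function of the partial correlation coefficient, and then to rewrite that partial correlation in terms of $u_1$ using the decomposition of Lemma~\ref{lem:conddist}. First I would invoke the Frisch--Waugh--Lovell representation: writing $\bm P$ for the projection onto the column space of $\bm X$ and using the standard identity $[(\bm X^T\bm X)^{-1}]_{jj} = \|(\bm I - \bm P_{-j})\bm X_j\|^{-2}$, the OLS $t$-statistic for $H_j$ is
\[
T_j \;=\; \frac{\bm X_j^T(\bm I - \bm P_{-j})\bm y}{\hat\sigma\,\|(\bm I - \bm P_{-j})\bm X_j\|},
\qquad
\hat\sigma^2 \;=\; \frac{\|(\bm I - \bm P)\bm y\|^2}{n-d}.
\]
So it suffices to express the numerator $\bm X_j^T(\bm I - \bm P_{-j})\bm y$ and the full residual sum of squares $\|(\bm I - \bm P)\bm y\|^2$ in terms of $u_1$ and quantities that are functions of $\bm S^{(j)}$ (and the fixed $\bm X$).

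For the numerator I would substitute $\bm y = \hat{\bm y}_j + \hat\sigma_j\bm V\bm u$ from Lemma~\ref{lem:conddist}. Since the columns of $\bm V$ span the orthogonal complement of the column space of $\bm X_{-j}$, we get $(\bm I - \bm P_{-j})\bm y = \hat\sigma_j\bm V\bm u$, hence $\bm X_j^T(\bm I - \bm P_{-j})\bm y = \hat\sigma_j\,\bm X_j^T\bm V\bm u$. The two orthogonality facts I would check are $\bm X_j^T\bm V_1 = \|(\bm I - \bm P_{-j})\bm X_j\|$ (immediate from the definition of $\bm V_1$) and $\bm X_j^T\bm V_k = 0$ for $k\ge 2$ (because $\bm X_j = \bm P_{-j}\bm X_j + \|(\bm I - \bm P_{-j})\bm X_j\|\,\bm V_1$, and both pieces are orthogonal to $\bm V_k$). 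Thus $\bm X_j^T\bm V\bm u = \|(\bm I - \bm P_{-j})\bm X_j\|\,u_1$, so the numerator equals $\hat\sigma_j\,\|(\bm I - \bm P_{-j})\bm X_j\|\,u_1$.

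For the residual sum of squares I would use that the column space of $\bm X$ is the direct sum of the column space of $\bm X_{-j}$ and the span of $\bm V_1$, hence $\bm P = \bm P_{-j} + \bm V_1\bm V_1^T$. Then $(\bm I - \bm P)\bm y = \hat\sigma_j(\bm V\bm u - u_1\bm V_1) = \hat\sigma_j\sum_{k\ge 2}u_k\bm V_k$, which has squared norm $\hat\sigma_j^2(1 - u_1^2)$ since $\|\bm u\| = 1$; therefore $\hat\sigma = \hat\sigma_j\sqrt{(1-u_1^2)/(n-d)}$. Substituting the numerator and this value of $\hat\sigma$ into the display above, the factors $\hat\sigma_j$ and $\|(\bm I - \bm P_{-j})\bm X_j\|$ cancel and leave
\[
T_j \;=\; \frac{\sqrt{n-d}\,u_1}{\sqrt{1-u_1^2}} \;=:\; g_{\bm S^{(j)}}(u_1).
\]
This $g$ is defined on $(-1,1)$ (and $|u_1|<1$ almost surely, since $\bm X$ has full column rank so $\hat\sigma_j>0$ and the full residual is nonzero a.s.), is continuous there, is anti-symmetric because $g(-u) = -g(u)$, and is strictly increasing because $\tfrac{d}{du}\,u(1-u^2)^{-1/2} = (1-u^2)^{-3/2}>0$; finally it is trivially a functional of $\bm S^{(j)}$ (it in fact depends on the data only through $n$ and $d$).

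I do not expect a serious obstacle here: the argument is essentially the textbook relation between the $t$-statistic and the partial correlation coefficient, combined with bookkeeping against Lemma~\ref{lem:conddist}. The only points needing care are the standard-error reduction $[(\bm X^T\bm X)^{-1}]_{jj} = \|(\bm I - \bm P_{-j})\bm X_j\|^{-2}$, the identification $\bm P = \bm P_{-j} + \bm V_1\bm V_1^T$, and noting that the degenerate cases $u_1 = \pm 1$ occur with probability zero, so that $g_{\bm S^{(j)}}$ is well-defined wherever it is needed.
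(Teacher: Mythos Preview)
Your argument is correct and follows essentially the same route as the paper: both proofs use the decomposition of Lemma~\ref{lem:conddist} to write $\bm X_j^T(\bm I-\bm P_{-j})\bm y=\hat\sigma_j\|(\bm I-\bm P_{-j})\bm X_j\|\,u_1$ and then express the full residual sum of squares in terms of $u_1$. Your computation of the latter via the identity $\bm P=\bm P_{-j}+\bm V_1\bm V_1^{T}$ is more direct than the paper's, which instead decomposes $\hat{\bm e}_j$ into components parallel and perpendicular to $(\bm I-\bm P_{-j})\bm X_j$ and carries along an auxiliary constant $\kappa=\|\bm P(\bm I-\bm P_{-j})\bm X_j\|^{2}$; as a result the paper leaves $g_{\bm S^{(j)}}$ in a form that nominally depends on $\hat\sigma_j$, whereas your fully simplified $g(u)=\sqrt{n-d}\,u/\sqrt{1-u^{2}}$ makes explicit that the map actually depends only on $n-d$ (a fact the paper records separately when noting $\sqrt{n-d}\,u_1/\sqrt{1-u_1^{2}}\sim t_{n-d}$).
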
 
We prove this result in Section \ref{sec:proof_pvalequivalence} of the appendix. Lemma \ref{lem:pvalequivalence} and Theorem \ref{thm:distexpression} together show that a (one-sided) conditional-on-$\bm S^{(j)}$ test based on any of the test-statistics---$u_1,T_j$ and $\hat \beta_j^\lambda$, yield exactly the same $p$-values as long as the observed LASSO estimate is non-zero, as in this case all the three test statistics are strictly increasing in each other. We also know that $\Tstat$ is independent of $\bm S^{(j)}$, which follows from standard theory on ancillary statistics, however we supply a separate proof for this in {\color{black}Appendix~}\ref{sec:app_ttest_indep_proof}. This implies that a one-sided conditional test based on $\Tstat\mid \bm S^{(j)}$ yields exactly the one-sided $t$-test $p$-value, which by the above argument is exactly equal to the one-sided $p$-value of the conditional test based on $\hat \beta^\lambda_j \mid \bm S^{(j)}$ when the observed LASSO estimate is non-zero.

Even though the above paragraph establishes that conditional one-sided testing based on $\hat \beta^{\lambda}_j$ can do only as well as the corresponding one-sided $t$-test, we will now argue that the former test can gain considerable power over the $t$-test in the \emph{two}-sided testing regime.
As a first step, note that we can use Theorem \ref{thm:distexpression} to characterize the set $\mathcal R := \{u_1 : \hat \beta_j^{\lambda}\neq 0\}$ as a disjoint union of two intervals: $\mathcal R = (-\infty,v_{-})\cup(v_{+},\infty)$, where, $v_{\pm} = \Lambda_j(0,\pm 1)$.
\begin{figure}[h]
    \centering
    \includegraphics[height = 7cm, width = 15 cm]{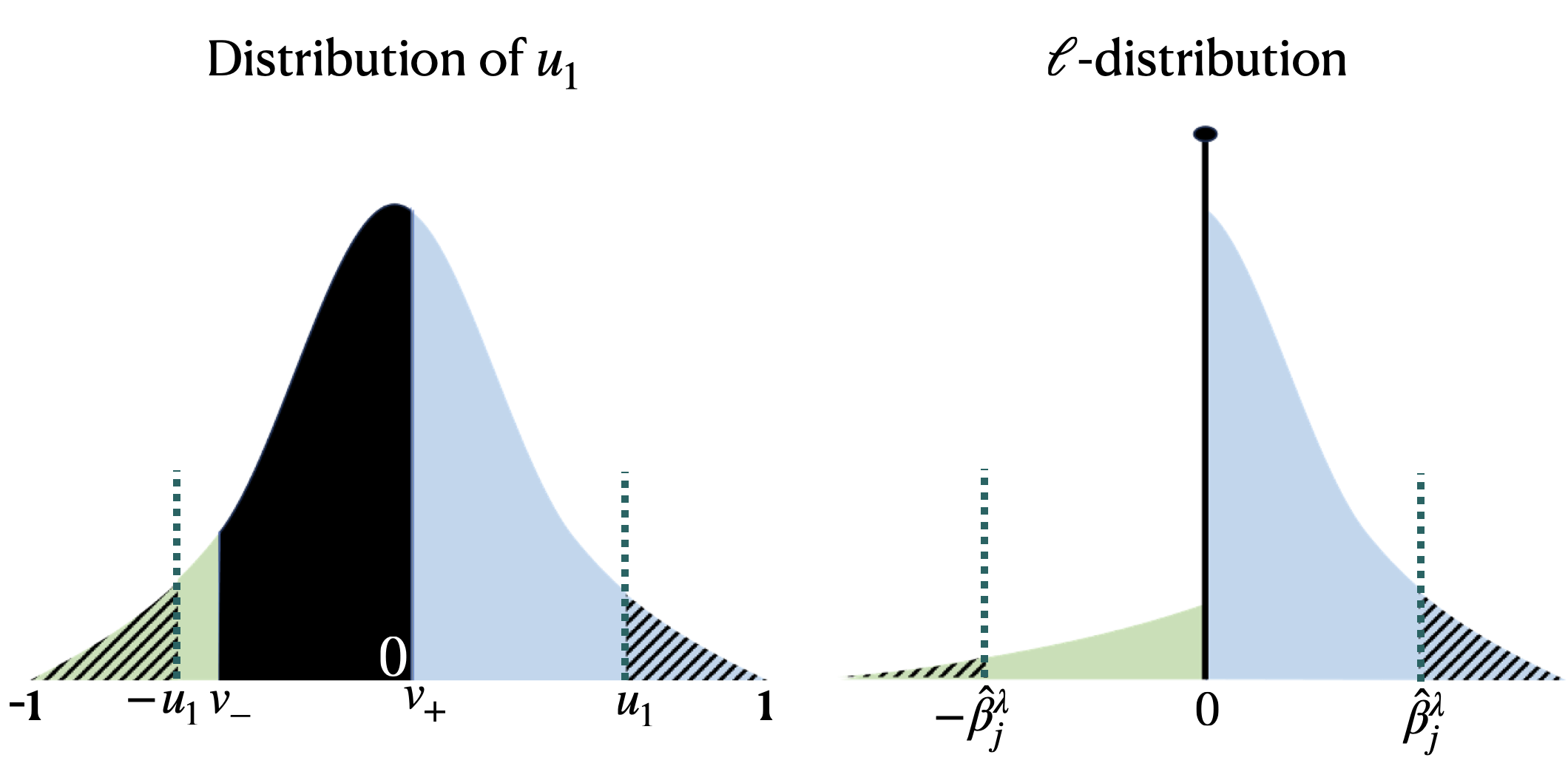}
    \caption{Connection between the distribution of $u_1$ (left) and the $\ell$-distribution (right). Regions of matching color between the two plots represent a one-to-one correspondence between the random variable values within them, as proved in Theorem~\ref{thm:distexpression}. \textcolor{black}{This figure is for demonstration and does not necessarily reproduce the exact shapes of the $u_1$- or $\ell$-distributions.}
    }
    \label{fig:t_and_hat_beta}
\end{figure}

We will argue that when $\beta_j\neq 0$, the test based on $|\hat \beta^{\lambda}_j|$ (i.e., the $\ell$-test) leverages the asymmetry of the interval $[v_-,v_+]$ about 0 to gain power. Observe that from Theorem \ref{thm:distexpression}, $\hat \beta_j^{\lambda}$ is negative when $u_1 \le v_-$, while it is positive when $u_1 \ge v_+$. Without loss of generality, we will assume that $\beta_j>0$ for the proceeding discussion, and assume the center of the interval $[v_{-},v_{+}]$ is negative (we will justify this latter assumption in a little bit). Consider Figure \ref{fig:t_and_hat_beta} for a visual representation of this, where under $H_j$, the left and the right figures show the conditional distributions of $u_1$ (i.e., $F_u$) 
and $\hat \beta_j^\lambda$ (i.e., the $\ell$-distribution $F_\ell^\lambda(\cdot\mid \bm S^{(j)})$), respectively. The correspondence between the two distributions is shown by matching colors---for example, as is evident from Theorem \ref{thm:distexpression}, the mass that the distribution of $u_1$ puts to the left of $v_{-}$ is exactly the mass the $\ell$-distribution puts on the negative half, and hence both these regions are colored green. As can be seen from the figure, the asymmetry in $[v_-,v_+]$ (and the symmetry of $F_u$) directly implies asymmetry in the $\ell$-distribution.

Since $\beta_j>0$, we expect that $\hat{\beta}_j^\lambda>0$ as well, as reflected in the right-hand plot, with corresponding positive $u_1$ also marked in the left-hand plot. Due to the symmetry of $u_1$'s distribution, the $p$-value of the two-sided test using $|u_1|$ is just twice the mass to the right of $u_1$, and by Lemma~\ref{lem:pvalequivalence}, this is also the $p$-value of the (two-sided) $t$-test. To understand the $\ell$-test $p$-value for comparison, note that by Theorem~\ref{thm:distexpression}, the mass to the right of $u_1$ in the left plot is exactly the mass to the right of $\hat{\beta}^\lambda_j$ in the right plot, but, critically, the mass to the \emph{left} of $-\hat{\beta}^\lambda_j$ in the right plot is \emph{far less} than the mass to the left of $-u_1$ in the left plot. Thus, the $\ell$-test's $p$-value, which is exactly the mass of the shaded regions in the right plot, is dominated by the right-most shaded region, whose mass is exactly the value of the \emph{one}-sided $t$-test.

Next, we argue why we expect the interval $[v_-,v_+]$ to lean opposite to the true sign of $\beta_j$, that is, towards the negative side in this case. Note that the interval $[v_-,v_+]$ has mid-point 
\begin{equation}
\label{eqn:midj}
       \hat m_j := \frac{v_-+v_+}{2} = \frac{\Lambda_j(0,-1) + \Lambda_j(0,+1)}{2} = \frac{-\bm X_j^T\left(\hat{\bm y}_j - \bm X_{-j}\hat{\bm \beta}^\lambda_{-j}(0)\right)}{\|(\bm I - \bm P_{-j})\bm X_j\|\hat \sigma_j}.
\end{equation}
We can think of $\hat{m}_j$ as an estimator of $m_j$, where $m_j$ has the exact same expression with $\hat{\bm \beta}^\lambda_{-j}(0)$ replaced by its estimand, $\bm \beta_{-j}$. Defining $\tau_j^2:= \bm X_j^T \bm P_{-j}\bm X_j\ge 0$, it can be seen that the numerator of $m_j$ satisfies
\begin{align*}
        -\bm X_j^T(\hat {\bm y}_{j}-\bm X_{-j}\bm \beta_{-j}) \sim  \mathcal{N}\left(-\beta_j \tau_j^2, \sigma^2 \tau_j^2\right).
\end{align*}
Thus under the alternative, $m_j$'s distribution is shifted towards the opposite sign of $\beta_j$ as long as $\bm X_j$ is not exactly orthogonal to $\bm X_{-j}$. And when $\bm \beta$ is sparse, we expect the lasso estimator $\hat{\bm \beta}_{-j}^\lambda(0)$ of $\bm\beta_{-j}$ to be a good one, and hence that $\hat{m}_j$'s distribution will also be shifted towards the opposite sign of $\beta_j$.
In particular, when $\beta_j>0$, this means we expect $[v_-,v_+]$ to be shifted in the negative direction, as we assumed it would be earlier in this subsection. 

In sum, when $\bm\beta$ is sparse, the LASSO leverages information in $\bm S^{(j)}$ (via $\hat{\bm \beta}_{-j}^\lambda(0)$) to guess the sign of $\beta_j$, and the point mass in the $\ell$-distribution uses that guess (via $[v_-,v_+]$) to reduce the ``wrong" tail of the $\ell$-test, resulting in a test with power approximating the one-sided $t$-test; see Appendix \ref{sec:app_improved_performance} for further discussion. Note that although the intuition for the power gain of the $\ell$-test over the $t$-test relied on sparsity, we emphasize that the \emph{validity} guarantees of the $\ell$-test remain identical to those of the $t$-test, and in particular do not require sparsity.

\subsection{Breaking ties when $\hat{\beta}_j^\lambda=0$}
\label{sec:smoothing}

As alluded to earlier in this section, the $\ell$-test $p$-value $\bar{F}_{|\ell|}^{\lambda}(|\hat \beta^{\lambda}_j|\mid \bm S^{(j)})$ is not $\mathrm{Unif}(0,1)$ under $H_j$. Instead, its conditional distribution given $\bm S^{(j)}$ is a mixture of $\mathrm{Unif}(0,\mathbb{P}_{H_j}(\hat{\beta}_j^\lambda\neq 0\mid \bm S^{(j)}))$ and a point mass at 1 of weight $\mathbb{P}_{H_j}(\hat{\beta}_j^\lambda = 0\mid \bm S^{(j)})$ because $|\hat{\beta}_j^\lambda| = 0$ is the ``least significant" value of the test statistic and occurs with positive probability.
It is preferable not to have such a point mass, since it makes the $\ell$-test somewhat conservative and because both users and many procedures which take $p$-values as inputs generally assume uniform null $p$-values. To remedy this, we need a way to break ties among data values that give $\hat{\beta}_j^\lambda=0$, since the test statistic $|\hat{\beta}_j^\lambda|$ does not distinguish between them. The strong connection between $\hat{\beta}_j^\lambda$ and $u_1$ established in Theorem~\ref{thm:distexpression} and visualized in Figure~\ref{fig:t_and_hat_beta} suggests that $u_1$, whose distribution is continuous on $[v_-,v_+]$ given the event $\{\hat{\beta}_j^\lambda=0\}$, provides a way forward. In particular, since $\hat{\beta}_j^\lambda$ approaches zero as $u_1$ approaches $v_-$ from the left or $v_+$ from the right, it is natural (and continuous in the data) to set $v_-$ and $v_+$ as tied for the ``most significant" values on the interval $[v_-,v_+]$, and then have the significance decrease as $u_1$ moves inward from those endpoints. This corresponds to breaking ties according to $|u_1-\hat{m}_j|$ when $\hat{\beta}_j^\lambda=0$, and can equivalently be thought of as using $|\hat{\beta}_j^\lambda| + \min\{|u_1-\hat{m}_j|-(v_+-\hat{m}_j),0\}$ as the test statistic. 
Recalling that $F_u$ denotes $u_1$'s CDF and defining $\bar{F}_{|u|}(u'\mid \bm S^{(j)}):=\mathbb{P}_{H_j}(|u_1-\hat{m}_j|\ge u'\mid \bm S^{(j)}) = 1-F_u(\hat{m}_j+u') + F_u(\hat{m}_j-u')$ as the tail probability of $|u_1-\hat{m}_j|$, we can express this $p$-value as
\begin{align}
    \label{eqn:ell_pval_previous}
    p^\lambda_j := \begin{cases}
        \bar{F}_{|\ell|}^{\lambda}(|\hat \beta^{\lambda}_j|\mid \bm S^{(j)}),&\textrm{ if }\hat \beta^{\lambda}_j \neq 0\\
        \bar{F}_{|u|}\left(|u_1 - \hat m_j|  \mid \bm S^{(j)}\right),&\textrm{ if }\hat\beta^\lambda_j = 0
    \end{cases},
\end{align}
which is exactly $\mathrm{Unif}(0,1)$ under $H_{j}$ and never larger than $\bar{F}_{|\ell|}^{\lambda}(|\hat \beta^{\lambda}_j|\mid \bm S^{(j)})$, the $p$-value proposed in Section \ref{sec:elldist}. 

\subsection{The choice of $\lambda$}
\label{sec:lambda_choice_singletest}
Thus far, we have treated $\lambda$ as a fixed tuning parameter, but in practice it is preferable to have an automated, data-dependent way to choose it. Standard practice for the LASSO is to choose $\lambda$ via cross-validation (on the full data $(\bm y, \bm X)$), but while this choice invalidates the theoretical guarantees of the $\ell$-test, just a slight modification of it is sufficient to retain those guarantees. Let $\tilde{\bm u}\sim\mathrm{Unif}(\mathbb{S}^{n-d})$ be drawn independently of $\bm u$, and plug it into Equation~\eqref{eqn:conddist_lm} and call the resulting left-hand side $\tilde{\bm y}$, so that $\tilde{\bm y}$ is conditionally independent of $\bm y$ given $\bm S^{(j)}$. Then it is easy to see that cross-validation on $(\tilde{\bm y}, \bm X_{-j})$ produces a $\lambda$, which we denote by $\hat{\lambda}_j$, that is exactly valid to use in the $\ell$-test, since conditioning on $\hat{\lambda}_j$ does not change the $\ell$-distribution\footnote{Cross-validation on $(\tilde{\bm y},\bm X)$ would also be valid and natural, but we prefer $(\tilde{\bm y},\bm X_{-j})$ for computational reasons; see Appendices~\ref{sec:app_min_vs_1se} and \ref{sec:app_single_test_randomness_lambda} for details on this and other ways to choose $\lambda$ that we considered, all of which we found to be empirically dominated by $\hat{\lambda}_j$}. 
And empirically, $\hat{\lambda}_j$ (our implementation uses 10-fold cross-validation) seems to be just as powerful as the (technically invalid) choice via cross-validation on $(\bm y, \bm X)$; see 
Appendix~\ref{sec:app_single_test_randomness_lambda}. Although $\hat{\lambda}_j$ is randomized through $\tilde{\bm u}$ and the random 10-fold partition of the data used by cross-validation, we find this exogenous randomness barely makes a difference: in our simulations it empirically accounts for at most about 0.8\% of the variability in the $\ell$-test $p$-value $p_j^{\hat{\lambda}_j}$; see Appendix~\ref{sec:app_single_test_randomness_lambda}. Furthermore, if desired, this randomness could be arbitrarily reduced by computing many conditionally independent $\hat{\lambda}_j$'s and using their mean or median for the $\ell$-test.

\subsection{Putting it all together: the $\ell$-test}
\label{sec:ell_test_complete}
We can now state our recommended implementation of the $\ell$-test: the $p$-value $p_j^{\hat{\lambda}_j}$ which combines the main $\ell$-test idea with the tie-breaking of Sections~\ref{sec:smoothing} and the $\lambda$ choice of Section~\ref{sec:lambda_choice_singletest}{\color{black}; see Algorithm~\ref{alg:ell_test} for formal statement}. 
\color{black}

\begin{algorithm}[h]
	\caption{\textcolor{black}{The $\ell$-test}}
	\label{alg:ell_test}
    \textcolor{black}{\textbf{Input:} Data: $(\bm y,\bm X_{n\times d})$, 
    index to test: $j$}\\
    \textcolor{black}{\textbf{Output: }The $\ell$-test $p$-value for $H_j:\beta_j=0$.}\\
    \textcolor{black}{ Compute $\tilde{\bm y} = \bm P_{-j}\bm y + \hat\sigma_j\bm V\tilde{\bm u}$, where $\tilde{\bm u} \sim \mathrm{Unif}\left(\mathbb S^{n-d}\right)$ independent of $(\bm y, \bm X)$.}\\
     \textcolor{black}{Choose $\hat \lambda_j$ via 10-fold cross-validation of the LASSO on $(\tilde{\bm y},\bm X_{-j})$.}\\
    \textcolor{black}{ \textbf{Return:} $p_j^{\hat \lambda_j}$ computed by using $\lambda=\hat\lambda_j$ in Equation~\eqref{eqn:ell_pval_previous}.}\\
\end{algorithm}\color{black}

Computing $p_j^{\hat{\lambda}_j}$ requires, aside from a cross-validated LASSO to compute $\hat{\lambda}_j$, one (non-cross-validated) LASSO to compute $\hat{\bm \beta}^{\hat{\lambda}_j}$ and, if $\hat{\beta}_j^{\hat{\lambda}_j}\neq 0$, then a second (non-cross-validated) LASSO to compute $\hat{\bm \beta}_{-j}^{\hat{\lambda}_j}(-\hat{\beta}_j^{\hat{\lambda}_j})$. When $\hat{\beta}_j^{\hat{\lambda}_j}\neq 0$, these two LASSO's allow us to compute the two tails for the $\ell$-test $p$-value via Theorem~\ref{thm:distexpression}, since $\hat{\bm \beta}_{-j}^{\hat{\lambda}_j}(\hat{\beta}_j^{\hat{\lambda}_j}) = \hat{\bm \beta}_{-j}^{\hat{\lambda}_j}$. 
And when $\hat{\beta}_j^{\hat{\lambda}_j} = 0$, by Equation~\eqref{eqn:ell_pval_previous}, the only LASSO quantity needed is $\hat{\bm \beta}_{-j}^{\hat{\lambda}_j}(0)$ to compute $\hat{m}_j$, but since $\hat{\bm \beta}_{-j}^{\hat{\lambda}_j}(0) = \hat{\bm \beta}_{-j}^{\hat{\lambda}_j}$ in this case, no additional LASSO run is needed beyond the first. Thus computation for the $\ell$-test requires just a very small constant number of LASSO runs.

It is an immediate consequence of our construction that $p_j^{\hat\lambda_j}$ is valid and non-conservative under no further assumptions than the (homoskedastic Gaussian) linear model, which we formally state here as a corollary.
\begin{corollary}[Validity of the $\ell$-test] For model~\eqref{eqn:lm},  for all $\alpha\in [0,1]$,
$\mathbb P_{H_j}(p_j^{\hat \lambda_j}\leq \alpha) = \alpha$.
\end{corollary}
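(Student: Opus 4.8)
The plan is to show that the p-value $p_j^{\hat\lambda_j}$ is exactly uniform under $H_j$, working conditionally on the sufficient statistic $\bm S^{(j)}$ and then integrating out. The key structural fact is that, conditional on $\bm S^{(j)}$ and under $H_j$, everything is driven by $u_1$, whose conditional distribution is known (continuous, via the $t_{n-d}$ transformation of Lemma~\ref{lem:conddist} and the remark following Theorem~\ref{thm:distexpression}), and by the exogenous randomization used to pick $\hat\lambda_j$, which is independent of $\bm u$ given $\bm S^{(j)}$.

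First I would fix a realization of $\bm S^{(j)}$ and of the randomization (i.e.\ of $\tilde{\bm u}$ and the cross-validation fold partition), which together determine $\hat\lambda_j$; call it $\lambda$. Conditioning further on $\hat\lambda_j=\lambda$ does not change the conditional law of $u_1$ given $\bm S^{(j)}$, since $\hat\lambda_j$ is a function of $(\tilde{\bm u},\bm X_{-j})$ and the fold partition, all independent of $\bm u$ given $\bm S^{(j)}$; this is the observation already made in Section~\ref{sec:lambda_choice_singletest}. So it suffices to show that, for each fixed $\lambda$ and each fixed $\bm S^{(j)}$, the p-value in Equation~\eqref{eqn:ell_pval_previous} is $\mathrm{Unif}(0,1)$ as a function of $u_1\sim F_u$. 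Next I would invoke Theorem~\ref{thm:distexpression}: $\hat\beta_j^\lambda = f_{\bm S^{(j)}}(u_1)$ with $f_{\bm S^{(j)}}$ continuous and nondecreasing, equal to $0$ exactly on $[v_-,v_+]$ and strictly increasing off that interval. Thus the event $\{\hat\beta_j^\lambda\neq 0\}$ is $\{u_1\notin[v_-,v_+]\}$, and on this event the $\ell$-test statistic is a strictly monotone relabeling of $|u_1-\hat m_j|$ truncated appropriately; on $\{\hat\beta_j^\lambda=0\}$ the tie-breaker uses $|u_1-\hat m_j|$ directly. The upshot, which I would state as the crux of the argument, is that the combined test statistic $|\hat\beta_j^\lambda| + \min\{|u_1-\hat m_j|-(v_+-\hat m_j),0\}$ is, for fixed $\bm S^{(j)}$ and $\lambda$, a continuous strictly increasing function of $|u_1-\hat m_j|$ on all of $\mathbb{R}$ (using that $v_+-\hat m_j = \hat m_j - v_-$ by definition of $\hat m_j$ as the midpoint, so the truncation is symmetric). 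Hence the p-value $p_j^\lambda$ equals $\bar F_{|u|}(|u_1-\hat m_j|\mid \bm S^{(j)})$, the upper-tail probability of $|u_1-\hat m_j|$ evaluated at its own realized value.

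It then remains to check that $\bar F_{|u|}(|u_1-\hat m_j|\mid\bm S^{(j)})$ is $\mathrm{Unif}(0,1)$: since $u_1\mid\bm S^{(j)}$ has a continuous distribution $F_u$ (no atoms), $|u_1-\hat m_j|$ has a continuous distribution as well, so its survival function evaluated at itself is exactly uniform on $[0,1]$ — the standard probability-integral-transform fact, with no conservativeness because there are no ties. This gives $\mathbb P_{H_j}(p_j^\lambda\le\alpha\mid\bm S^{(j)},\hat\lambda_j=\lambda)=\alpha$ for every $\alpha\in[0,1]$, and averaging over $\bm S^{(j)}$ and $\hat\lambda_j$ yields the corollary. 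Two bookkeeping points need care: (i) verifying that the piecewise definition in~\eqref{eqn:ell_pval_previous} really does glue into the single monotone statistic above — in particular that at $u_1=v_\pm$ both branches agree, which follows from continuity of $f_{\bm S^{(j)}}$ at the endpoints (so $\bar F_{|\ell|}^\lambda(0^+)$ matches $\bar F_{|u|}$ evaluated at $v_+-\hat m_j$); and (ii) confirming $\hat m_j$ is genuinely the midpoint of $[v_-,v_+]$, which is exactly Equation~\eqref{eqn:midj}.

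I expect the main obstacle to be item (i): carefully confirming that the monotone reparametrization is valid \emph{across} the boundary between the $\hat\beta_j^\lambda=0$ and $\hat\beta_j^\lambda\neq 0$ regimes, i.e.\ that the two pieces of~\eqref{eqn:ell_pval_previous} agree in the limit and jointly define a continuous strictly increasing function of $|u_1-\hat m_j|$ with no flat spots or jumps. Off the boundary this is immediate from Theorem~\ref{thm:distexpression} (strict monotonicity of $f_{\bm S^{(j)}}$ and of the $t$-distribution CDF), but the seam requires matching $\lim_{b\to 0^\pm}\bar F_{|\ell|}^\lambda(|b|\mid\bm S^{(j)})$ with $\bar F_{|u|}(v_+-\hat m_j\mid\bm S^{(j)})$; once that is in hand the rest is the routine probability-integral transform.
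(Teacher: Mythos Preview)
Your overall architecture is right: condition on $\bm S^{(j)}$ and on the exogenous randomization determining $\hat\lambda_j$, reduce to showing uniformity of $p_j^\lambda$ as a function of $u_1\sim F_u$ for fixed $\lambda$, then integrate out. The gap is in your ``crux'' step. You assert that the combined statistic $W:=|\hat\beta_j^\lambda| + \min\{|u_1-\hat m_j|-(v_+-\hat m_j),0\}$ is a strictly increasing function of $|u_1-\hat m_j|$, and hence that $p_j^\lambda = \bar F_{|u|}(|u_1-\hat m_j|\mid\bm S^{(j)})$ everywhere. This is false in general. On $\{u_1\notin[v_-,v_+]\}$ the statistic is $|\hat\beta_j^\lambda|=|f_{\bm S^{(j)}}(u_1)|$, and for this to be a function of $|u_1-\hat m_j|$ alone you would need $f_{\bm S^{(j)}}(\hat m_j+c)=-f_{\bm S^{(j)}}(\hat m_j-c)$ for all $c>v_+-\hat m_j$, i.e.\ antisymmetry of $f_{\bm S^{(j)}}$ about $\hat m_j$. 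Theorem~\ref{thm:distexpression} gives no such symmetry, and it does not hold: the LASSO map $u_1\mapsto\hat\beta_j^\lambda$ is piecewise linear with slopes depending on which variables are active, and there is no reason the positive and negative branches mirror each other about $\hat m_j$. So on $\{\hat\beta_j^\lambda\neq 0\}$, $\bar F_{|\ell|}^\lambda(|\hat\beta_j^\lambda|)$ and $\bar F_{|u|}(|u_1-\hat m_j|)$ are genuinely different quantities.

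The paper's argument (implicit in Section~\ref{sec:smoothing}) avoids this by treating the two branches separately rather than trying to collapse them into a single monotone transform of $|u_1-\hat m_j|$. On $\{\hat\beta_j^\lambda\neq 0\}$ (probability $r_j^\lambda$), $|\hat\beta_j^\lambda|$ has a continuous conditional distribution and $\bar F_{|\ell|}^\lambda(|\hat\beta_j^\lambda|)$ is its unconditional survival function, so it is $\mathrm{Unif}(0,r_j^\lambda)$ on this event. On $\{\hat\beta_j^\lambda=0\}=\{u_1\in[v_-,v_+]\}=\{|u_1-\hat m_j|\le v_+-\hat m_j\}$ (probability $1-r_j^\lambda$), the second branch is $\bar F_{|u|}(|u_1-\hat m_j|)$, which unconditionally is $\mathrm{Unif}(0,1)$ and here is restricted to $[\bar F_{|u|}(v_+-\hat m_j),1]=[r_j^\lambda,1]$, so it is $\mathrm{Unif}(r_j^\lambda,1)$ on this event. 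Mixing gives $\mathrm{Unif}(0,1)$. Your seam check (i) is then automatic, since both branches meet at $r_j^\lambda$. An equivalent repair of your own line is to drop the claim about $|u_1-\hat m_j|$ and instead note that $W$, as a function of $u_1$, is continuous and has finite level sets (hence atomless under $F_u$), and that $p_j^\lambda=\mathbb P_{H_j}(W(U_1)\ge W(u_1)\mid\bm S^{(j)})$; the probability integral transform then applies directly to $W$.
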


\color{black}

\subsection{Asymptotic extension beyond the Gaussian linear model}
\label{sec:asymp}

Due to the close relationship between the Gaussian linear model and the multivariate Gaussian distribution, it is intuitively easy to extend the $\ell$-test to a large class of models that admit asymptotically Gaussian estimators. In particular, consider a sequence of models and for each model, a parameter vector $\bm \theta_n$ in a local asymptotic regime, i.e., $\bm \theta_n = \bm\theta/\sqrt{n}$ for some fixed vector $\bm\theta$. Then for any estimator $\hat{\bm\theta}_n$ such that $\sqrt{n}(\hat{\bm \theta}_n - \bm \theta_n)\dto \mathcal N(\bm 0, \bm \Sigma)$ for some fixed positive definite covariate matrix $\bm\Sigma$ with its own consistent estimator $\hat{\bm \Sigma}_n$, we can compute response vector $\bm y = \sqrt{n}\hat{\bm \Sigma}_n^{-1/2}\hat{\bm \theta}_n$ and design matrix $\bm X = \hat{\bm \Sigma}_n^{-1/2}$ and apply the $\ell$-test to $(\bm y,\bm X)$, which asymptotically satisfies a Gaussian linear model with coefficient vector $\bm\theta$ and error variance $\sigma^2 = 1$ (see \citet[Theorem 4]{li2021whiteout} for an asymptotic result like this for knockoffs). In fact, since $\sigma^2$ is known in this asymptotic regime, we can apply a version of the $\ell$-test that leverages the known $\sigma$ via some simple substitutions; for brevity of presentation, we defer these details to Appendix~\ref{sec:known_sigma_ell}. In the remainder of this section we will formalize this asymptotic application of the known-$\sigma$-$\ell$-test, but first pause to emphasize that this vastly expands the applicability of the $\ell$-test framework, e.g., to M-estimators (including the maximum likelihood estimator) in models that are differentiable in quadratic mean 
\citep
{vandervaart1998asymptotic}.
In particular, the ordinary least squares estimator paired with the robust sandwich variance estimator \citep{white1980} allows the $\ell$-test to be applied asymptotically to a \emph{heteroskedastic} linear model.



In addition to the aforementioned local asymptotic regime, we will need continuity and uniqueness of the cross-validation procedure in the $\ell$-test. These properties hold under very mild regularity conditions, but we defer their cumbersome statement to the appendix as Assumption~\ref{assm:cv}.

\begin{theorem}[The $\ell$-test for asymptotically Gaussian models]
    \label{thm:asymp}
    For some $\bm \theta$ and $\bm \Sigma$, let $\bm \theta_n = \bm \theta/\sqrt{n}$ and suppose that the sequence of random variables $\left\{\left(\hat{\bm \theta}_n, \hat {\bm \Sigma}_n\right)\right\}_n$ satisfies
    \[
        \sqrt{n}(\hat{\bm \theta}_n - \bm \theta_n)\dto \mathcal N(\bm 0, \bm \Sigma)\qquad \textrm{ and }\qquad \hat{\bm \Sigma}_n \Pto \bm \Sigma.
    \]
     Denote the known-$\sigma$-$\ell$-test $p$-value for testing $H_j$, applied to data ${\left(\bm y=\sqrt{n}\hat{\bm \Sigma}_n^{-1/2}\hat{\bm \theta}_n,\, \bm X = \hat{\bm \Sigma}_n^{-1/2}\right)}$, by $p_{j, n}$.
     Then under Assumption~\ref{assm:cv}, $p_{j, n}$ is asymptotically valid for testing $\theta_j=0$, i.e., if $\theta_j=0$, then
    \[
        \underset{n\rightarrow \infty}{\lim\sup}~\mathbb P\left(p_{j, n}\leq \alpha\right) = \alpha,\; \forall \alpha \in [0,1].
    \]
\end{theorem}
    The proof of this theorem, available in Appendix~\ref{sec:proof_asymp}, relies on showing that the known-$\sigma$-$\ell$-test $p$-value is a continuous function of the data $(\bm y, \bm X)$, followed by an application of the continuous mapping theorem. As an intermediate result, the known-$\sigma$-$\ell$-test $p$-value using a fixed $\lambda$ is also asymptotically valid for the above class of models (without the need for Assumption~\ref{assm:cv}), which we formally state in Corollary~\ref{cor:validity_known_sigma_ell_test} of the appendix.
    
    
    
    The understanding of the $\ell$-test's power gain developed in Section~\ref{sec:improved_performance} suggests that the asymptotic $\ell$-test described in Theorem~\ref{thm:asymp} will outperform the standard asymptotic $z$-test when
    (1) $\bm \theta$ is sparse and (2) $\bm\Sigma$ has sufficient off-diagonal entries (the latter ensures the columns of $\bm X = \hat{\bm \Sigma}_n^{-1/2} \approx \bm \Sigma^{-1/2}$ are not too close to orthogonal).

\color{black}

\color{black}
\section{Power analysis}
\label{sec:power_analysis}
As Section \ref{sec:improved_performance} argues, the power gain of the $\ell$-test under sparsity comes from the informative re-centering by $\hat{m}_j$ that guesses the sign of $\beta_j$. In this section we study this power gain theoretically by considering the test that simply re-centers $u_1$ by $\hat m_j$, i.e., it rejects for large values of $|u_1 - \hat m_j|$ conditional on $\bm S^{(j)}$. This test gains power over the $t$-test via the same re-centering argument as the $\ell$-test does, but the latter is less tractable to analyze theoretically.
To further ease our analysis, we treat $\sigma$ as known, allowing us to instead work with a known-$\sigma$-analogue of the statistic $|u_1 - \hat m_j|$, details of which are discussed in Section~\ref{sec:known_sigma_ell} of the appendix.  Finally, we treat $\lambda$ as fixed instead of chosen by cross-validation as in Section \ref{sec:lambda_choice_singletest}, and denote the $p$-value for this re-centered test when the sample size is $n$ as $p_{j,n}^{\mathrm{r},\lambda}$. We find in Appendix~\ref{sec:power_curves} that this test (with $\lambda$ fixed at the mean of its cross-validated value computed from independent simulations) empirically performs indistinguishably from the (known-$\sigma$-analogue of the) $\ell$-test, suggesting that despite the simplifications we make in this section for mathematical tractability, both qualitative and quantitative conclusions from our power analysis should apply to the $\ell$-test as well.

The primary technical toolbox we rely on is Approximate Message Passing (AMP) \citep{bayati2012}, which allows us to precisely characterize the behavior of the LASSO estimate inside the known-$\sigma$-analogue of $\hat m_j$ under a proportional asymptotic limit. In particular, we assume the following random design and nuisance parameter regime.


\begin{setting}\label{setting:amp} $d/n\to \kappa\in (0,1)$. The entries of $\bm X$ are drawn i.i.d. from $\mathcal{N}(0,1/n)$, the entries of $\bm \beta_{-j}$ are drawn i.i.d. from a distribution $B_0$ with finite second moment, and $\beta_j=h$ for some fixed scalar $h$ which we take, without loss of generality, to be non-negative.
\end{setting}

This type of setting is standard in AMP theory, and will allow us to characterize the asymptotic $p$-values of the tests we are considering in terms of a particular bivariate Gaussian distribution: let
\begin{align}
    \label{eqn:ab_dist}
        \begin{pmatrix}
            F\\
            G
        \end{pmatrix}\sim \mathcal N \left(\begin{pmatrix}
            \frac{h\sqrt{1-\kappa}}{\sigma}\\
            \frac{h\lambda}{ \alpha_\lambda \tau_\lambda\sigma\sqrt{1-\kappa}}
        \end{pmatrix}, 
        \begin{bmatrix}
            1 & 1\\
           1 & \frac{\lambda^2}{\alpha_\lambda^2\sigma^2(1-\kappa)}
        \end{bmatrix}\right),
        \end{align}
        where $\alpha_\lambda,\tau_\lambda$ satisfy the two AMP state-evolution equations \citep{bayati2012} (using notation from \citet{crt_power} as we use their results heavily in our proofs):
        \begin{align*}
    \lambda = \alpha_\lambda \tau_\lambda\left(1-\kappa\mathbb E\left[\eta'(B_0+\tau_\lambda W;\alpha_\lambda \tau_\lambda)\right]\right);\hspace{0.7cm}\tau_\lambda^2 = \sigma^2 + \kappa\mathbb E\left[\left(\eta(B_0+\tau_\lambda W;\alpha_\lambda \tau_\lambda) - B_0\right)^2\right],
        \end{align*}
        where $\eta(x;\omega):=\mathrm{sign}(x)(|x|-\omega)_+$ is the soft-thresholding function and $\eta'$ denotes its derivative with respect to its first argument. Note that although the AMP state-evolution equations are nonlinear and do not admit a closed form solution for $(\alpha_\lambda,\tau_\lambda)$, they do uniquely determine them as a function of only $\sigma^2$, $\lambda$, $\kappa$, and $B_0$. Furthermore, $\tau_\lambda$ can be interpreted as a measure of the asymptotic error of the LASSO because
        the asymptotic mean squared estimation error of the LASSO is $\kappa^{-1}(\tau^2_{\lambda} - \sigma^2)$ \citep[Corollary 1.6]{bayati2012}.
        
        First, $F$ represents the asymptotic distribution of the $z$-test statistic, so we can easily relate 
        the one- and two-sided $z$-test $p$-values (the known-$\sigma$ analogues of the $t$-tests), denoted $p_{j,n}^{z,1}$ and $p_{j,n}^{z,2}$, respectively, to $F$ in Setting~\ref{setting:amp} (see Appendix~\ref{sec:proof_thm_power} for derivation):
        \[ p_{j,n}^{z,1} \dto 1-\Phi\left(F \right) \quad\text{and}\quad p_{j,n}^{z,2} \dto 1-\left[\Phi\left(F \right) - \Phi\left(-F \right)\right]\mathrm{sign}(F), \]
        where the precise form of these expressions is chosen for ease of comparison with that for $p_{j,n}^{\mathrm{r},\lambda}$ in the following Theorem, whose proof appears in Appendix~\ref{sec:proof_thm_power}.
        
\begin{theorem}\label{thm:power}
    Under Setting~\ref{setting:amp}, 
    $\;\;p_{j,n}^{\mathrm{r},\lambda}\dto 1-\left[\Phi\left(F\right) - \Phi\left(F-2G\right)\right]\mathrm{sign}(G).$
\end{theorem}

Comparing asymptotic expressions for $p_{j,n}^{z,1}$, $p_{j,n}^{z,2}$, and $p_{j,n}^{\mathrm{r},\lambda}$, we see that if $G= F$, then the re-centered test has the same asymptotic power as the two-sided $z$-test, while as $G$'s distribution lies further and further to the right of 0 and of $F$'s distribution, the re-centered test's asymptotic power approaches that of the one-sided $z$-test. Although $G$'s distribution depends on many quantities, its mean is inversely proportional to $\tau_\lambda$,
suggesting that strong LASSO performance moves $G$'s distribution to the right, matching the intuition from Section~\ref{sec:improved_performance} that the power benefits of our re-centering increase with the performance of the LASSO. In Appendix~\ref{sec:app_power_analysis} we study further how $G$'s distribution depends on the data-generating distribution and how this translates to power for the re-centered and $\ell$-tests. Key takeaways include that the match between the asymptotic theoretical power from Theorem~\ref{thm:power} and empirical finite-sample power is quite strong, and that the asymptotic benefit of re-centering in Setting~\ref{setting:amp} is quite resilient to coefficient density, providing considerable gains across signal sizes even when $B_0$ is a Gaussian distribution (Figure~\ref{fig:power_normal}) and showing no apparent power loss to the standard 2-sided test even in the ``densest" setting when all entries of $\bm \beta$ take the same non-zero value (Figure~\ref{fig:power_ber}). Note we can think of Gaussian coefficients as somewhat/approximately sparse, in the sense that some coefficients are closer to zero than others, even if none are exactly zero; this seems to be sufficient for the $\ell$-test to gain power via re-centering. Section~\ref{sec:experiments} contains further empirical demonstrations and discussion of the power benefits of re-centering.

\begin{remark}
    Although the mathematically simpler direct re-centering approach we consider in this section leads to a viable alternative to the $\ell$-test, the focus of this paper remains on the $\ell$-test as presented in Section~\ref{sec:elltest} because it (a) is more interpretable for most (especially non-statistician) users as it is based on a natural estimator of $\beta_j$ under sparsity, (b) has similar run-time as the re-centering test (both being dominated by the cross-validated LASSO), and (c) has an immediate extension to post-selection inference (Section~\ref{sec:postselection}). 
\end{remark}
\color{black}

\section{$\ell$-test confidence intervals}
\label{sec:ciforbeta}

If we can use the $\ell$-test to test $H_{j}(\gamma):\beta_j = \gamma$ for any $\gamma\in\mathbb{R}$, then this family of tests can be inverted to obtain a valid confidence region for $\beta_j$.
But extending the $\ell$-test to $H_j(\gamma)$ is straightforward, since $\bm y$ satisfying $H_j(\gamma)$ is equivalent to $\bm y-\gamma \bm X_j$ satisfying $H_j(0)=H_j$, so we can simply apply the regular $\ell$-test (exactly as detailed in Section~\ref{sec:elltest}) to the data $(\bm y - \gamma \bm X_j, \bm X)$.
Defining $p_j^{\hat{\lambda}_j(\gamma)}(\gamma)$ as the $\ell$-test $p$-value for $H_j(\gamma)$, the $100(1-\alpha)\%$ $\ell$-test confidence region is given by $\{\gamma\in\mathbb{R}: p_j^{\hat{\lambda}_j(\gamma)}(\gamma) > \alpha\}$, and for interpretability purposes, we take its convex hull (i.e., the smallest interval containing it) as our $\ell$-test confidence interval:
\[ \hat{C}_j := \mathrm{conv}(\{\gamma\in\mathbb{R}: p_j^{\hat{\lambda}_j(\gamma)}(\gamma) > \alpha\}).\]
The validity of $\hat{C}_j$ follows directly from that of the $\ell$-test $p$-values $p_j^{\hat{\lambda}_j(\gamma)}(\gamma)$ and the fact that taking the convex hull can only make a set bigger and hence only increase its coverage. We recommend using the same $\tilde{\bm u}$ and cross-validation partition for all $\gamma$ when constructing $\hat{C}_j$, so that the slight randomness in the $\ell$-test $p$-values is consistent across $\gamma$. 

Computationally, one may be concerned that computing $\hat{C}_j$ requires many LASSO runs for a fine grid of $\gamma$ values. 
It is known \citep{lars,piecewiselinear} that the LASSO solution
is piecewise linear in $\lambda$ and that these paths can be generated by efficient algorithms, but in Appendix~\ref{sec:app_beta_path}, we show that the LASSO solution ($\hat {\bm \beta}^\lambda_{-j}(\gamma)$) is also piecewise linear in $\gamma$ (for fixed $\lambda$) and we provide an algorithm to efficiently generate these paths as well.
Combining these two path-generating algorithms (in $\lambda$ and $\gamma$) provides an efficient way to share computation to efficiently compute all the LASSO solutions needed for $\hat{C}_j$.

\section{$\ell$-test inference conditional on LASSO selection}
\label{sec:postselection}

The $\ell$-test $p$-value's distributional form \eqref{eqn:ell_pval_previous} makes it extremely straightforward (both conceptually and computationally) to adjust it to be conditionally valid given $\hat{\beta}_j^\lambda\neq 0$: simply divide the $\ell$-test $p$-value $p_j^\lambda$ by $r_j^\lambda:=\mathbb{P}_{H_j}(\hat{\beta}_j^\lambda\neq 0\mid \bm S^{(j)}) = \mathbb{P}_{H_j}(u_1\notin [v_-,v_+] \mid \bm S^{(j)}) = 1 - F_u(v_+)+F_u(v_-)$ to get $\underline{p}_j^\lambda:= p_j^\lambda/r_j^\lambda$. The fact that, under $H_j$, $\underline{p}_j^\lambda$ has a $\mathrm{Unif}(0,1)$ distribution conditional on $\bm S^{(j)}$ and $\hat{\beta}_j^\lambda\neq 0$ follows because $r_j^\lambda$ is exactly the supremum value $p_j^\lambda$ can take as long as $\hat{\beta}_j^\lambda \neq 0$, and the density of $p_j^\lambda$ between 0 and $r_j^\lambda$ is uniformly distributed (see Section~\ref{sec:smoothing}); it follows further that $\underline{p}_j^\lambda$'s null distribution conditional only on $\hat{\beta}_j^{\lambda}\neq 0$ is also $\mathrm{Unif}(0,1)$. 
\begin{corollary}[Validity of the conditional $\ell$-test] For model~\eqref{eqn:lm},  for all $\alpha\in [0,1]$,
$\mathbb P_{H_j}(\underline{p}_j^{\lambda}\leq \alpha\mid \hat{\beta}_j^{\lambda}\neq 0) = \alpha$.
\end{corollary}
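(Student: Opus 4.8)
The plan is to show that $\underline{p}_j^\lambda = p_j^\lambda / r_j^\lambda$ is $\mathrm{Unif}(0,1)$ conditional on the event $\{\hat\beta_j^\lambda \neq 0\}$, and then observe that conditioning further (or less) on $\bm S^{(j)}$ is handled by a standard tower-property argument. I would first establish the claim conditional on both $\bm S^{(j)}$ and $\{\hat\beta_j^\lambda\neq 0\}$, and then marginalize over $\bm S^{(j)}$ (restricted to that event) to get the corollary as stated. The key input is the explicit distributional form of $p_j^\lambda$ given in Section~\ref{sec:smoothing}: conditional on $\bm S^{(j)}$, $p_j^\lambda$ is a mixture of $\mathrm{Unif}(0, r_j^\lambda)$ (with $r_j^\lambda = \mathbb{P}_{H_j}(\hat\beta_j^\lambda\neq 0\mid \bm S^{(j)})$) and a point mass at $1$ of weight $1-r_j^\lambda$, together with the fact that the $\mathrm{Unif}(0,r_j^\lambda)$ component corresponds exactly to the event $\{\hat\beta_j^\lambda\neq 0\}$ while the point mass at $1$ corresponds to $\{\hat\beta_j^\lambda = 0\}$. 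This correspondence is what makes the whole argument work and is itself a consequence of Theorem~\ref{thm:distexpression}: the interval $[v_-,v_+]$ of $u_1$-values maps to $\hat\beta_j^\lambda = 0$, and the tie-breaking p-value $\bar F_{|u|}(|u_1-\hat m_j|\mid \bm S^{(j)})$ on that interval ranges over $(0,1]$.

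First I would write, for any $\alpha\in[0,1]$,
\[
\mathbb{P}_{H_j}\!\left(\underline{p}_j^\lambda \le \alpha \,\big|\, \bm S^{(j)},\, \hat\beta_j^\lambda\neq 0\right)
= \frac{\mathbb{P}_{H_j}\!\left(p_j^\lambda \le \alpha\, r_j^\lambda,\ \hat\beta_j^\lambda\neq 0 \,\big|\, \bm S^{(j)}\right)}{\mathbb{P}_{H_j}\!\left(\hat\beta_j^\lambda\neq 0\,\big|\,\bm S^{(j)}\right)}.
\]
The denominator is $r_j^\lambda$ by definition. For the numerator, since $\alpha r_j^\lambda \le r_j^\lambda$ and (from Section~\ref{sec:smoothing}) the event $\{p_j^\lambda \le \alpha r_j^\lambda\}$ with $\alpha r_j^\lambda < 1$ forces $\hat\beta_j^\lambda\neq 0$ — because on $\{\hat\beta_j^\lambda = 0\}$ the p-value $p_j^\lambda$ equals $1$ only when $|u_1-\hat m_j|$ is at its minimum, and more carefully, $p_j^\lambda$ restricted to $\{\hat\beta_j^\lambda=0\}$ takes values in $(0,1]$ but $\{p_j^\lambda \le \alpha r_j^\lambda\}\cap\{\hat\beta_j^\lambda = 0\}$ contributes the same way as it does to the unconditional uniformity, so I must be a little careful here — the cleanest route is to use that $p_j^\lambda \mid \bm S^{(j)}$ has the mixture form, so $\mathbb{P}_{H_j}(p_j^\lambda \le t \mid \bm S^{(j)}) = t$ for $t\in[0, r_j^\lambda]$ (the point mass sits at $1 > r_j^\lambda$), and that the sub-event where this probability is accumulated is exactly $\{\hat\beta_j^\lambda\neq 0\}$. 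Hence $\mathbb{P}_{H_j}(p_j^\lambda \le \alpha r_j^\lambda,\ \hat\beta_j^\lambda\neq 0\mid \bm S^{(j)}) = \mathbb{P}_{H_j}(p_j^\lambda \le \alpha r_j^\lambda \mid \bm S^{(j)}) = \alpha r_j^\lambda$, so the ratio is $\alpha$.

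This establishes $\mathbb{P}_{H_j}(\underline{p}_j^\lambda \le \alpha \mid \bm S^{(j)},\, \hat\beta_j^\lambda\neq 0) = \alpha$ for every value of $\bm S^{(j)}$ (on the event of positive conditional probability). To get the corollary as stated, I would then integrate over the conditional law of $\bm S^{(j)}$ given $\{\hat\beta_j^\lambda\neq 0\}$: since the conditional probability equals the constant $\alpha$ regardless of $\bm S^{(j)}$, the tower property gives $\mathbb{P}_{H_j}(\underline{p}_j^\lambda \le \alpha \mid \hat\beta_j^\lambda\neq 0) = \alpha$. I would also note in passing that $r_j^\lambda = 1 - F_u(v_+) + F_u(v_-)$ is a well-defined positive function of $\bm S^{(j)}$ (positive because $F_u$ is continuous and $[v_-,v_+]$ is a bounded interval, so the tails have positive mass), which is what makes the division legitimate.

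The main obstacle is the careful bookkeeping in the numerator step: I need the precise fact that the $\mathrm{Unif}(0, r_j^\lambda)$ portion of $p_j^\lambda$'s conditional distribution is carried \emph{entirely} by the event $\{\hat\beta_j^\lambda\neq 0\}$ and that the residual mass lives only at $p_j^\lambda = 1$ (hence is irrelevant for thresholds below $r_j^\lambda < 1$). This is exactly the content of the construction in Section~\ref{sec:smoothing} — in particular that when $\hat\beta_j^\lambda\neq 0$, $p_j^\lambda = \bar F_{|\ell|}^\lambda(|\hat\beta_j^\lambda|\mid\bm S^{(j)})$ is supported on $(0, r_j^\lambda]$, while when $\hat\beta_j^\lambda = 0$ the tie-broken p-value $\bar F_{|u|}(|u_1-\hat m_j|\mid\bm S^{(j)})$ is supported on $(0,1]$ and together they splice into a clean $\mathrm{Unif}(0,1)$ — so I would cite that discussion rather than rederive it, and the corollary then follows in a couple of lines.
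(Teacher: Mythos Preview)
Your approach is correct and matches the paper's argument essentially verbatim: first show $\underline{p}_j^\lambda \sim \mathrm{Unif}(0,1)$ conditional on $\bm S^{(j)}$ and $\{\hat\beta_j^\lambda\neq 0\}$ by using that $p_j^\lambda$ restricted to this event is $\mathrm{Unif}(0,r_j^\lambda)$, then marginalize over $\bm S^{(j)}$ via the tower property. One minor inaccuracy worth cleaning up: the tie-broken $p_j^\lambda$ of Section~\ref{sec:smoothing} is exactly $\mathrm{Unif}(0,1)$, not the mixture with a point mass at $1$ (that description applies to the \emph{un}-tie-broken p-value), and on $\{\hat\beta_j^\lambda=0\}$ the tie-broken p-value is supported on $[r_j^\lambda,1]$ rather than $(0,1]$---but this does not affect your argument, since only the behavior on $\{\hat\beta_j^\lambda\neq 0\}$ is used.
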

Computationally, only one extra LASSO (for $\hat{\bm \beta}_{-j}^\lambda(0)$, which goes into $r_j^\lambda$) needs to be run to compute $\underline{p}_j^\lambda$ for an index $j$ with $\hat{\beta}_j^\lambda\neq 0$.
And since everything above is conditional on $\bm S^{(j)}$, the same result holds true when using $\hat{\lambda}_j$ from Section~\ref{sec:lambda_choice_singletest} (i.e., $\underline{p}_j^{\hat{\lambda}_j}$ is conditionally valid given $\hat{\beta}_j^{\hat{\lambda}_j}\neq 0$), since $\hat{\lambda}_j$ is conditionally independent of the data given $\bm S^{(j)}$. 

Now for obtaining post-LASSO-selection valid confidence interval for $\beta_j$, we need to invert a conditionally valid test for $H_j(\gamma), \gamma \in \mathbb R$. For testing $H_j(\gamma)$, following the suggestion in Section \ref{sec:ciforbeta}, the test statistic should be based on the LASSO estimate on $(\bm y - \gamma \bm X_j, \bm X)$, that is $|\hat \beta_j^\lambda(\gamma)|$, whereas the model selection event is still based on the original, un-centerd LASSO estimate, $\hat \beta_j^\lambda$. Furthermore, one can choose to use a different penalty parameter for the test statistic than the one used for the selection event. This prompts us to understand tests for $H_j(\gamma)$ based on $|\hat \beta_j^{\lambda_\ell}(\gamma)|$, valid conditionally on $\{\hat \beta_j^{\lambda_s}\neq 0\}$, where $\lambda_\ell$ and $\lambda_s$ need not be the same. Because now the test-statistic and the selection event are based on different LASSO estimates, a conditional $p$-value would not have such a simple form as for $\underline{p}^{\lambda}_j$, but we can still obtain a valid $p$-value using CDF transforms if we can characterize the distribution of $|\hat \beta_j^{\lambda_{\ell}}(\gamma)|\mid \{\hat \beta_j^{\lambda_s}\neq 0\}$, under $H_j(\gamma)$.

First note that, as discussed in Section \ref{sec:ciforbeta}, Lemma \ref{lem:conddist} can be applied to $(\bm y - \gamma \bm X_j, \bm X)$ under $H_j(\gamma)$ to show that $\bm S^{(j)}(\gamma) = (\bm X_j^T\bm y, \hat \sigma_j(\gamma))$ is sufficient under $H_j(\gamma)$, where, $\hat \sigma_j(\gamma) = \|(\bm I - \bm P_{-j})(\bm y - \gamma \bm X_j)\|$, and that $\bm y - \gamma \bm X_j$ can be written as $\bm P_{-j}\left(\bm y - \gamma \bm X_j\right) + \hat \sigma_j(\gamma)\bm V \bm u^{\gamma}$, with $\bm u^{\gamma}\mid \bm S^{(j)}(\gamma)\stackrel{H_j(\gamma)}{\sim} \mathrm{Unif}(\mathbb S^{n-d})$. In light of this result, one can now apply Theorem \ref{thm:distexpression} to $(\bm y - \gamma \bm X_j, \bm X)$ to conclude that $\hat \beta_j^{\lambda_{\ell}}(\gamma)\leq b$ if and only if $u_1^{\gamma}\leq \Lambda_j^*(b, \mathrm{sign}(b);\gamma)$, where $ \Lambda_j^*(b, \epsilon;\gamma)$ is exactly equal to $\Lambda(b,\epsilon)$ but with $\bm y$ replaced with $\bm y - \gamma \bm X_j$ and $\mathrm{sign}(0):=1$. In fact Theorem \ref{thm:general_distexpression} (that characterizes the distribution of $\hat \beta_j^{\lambda}\mid \bm S^{(j)}(\gamma)$ under $H_j(\gamma)$) in Appendix \ref{sec:app_ldist} shows that this same $u_1^{\gamma}$ can be used to characterize the event $\{\hat \beta_j^{\lambda_s} = 0\}$, stating it is equivalent to $\{u^{\gamma}_{1}\in [\Lambda_j(0,\pm 1;\gamma)]\}$, where,
\[
    \Lambda_j(0,\pm 1;\gamma) = \frac{-\bm X_j^T(\hat {\bm y}_j + \gamma (\bm I - \bm P_{-j})\bm X_j - \bm X_{-j}\hat {\bm \beta}^{\lambda_s}_{-j}(0)) \pm n\lambda_s}{\hat \sigma_j(\gamma)\|(\bm I - \bm P_{-j})\bm X_j\|}.
\]
Because $\Lambda_j(0,\pm 1;\gamma)$ and $\Lambda_j^*(b, \mathrm{sign}(b);\gamma)$ are all functions of $\bm S^{(j)}(\gamma)$, we have that
\begin{align*}
    \underline{F}^{\lambda_{\ell},\lambda_{s}}_{\ell}(b\mid \bm S^{(j)}(\gamma);\gamma)&:=\mathbb P_{H_{j}(\gamma)}(\hat \beta_j^{\lambda_{\ell}}(\gamma)\leq b\mid \hat \beta_j^{\lambda_s}\neq 0, \bm S^{(j)}(\gamma))\\
    &= \mathbb P_{H_j(\gamma)}(u_1^{\gamma}\leq \Lambda_j^*(b, \mathrm{sign}(b);\gamma)\mid u^{\gamma}_{1}\notin [\Lambda_j(0,\pm 1;\gamma)], \bm S^{(j)}(\gamma)),
\end{align*}
where the last expression can exactly be evaluated using the known quantiles of $F_u$ defined in Section \ref{sec:elldist}. This gives us the adjusted $p$-value for testing $H_j(\gamma)$:
\[
    \underline{p}_j^{\lambda_{\ell},\lambda_{s}}(\gamma) := 1-  \underline{F}^{\lambda_{\ell},\lambda_{s}}_{\ell}(|\hat \beta_j^{\lambda_{\ell}}(\gamma)|\mid \bm S^{(j)}(\gamma);\gamma) + \lim_{b\rightarrow |\hat \beta^{\lambda_{\ell}}_j(\gamma)|^{-}} \underline{F}^{\lambda_{\ell},\lambda_{s}}_{\ell}(b\mid \bm S^{(j)}(\gamma);\gamma),
\]
where we can also use the strategy in Section \ref{sec:smoothing} to break ties when $\hat \beta^{\lambda_{\ell}}_j = 0$. As one would expect, our original conditional $\ell$-test $p$-value $\underline{p}_j^\lambda$ is a special case of the above, with
$\underline{p}_j^{\lambda} = \underline{p}_j^{\lambda,\lambda}(0)$. Finally, for $\lambda_s = \lambda$, the above test can be inverted to obtain a confidence interval for $\beta_j$ valid conditionally on $\{\hat \beta_j^\lambda \neq 0\}$. Two particularly interesting choices for such a $100(1-\alpha)\%$ confidence interval are
\[
     \hat{ \underline{C}}_{j}^{\lambda}
    = \mathrm{conv}\left(\left\{\gamma: \underline{p}_j^{\lambda,\lambda}(\gamma)> \alpha\right\}\right)\textrm{ and } \hat{ \underline{C}}_{j}^{*\lambda}
    =\mathrm{conv}\left( \left\{\gamma: \underline{p}_j^{\hat \lambda_\ell(\gamma),\lambda}(\gamma)> \alpha\right\}\right),
\]
which use $\lambda_\ell=\lambda_s=\lambda$ and a cross-validated choice for $\lambda_{\ell}$ (see Section~\ref{sec:lambda_choice_singletest}), respectively. Note that our cross-validation strategy does not allow for a data-adaptive choice for the selection $\lambda$ (i.e., $\lambda_s$), as for the individual $\ell$-test $p$-values $\underline{p}_j^{\lambda_\ell,\lambda_s}(\gamma)$ to be valid, $\lambda_s$ needs to be a function of $\bm S^{(j)}(\gamma)$ (and maybe some external, conditionally independent, sources of randomness), and this needs to hold for all $\gamma \in \mathbb R$. 

Like \citet{liu2018powerful} but unlike, e.g., \citet{lassoinference}, our conditional inferences do not condition on anything about the LASSO's selection except that it selects the $j^{\mathrm{th}}$ coefficient. As mentioned in Section~\ref{sec:relatedwork}, the key difference between our conditional inference and \citet{liu2018powerful}'s is essentially the same as the difference between the (unconditional) $\ell$-test and the $t$-test, and indeed in Section~\ref{sec:expt_lasso_adjusted_ci} we find that our conditional confidence intervals improve over those of \citet{liu2018powerful} similarly to how the unconditional $\ell$-test confidence intervals from Section~\ref{sec:ciforbeta} improve over standard $t$-test confidence intervals. \color{black}Hence, we can understand when and how the conditional $\ell$-test outperforms \cite{liu2018powerful}'s method in the same way that we understand the analogous outperformance of the $\ell$-test over the $t$-test; see Sections~\ref{sec:improved_performance} and \ref{sec:power_analysis} for intuitive and formal details on this outperformance, respectively.\color{black}

\section{Experiments}
\label{sec:experiments}

In this section, we perform experiments to evaluate the performance of the $\ell$-test and its corresponding confidence intervals and post-selection procedures. For all simulations except in Section~\ref{sec:robust} (where we study the robustness of the $\ell$-test to deviations from model~\eqref{eqn:lm}), we use a linear model~\eqref{eqn:lm} with $k$ out of the $d$ elements of $\bm \beta$ chosen uniformly without replacement and set to $A$ or $-A$ with equal probability, and all the other remaining entries set to 0. We perform inference on one randomly chosen signal coefficient, $\beta_j$. The rows of $\bm X$ are drawn i.i.d. from $\mathcal{N}_d(\bm 0, \bm \Sigma)$ and then the columns are normalized. We will specify the values of $n,d,k,A,\bm \Sigma$, and $\sigma$ for each of the simulation settings we consider.

\subsection{Power of the $\ell$-test}
\label{sec:ltest_expt}
We compare the power of the following three tests: The $\ell$-test, the two-sided $t$-test, and the one-sided $t$-test in the direction of the true sign of $\beta_j$, under simulation settings studying the effect of varying the amplitude of the signal variables, the number of signal variables, and the inter-variable correlation. Note there is no need to compare Type I error rates, since all three methods have guaranteed exactly nominal Type I error (as long as model~\eqref{eqn:lm} is well-specified, which it is in this subsection). The results are reported in Figure \ref{fig:unconditional_test} (see the caption for further details of the simulations). A simulation considering the case where $d$ is closer to $n$ is provided in Appendix \ref{sec:app_ltest_further_expt}; the agreement between the $\ell$-test and the one-sided $t$-test becomes even stronger in this case.

\begin{figure}[h]
    \centering
    \includegraphics{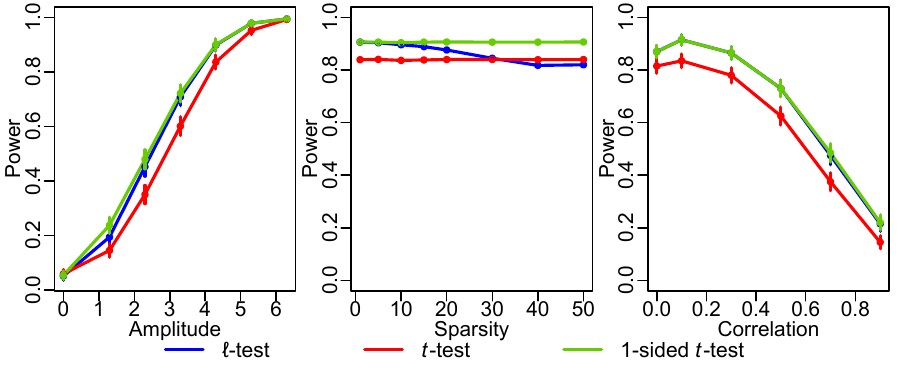}
    \caption{Power comparison of size-5$\%$ tests for $H_{j}$. For all the settings, we fix $n=100,d=50, \sigma =1$. For left, we fix $k=5,\bm \Sigma = \bm I$ and vary the amplitude $A$. For center, we fix $A=4.3, \bm \Sigma = \bm I$ and vary the sparsity level $k$. For right, we fix $A=4.3, k = 5, \Sigma_{ij} = \rho^{|i-j|}$ and vary the inter-variable correlation $\rho$. The error bars represent plus or minus two standard errors. }
    \label{fig:unconditional_test}
\end{figure}

The $\ell$-test significantly outperforms the $t$-test in sparse settings for any signal amplitude, achieving essentially one-sided $t$-test power for moderate-to-high signal amplitudes. The $\ell$-test's power remains close to that of the one-sided $t$-test when as many as 30$\%$ of the coefficients are signals, and this phenomenon seems to be similar across covariate correlation levels. Furthermore, the $\ell$-test only starts to under-perform the $t$-test after more than 60\% of the coefficients are non-zero (as one might expect, given the $\ell$-test is designed to leverage sparsity), and even when the signal is fully dense (100\% nonzero entries, all with equal magnitude), the $\ell$-test's power loss is still only a fraction of its power gain in sparse settings.

\subsection{Robustness of the $\ell$-test}
\label{sec:robust}
\begin{figure}[h]
    \centering \includegraphics{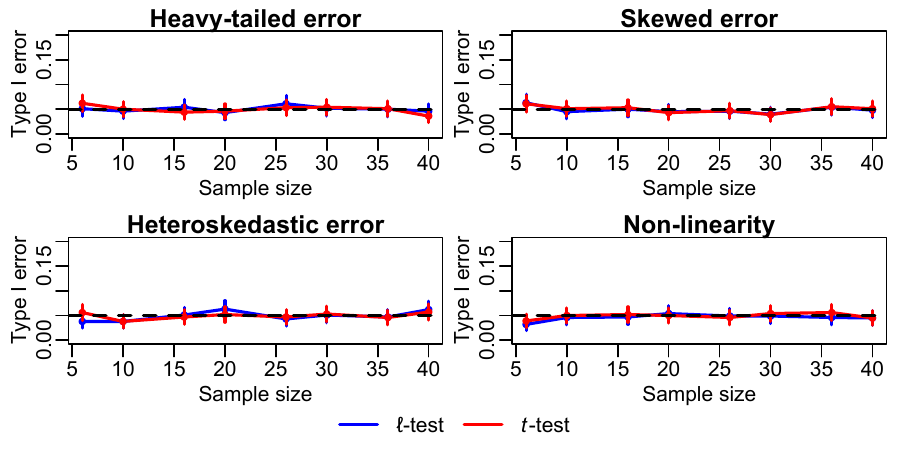}
    \caption{We fix $\rho = 0$ and vary the sample size ($n$) on the x-axis and set $d=n/2$. For the top-left, we draw $\epsilon_i\iid t_2$ (which does not have a finite second moment), for the top-right, $\epsilon_i \iid \mathrm{Exp}(1)$, standardized with its theoretical mean and standard deviation, and for the bottom-left, $\epsilon_i\iid N(0,1)$ \color{black}if mean of the $i^{\mathrm{th}}$ row of $\bm X$ is less than 0, while $\epsilon_i \iid N(0,8)$ otherwise\color{black}. For bottom-right, we generate $y_i\sim \mathcal N\left((\bm X_i^4)^T\bm \beta, \bm I\right)$, where, $(\bm X_i^4)_j = X_{ij}^4$. All these settings use the nominal size of $5\%$ and the error bars represent plus or minus two units of standard error.}
    \label{fig:robustness_main_paper}
\end{figure}

One thing that makes the $t$-test remarkable and so useful in practice is its robustness, even in relatively small samples, to violations of model~\eqref{eqn:lm}. \color{black}To evaluate the $\ell$-test's robustness, we fix $d=n/2,k=1,A=3.3$, $\bm \Sigma$ to have a Toeplitz structure with $\Sigma_{i,j} = \rho^{|i-j|}$ \color{black} and test on a null index $\beta_j=0$. Figure~\ref{fig:robustness_main_paper} shows Type I error results for the $\ell$-test and $t$-test for four types of model violation, when we choose $\bm \Sigma = \bm I$ (that is, $\rho = 0$): heavy-tailed errors, skewed errors, heteroskedastic errors, and model non-linearity (the figure caption gives exact specifications of each of the model violations) for a range of small sample sizes. In Appendix~\ref{sec:app_robustness} we present further simulations of these same four types of model violations, but Figure~\ref{fig:robustness_main_paper} shows the most extreme example of each of the four. \color{black}Appendix~\ref{sec:app_robustness} also presents the counterparts of these settings that includes correlation between the variables by setting $\rho = 0.5$.
Despite substantial deviations from \eqref{eqn:lm}, the $\ell$-test, like the $t$-test, remains quite robust---their Type I error violations remain close and both are at most just a few percentage points above nominal.\color{black}

\subsection{Confidence Intervals for $\beta_j$}
\label{sec:expt_ci_unconditional}

\begin{figure}[h]
    \centering
    \includegraphics{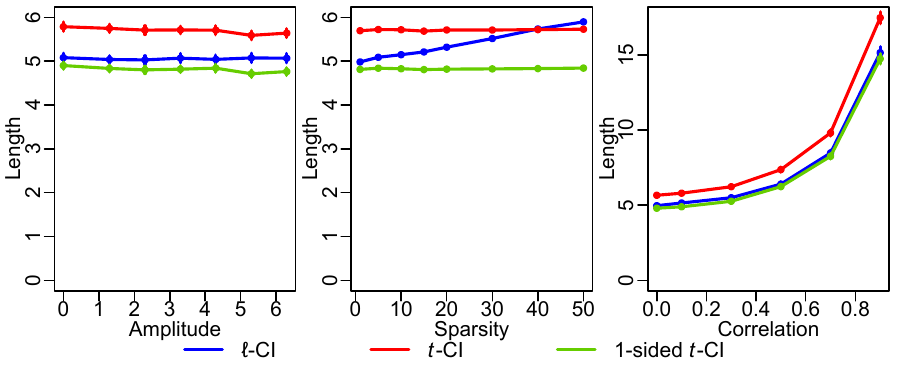}
    \caption{Coverage and lengths of the 95$\%$ confidence intervals. Figures on the left, center and right are under the exact same experimental settings as are the respective figures in Figure \ref{fig:unconditional_test}. The error bars represent plus or minus two standard errors.}
    \label{fig:unadjusted_CI}
\end{figure}

We now perform simulations to compare the $\ell$-test confidence interval with the usual (two-sided) $t$-test confidence interval, as well as the interval obtained by inverting the one-sided $t$-test in the direction of the true sign of the alternative for every alternative $\beta_j=\gamma$. In Appendix \ref{sec:app_1-sided-interval}, we discuss an explicit characterization of this last interval, however a point we highlight here is that this is an oracle procedure (even more so than the one-sided $t$-test we compare to in Section~\ref{sec:ltest_expt}) and can only be constructed if we know the \emph{exact} true value of the coefficient, not just its sign. To obtain the $\ell$-test confidence interval, we choose a grid of candidate values for $\beta_j$ and report the coverage and length of the smallest interval that strictly encloses the rejected values of $\beta_j$ from both the ends. Note that this interval will always have length and coverage at least as large as the true $\ell$-test confidence interval (which we can never obtain exactly from a finite grid of $\beta_j$ values). We use brute force calculations using the highly optimized functions available in the package \texttt{glmnet} \citep{glmnet1} in \textbf{\textsf{R}} instead of our theoretically efficient algorithm in Appendix \ref{sec:app_beta_path} for $\ell$-test inversion (and defer the task of designing an optimized implementation for it to future research).

For our simulations we consider the exact settings as in Figure \ref{fig:unconditional_test} and summarize the results in Figure \ref{fig:unadjusted_CI}. We only report the lengths of all the intervals for a more compact presentation whereas the full results with empirical converges are reported in Appendix \ref{sec:app_ci_with_1se} (the coverage is always extremely close to the nominal 95\%).
As expected, we see similar trends as in Figure \ref{fig:unconditional_test}, with the $\ell$-test confidence intervals being close to the oracle one-sided $t$-test intervals, consistently across amplitudes, in sparse settings. In the left and right plots, the $\ell$-test confidence intervals are consistently about $12\%$ shorter than their $t$-test based counterparts. Perhaps surprisingly, the center plot shows that the $\ell$-test's benefit in confidence interval width over the $t$-test remains up until about 80\% nonzero entries in the coefficient vector, which is a larger outperformance range of sparsity than in Figure~\ref{fig:unconditional_test}. Similar to the power results, we see only a small detriment to using the $\ell$-test confidence interval in the densest setting, relative to its benefit in the sparsest setting. As with the power simulations, we also consider a setting with $d$ closer to $n$ in Appendix \ref{sec:app_ci_with_1se}, and again find this further narrows the gap between the $\ell$-test and one-sided $t$-test procedures.

\subsection{Post-selection $\ell$-test inference}
\label{sec:expt_lasso_adjusted_ci}
\begin{figure}[h]
    \centering
    \includegraphics{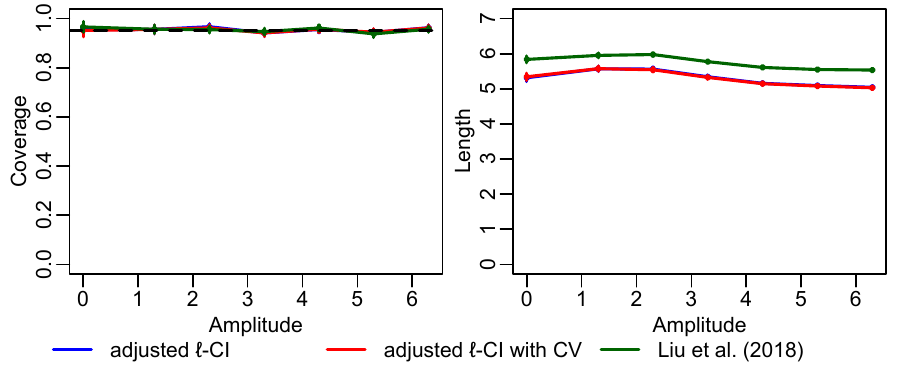}
    \caption{Length and coverage of various post-selection $95\%$ confidence intervals. We are under exactly the same setting as the left panel of Figure \ref{fig:unadjusted_CI}. Here `adjusted $\ell$-CI' and `adjusted $\ell$-CI with CV' refer to $\underline{\hat C}_j^\lambda$ and $\underline{\hat C}_j^{*\lambda}$, respectively, and we have used $\lambda = 0.01$ (which approximately matched the average cross-validated value found on independent data sets, for the entire range of amplitudes). The error bars represent plus or minus two standard errors. }
    \label{fig:setting_1_adjusted}
\end{figure}
In this section, we perform simulations to empirically evaluate the performance of the adjusted $\ell$-test confidence intervals in Section \ref{sec:postselection} for post-selection inference in the linear model under LASSO selection. In particular, we will compare $\underline{\hat C}_j^\lambda$ and $\underline{\hat C}_j^{*\lambda}$, where for both of these methods we invert the respective adjusted $\ell$-tests on a grid of values using the same strategy as in Section \ref{sec:expt_ci_unconditional}, along with the conditional confidence interval procedure of \citet{liu2018powerful}. Figure \ref{fig:setting_1_adjusted} shows the effect of varying coefficient amplitude on the length and coverage of the intervals (conditional on the LASSO selecting the coefficient) for the same setting as the left panels of Figure~\ref{fig:unconditional_test}. We see there is practically no difference between the performance of $\underline{\hat C}_j^\lambda$ and $\underline{\hat C}_j^{*\lambda}$ (see figure caption for how $\lambda$ was chosen) and both are consistently shorter than the method in \citet{liu2018powerful}, reflecting again the benefits of the $\ell$-test under sparsity but now conditional on LASSO selection. In Appendix \ref{sec:app_expt_lasso_adjusted_ci}, we show results for an additional setting where $d$ is closer to $n$. 

\subsection{Analysis of the HIV drug resistance data}
\label{sec:hiv}

The HIV drug resistance data \citep{hivdata} consists of 16 different regressions, each containing data on a set of genetic mutations (the covariates) and a score measuring resistance to an HIV drug (the response). We follow the same pre-processing step suggested in \citet{knockoff}, resulting in 16 regressions with $n$ ranging between 328 and 842 and $d$ ranging between 147 and 313. Running the $t$-test on all covariates across all data sets results in an average \textcolor{black}{of 16.7\% of the covariates being discovered}, while for the $\ell$-test, this figure is 18.6\% (this represents an 11\% improvement), {\color{black}suggesting} that the $\ell$-test's theoretical benefits under sparsity provide genuine power gains in real data sets. Similarly, the $t$-test confidence intervals have an average width of 3.85 while the $\ell$-test confidence intervals' is 3.54, representing about an 8\% improvement. We also note that previous works studying the same data have aggregated discovered mutations to the gene level, and if we do this, the comparison remains similar: the $t$-test's average power is 31.8\% discovered genes while the $\ell$-test's is 36.4\% (a 14\% improvement).

\section{Discussion}
\label{sec:discussions}

The $\ell$-test leverages sparsity by using the LASSO coefficient estimate $|\hat \beta^\lambda_j|$ as its test statistic and can achieve power close to that of the one-sided $t$-test without any knowledge about the true sign of the coefficient. 
The $\ell$-test can be inverted to obtain confidence intervals that are about $10\%$ shorter than the $t$-test intervals {\color{black} in the sparse settings of our simulations}, 
and the $\ell$-test and confidence intervals can also be adjusted for LASSO selection. {\color{black}The $\ell$-test can also be applied asymptotically to a rich class of models far beyond the homoskedastic Gaussian linear model.}

\color{black}
The key idea underlying the $\ell$-test is that a belief about model structure can be leveraged via information in the null model's sufficient statistic to improve (when that structure holds) inferential power. We believe this idea can be used for types of structure other than sparsity as well, e.g., 
\color{black} if $\bm\beta$ is dense but smooth (i.e., it has small total variation), a test based on the fused LASSO \citep{tibshirani2005sparsity} may be more appropriate. {\color{black}However, while the key idea of constructing a valid test based on a structure-leveraging test statistic as described at the beginning of Section~\ref{sec:elltest} can always be carried out via co-sufficient sampling \citep{CSS}, the details in Sections~\ref{sec:elldist}--\ref{sec:improved_performance} of what make the $\ell$-test computationally efficient and powerful rely heavily on the particulars of the LASSO, and hence we defer to future work the possibility of similar results for other types of structure.}

\section*{Acknowledgements}
The authors would like to thank Danielle Paulson and Jonathan Taylor for valuable discussions and comments. LJ and SS were partially supported by DMS-2045981.

\bibliography{refs}

@article{piecewiselinear,
author = {Saharon Rosset and Ji Zhu},
title = {{Piecewise linear regularized solution paths}},
volume = {35},
journal = {The Annals of Statistics},
number = {3},
publisher = {Institute of Mathematical Statistics},
pages = {1012 -- 1030},
keywords = {ℓ_1-norm penalty, polynomial splines, regularization, solution paths, Sparsity, Total variation},
year = {2007},
doi = {10.1214/009053606000001370},
URL = {https://doi.org/10.1214/009053606000001370}
}

@article{fithianthesis,
author = {Fithian, William and Sun, Dennis and Taylor, Jonathan},
year = {2014},
month = {10},
pages = {},
title = {Optimal Inference After Model Selection}
}

@article{lassoinference,
author = {Jason D. Lee and Dennis L. Sun and Yuekai Sun and Jonathan E. Taylor},
title = {{Exact post-selection inference, with application to the lasso}},
volume = {44},
journal = {The Annals of Statistics},
number = {3},
publisher = {Institute of Mathematical Statistics},
pages = {907 -- 927},
keywords = {Confidence interval, hypothesis test, Lasso, Model selection},
year = {2016},
doi = {10.1214/15-AOS1371},
URL = {https://doi.org/10.1214/15-AOS1371}
}

@misc{liu2018powerful,
      title={More powerful post-selection inference, with application to the Lasso}, 
      author={Keli Liu and Jelena Markovic and Robert Tibshirani},
      year={2018},
      eprint={1801.09037},
      archivePrefix={arXiv},
      primaryClass={stat.ME}
}

@article{EC-ea:2018,
  title={Panning for Gold: Model-{X} Knockoffs for High-dimensional Controlled Variable Selection},
  author={Cand\`es, Emmanuel and Fan, Yingying and Janson, Lucas and Lv, Jinchi},
  journal={Journal of the Royal Statistical Society: Series B},
  volume={80},
  number={3},
  pages={551--577},
  year={2018}
}

@article{LASSO,
 ISSN = {00359246},
 URL = {http://www.jstor.org/stable/2346178},
 author = {Robert Tibshirani},
 journal = {Journal of the Royal Statistical Society. Series B (Methodological)},
 number = {1},
 pages = {267--288},
 publisher = {[Royal Statistical Society, Wiley]},
 title = {Regression Shrinkage and Selection via the Lasso},
 urldate = {2023-05-25},
 volume = {58},
 year = {1996}
}

@misc{luo2022,
      title={Improving knockoffs with conditional calibration}, 
      author={Yixiang Luo and William Fithian and Lihua Lei},
      year={2022},
      eprint={2208.09542},
      archivePrefix={arXiv},
      primaryClass={stat.ME}
}

@article{knockoff,
author = {Rina Foygel Barber and Emmanuel J. Cand{\`e}s},
title = {{Controlling the false discovery rate via knockoffs}},
volume = {43},
journal = {The Annals of Statistics},
number = {5},
publisher = {Institute of Mathematical Statistics},
pages = {2055 -- 2085},
keywords = {false discovery rate (FDR), Lasso, Martingale theory, permutation methods, sequential hypothesis testing, Variable selection},
year = {2015},
doi = {10.1214/15-AOS1337},
URL = {https://doi.org/10.1214/15-AOS1337}
}

@misc{li2021whiteout,
      title={Whiteout: when do fixed-X knockoffs fail?}, 
      author={Xiao Li and William Fithian},
      year={2021},
      eprint={2107.06388},
      archivePrefix={arXiv},
      primaryClass={math.ST}
}

@article{CSS,
author = {Stephens, Michael},
year = {2012},
month = {09},
pages = {},
title = {Goodness-of-Fit and Sufficiency: Exact and Approximate Tests},
volume = {14},
journal = {Methodology and Computing in Applied Probability},
doi = {10.1007/s11009-011-9267-2}
}

@article{RB-LJ:2020,
author = {Rina Foygel Barber and Lucas Janson},
title = {{Testing goodness-of-fit and conditional independence with approximate co-sufficient sampling}},
volume = {50},
journal = {The Annals of Statistics},
number = {5},
publisher = {Institute of Mathematical Statistics},
pages = {2514 -- 2544},
keywords = {approximate sufficiency, conditional independence testing, conditional randomization test, co-sufficiency, Goodness-of-fit test, high-dimensional inference, model-X},
year = {2022},
doi = {10.1214/22-AOS2187},
URL = {https://doi.org/10.1214/22-AOS2187}
}

@article{huang2020,
author = "Huang, Dongming and Janson, Lucas",
doi = "10.1214/19-AOS1920",
fjournal = "Annals of Statistics",
journal = "Ann. Statist.",
month = "10",
number = "5",
pages = "3021--3042",
publisher = "The Institute of Mathematical Statistics",
title = "Relaxing the assumptions of knockoffs by conditioning",
url = "https://doi.org/10.1214/19-AOS1920",
volume = "48",
year = "2020"
}

@article{debiasedLASSO,
author = {Sara van de Geer and Peter B{\"u}hlmann and Ya’acov Ritov and Ruben Dezeure},
title = {{On asymptotically optimal confidence regions and tests for high-dimensional models}},
volume = {42},
journal = {The Annals of Statistics},
number = {3},
publisher = {Institute of Mathematical Statistics},
pages = {1166 -- 1202},
keywords = {central limit theorem, generalized linear model, Lasso, linear model, multiple testing, Semiparametric efficiency, Sparsity},
year = {2014},
doi = {10.1214/14-AOS1221},
URL = {https://doi.org/10.1214/14-AOS1221}
}

@article{Cox1975ANO,
  title={A note on data-splitting for the evaluation of significance levels},
  author={D. R. Cox},
  journal={Biometrika},
  year={1975},
  volume={62},
  pages={441-444},
  url={https://api.semanticscholar.org/CorpusID:119955345}
}

@article{berk2013,
author = {Richard Berk and Lawrence Brown and Andreas Buja and Kai Zhang and Linda Zhao},
title = {{Valid post-selection inference}},
volume = {41},
journal = {The Annals of Statistics},
number = {2},
publisher = {Institute of Mathematical Statistics},
pages = {802 -- 837},
keywords = {Family-wise error, high-dimensional inference, Linear regression, Model selection, multiple comparison, sphere packing},
year = {2013},
doi = {10.1214/12-AOS1077},
URL = {https://doi.org/10.1214/12-AOS1077}
}

@article{tibshirani2016,
author = {Tibshirani, Ryan and Taylor, Jonathan and Lockhart, Richard and Tibshirani, Robert},
year = {2016},
month = {04},
pages = {600-620},
title = {Exact Post-Selection Inference for Sequential Regression Procedures},
volume = {111},
journal = {Journal of the American Statistical Association},
doi = {10.1080/01621459.2015.1108848}
}

@article{panigrahi2021integrative,
  title={Integrative methods for post-selection inference under convex constraints},
  author={Panigrahi, Snigdha and Taylor, Jonathan and Weinstein, Asaf},
  journal={The Annals of Statistics},
  volume={49},
  number={5},
  pages={2803--2824},
  year={2021},
  publisher={Institute of Mathematical Statistics}
}

@article{bootstrap_splitting,
author = {Rinaldo, Alessandro and Wasserman, Larry and G'Sell, Max and Lei, Jing and Tibshirani, Ryan},
year = {2016},
month = {11},
pages = {},
title = {Bootstrapping and Sample Splitting For High-Dimensional, Assumption-Free Inference},
volume = {47},
journal = {The Annals of Statistics},
doi = {10.1214/18-AOS1784}
}

@article{bachoc2016,
author = {Bachoc, François and Preinerstorfer, David and Steinberger, Lukas},
year = {2016},
month = {11},
pages = {},
title = {Uniformly valid confidence intervals post-model-selection},
volume = {48},
journal = {The Annals of Statistics},
doi = {10.1214/19-AOS1815}
}

@article{gaussmirror,
author = {Xin Xing and Zhigen Zhao and Jun S. Liu},
title = {Controlling False Discovery Rate Using Gaussian Mirrors},
journal = {Journal of the American Statistical Association},
volume = {118},
number = {541},
pages = {222-241},
year = {2023},
publisher = {Taylor & Francis},
doi = {10.1080/01621459.2021.1923510},


URL = { 
    
        https://doi.org/10.1080/01621459.2021.1923510
    
    

},
eprint = { 
    
        https://doi.org/10.1080/01621459.2021.1923510
    
    

}

}

@article{junfdrdatasplit,
author = {Chenguang Dai and Buyu Lin and Xin Xing and Jun S. Liu},
title = {False Discovery Rate Control via Data Splitting},
journal = {Journal of the American Statistical Association},
volume = {0},
number = {0},
pages = {1-18},
year = {2022},
publisher = {Taylor & Francis},
doi = {10.1080/01621459.2022.2060113},


URL = { 
    
        https://doi.org/10.1080/01621459.2022.2060113
    
    

},
eprint = { 
    
        https://doi.org/10.1080/01621459.2022.2060113
    
    

}

}

@article{pitman1937b,
 ISSN = {14666162},
 URL = {http://www.jstor.org/stable/2983647},
 author = {E. J. G. Pitman},
 journal = {Supplement to the Journal of the Royal Statistical Society},
 number = {2},
 pages = {225--232},
 publisher = {[Wiley, Royal Statistical Society]},
 title = {Significance Tests Which May be Applied to Samples from any Populations. II. The Correlation Coefficient Test},
 urldate = {2023-11-06},
 volume = {4},
 year = {1937}
}

@article{1938,
    author = {Pitman, E. J. G.},
    title = "{Significance Tests which May be Applied to Samples from any Populations: III. The Analysis of Variance Test}",
    journal = {Biometrika},
    volume = {29},
    number = {3-4},
    pages = {322-335},
    year = {1938},
    month = {02},
    issn = {0006-3444},
    doi = {10.1093/biomet/29.3-4.322},
    url = {https://doi.org/10.1093/biomet/29.3-4.322},
    eprint = {https://academic.oup.com/biomet/article-pdf/29/3-4/322/712708/29-3-4-322.pdf},
}

@article{tukey1958,
author = {Tukey, J.W.},
year = {1958},
month = {01},
pages = {614},
title = {Bias and Confidence in Not Quite Large Samples},
volume = {29},
journal = {Annals of Mathematical Statistics},
doi = {10.1214/aoms/1177706647}
}

@article{efron1979,
author = {B. Efron},
title = {{Bootstrap Methods: Another Look at the Jackknife}},
volume = {7},
journal = {The Annals of Statistics},
number = {1},
publisher = {Institute of Mathematical Statistics},
pages = {1 -- 26},
keywords = {bootstrap, discriminant analysis, error rate estimation, jackknife, Nonlinear regression, nonparametric variance estimation, Resampling, subsample values},
year = {1979},
doi = {10.1214/aos/1176344552},
URL = {https://doi.org/10.1214/aos/1176344552}
}

@article{freedman1981,
author = {D. A. Freedman},
title = {{Bootstrapping Regression Models}},
volume = {9},
journal = {The Annals of Statistics},
number = {6},
publisher = {Institute of Mathematical Statistics},
pages = {1218 -- 1228},
keywords = {bootstrap, Correlation, least squares, regression, Wasserstein metrics},
year = {1981},
doi = {10.1214/aos/1176345638},
URL = {https://doi.org/10.1214/aos/1176345638}
}

@article{friedman1937,
author = {Milton Friedman},
title = {The Use of Ranks to Avoid the Assumption of Normality Implicit in the Analysis of Variance},
journal = {Journal of the American Statistical Association},
volume = {32},
number = {200},
pages = {675-701},
year = {1937},
publisher = {Taylor & Francis},
doi = {10.1080/01621459.1937.10503522},
URL = { 
    https://www.tandfonline.com/doi/abs/10.1080/01621459.1937.10503522
},
eprint = { 
    https://www.tandfonline.com/doi/pdf/10.1080/01621459.1937.10503522
}
}

@article{krushkalwallis1952,
 ISSN = {01621459},
 URL = {http://www.jstor.org/stable/2280779},
 abstract = {Given C samples, with ni observations in the ith sample, a test of the hypothesis that the samples are from the same population may be made by ranking the observations from from 1 to $\Sum n_i$ (giving each observation in a group of ties the mean of the ranks tied for), finding the C sums of ranks, and computing a statistic H. Under the stated hypothesis, H is distributed approximately as χ2(C - 1), unless the samples are too small, in which case special approximations or exact tables are provided. One of the most important applications of the test is in detecting differences among the population means.},
 author = {William H. Kruskal and W. Allen Wallis},
 journal = {Journal of the American Statistical Association},
 number = {260},
 pages = {583--621},
 publisher = {[American Statistical Association, Taylor & Francis, Ltd.]},
 title = {Use of Ranks in One-Criterion Variance Analysis},
 urldate = {2023-11-06},
 volume = {47},
 year = {1952}
}

@article{hajek1962,
author = {Jaroslav Hajek},
title = {{Asymptotically Most Powerful Rank-Order Tests}},
volume = {33},
journal = {The Annals of Mathematical Statistics},
number = {3},
publisher = {Institute of Mathematical Statistics},
pages = {1124 -- 1147},
year = {1962},
doi = {10.1214/aoms/1177704476},
URL = {https://doi.org/10.1214/aoms/1177704476}
}

@article{adichie1967,
author = {J. N. Adichie},
title = {{Asymptotic Efficiency of a Class of Non-Parametric Tests for Regression Parameters}},
volume = {38},
journal = {The Annals of Mathematical Statistics},
number = {3},
publisher = {Institute of Mathematical Statistics},
pages = {884 -- 893},
year = {1967},
doi = {10.1214/aoms/1177698882},
URL = {https://doi.org/10.1214/aoms/1177698882}
}

@article{jaeckel1972,
author = {Louis A. Jaeckel},
title = {{Estimating Regression Coefficients by Minimizing the Dispersion of the Residuals}},
volume = {43},
journal = {The Annals of Mathematical Statistics},
number = {5},
publisher = {Institute of Mathematical Statistics},
pages = {1449 -- 1458},
year = {1972},
doi = {10.1214/aoms/1177692377},
URL = {https://doi.org/10.1214/aoms/1177692377}
}

@article{gutenbrunner1993,
title = "Tests of linear hypotheses based on regression rank scores",
keywords = "Ranks, regression quantiles, regression rank scores",
author = "C. Gutenbrunner and J. Jure{\v c}Kov{\'a} and R. Koenker and S. Portnoy",
year = "1993",
month = jan,
day = "1",
doi = "10.1080/10485259308832561",
language = "English (US)",
volume = "2",
pages = "307--331",
journal = "Journal of Nonparametric Statistics",
issn = "1048-5252",
publisher = "Taylor and Francis Ltd.",
number = "4",
}

@article{leibickel2020,
    author = {Lei, Lihua and Bickel, Peter J},
    title = "{An assumption-free exact test for fixed-design linear models with exchangeable errors}",
    journal = {Biometrika},
    volume = {108},
    number = {2},
    pages = {397-412},
    year = {2020},
    month = {09},
    issn = {0006-3444},
    doi = {10.1093/biomet/asaa079},
    url = {https://doi.org/10.1093/biomet/asaa079},
    eprint = {https://academic.oup.com/biomet/article-pdf/108/2/397/37938951/asaa079.pdf},
}

@article{
hivdata,
author = {Soo-Yon Rhee  and Jonathan Taylor  and Gauhar Wadhera  and Asa Ben-Hur  and Douglas L. Brutlag  and Robert W. Shafer },
title = {Genotypic predictors of human immunodeficiency virus type 1 drug resistance},
journal = {Proceedings of the National Academy of Sciences},
volume = {103},
number = {46},
pages = {17355-17360},
year = {2006},
doi = {10.1073/pnas.0607274103},
URL = {https://www.pnas.org/doi/abs/10.1073/pnas.0607274103},
eprint = {https://www.pnas.org/doi/pdf/10.1073/pnas.0607274103}}

@article{lehmannscheffe1955,
 ISSN = {00364452},
 URL = {http://www.jstor.org/stable/25048243},
 author = {E. L. Lehmann and Henry Scheffé},
 journal = {Sankhyā: The Indian Journal of Statistics (1933-1960)},
 number = {3},
 pages = {219--236},
 publisher = {Springer},
 title = {Completeness, Similar Regions, and Unbiased Estimation: Part II},
 urldate = {2023-12-12},
 volume = {15},
 year = {1955}
}

@article{dcrt,
  title={Fast and Powerful Conditional Randomization Testing via Distillation},
  author={Liu, Molei and Katsevich, Eugene and Janson, Lucas and Ramdas, Aaditya},
  journal={Biometrika},
  year={2021+},
  note={To Appear}
}

@article{HABIGER2014153,
title = {Compound p-value statistics for multiple testing procedures},
journal = {Journal of Multivariate Analysis},
volume = {126},
pages = {153-166},
year = {2014},
issn = {0047-259X},
doi = {https://doi.org/10.1016/j.jmva.2014.01.007},
url = {https://www.sciencedirect.com/science/article/pii/S0047259X14000153},
author = {Joshua D. Habiger and Edsel A. Peña},
keywords = {Empirical Bayes, False discovery rate, Multiple testing, Multiple decision function, Multiple decision process, Sample splitting, Test data, Training data, Microarray analysis},
abstract = {Many multiple testing procedures make use of the p-values from the individual pairs of hypothesis tests, and are valid if the p-value statistics are independent and uniformly distributed under the null hypotheses. However, it has recently been shown that these types of multiple testing procedures are inefficient since such p-values do not depend upon all of the available data. This paper provides tools for constructing compound p-value statistics, which are those that depend upon all of the available data, but still satisfy the conditions of independence and uniformity under the null hypotheses. Several examples are provided, including a class of compound p-value statistics for testing location shifts. It is demonstrated, both analytically and through simulations, that multiple testing procedures tend to reject more false null hypotheses when applied to these compound p-values rather than the usual p-values, and at the same time still guarantee the desired type I error rate control. The compound p-values are used to analyze a real microarray data set and allow for more rejected null hypotheses.}
}

@article{yuandhofflinearregression,
author = {Peter Hoff and Chaoyu Yu},
title = {{Exact adaptive confidence intervals for linear regression coefficients}},
volume = {13},
journal = {Electronic Journal of Statistics},
number = {1},
publisher = {Institute of Mathematical Statistics and Bernoulli Society},
pages = {94 -- 119},
keywords = {Empirical Bayes, frequentist coverage, Ridge regression, shrinkage, Sparsity},
year = {2019},
doi = {10.1214/18-EJS1517},
URL = {https://doi.org/10.1214/18-EJS1517}
}

@article{fisher1922goodness,
  title={The goodness of fit of regression formulae, and the distribution of regression coefficients},
  author={Fisher, Ronald A},
  journal={Journal of the Royal Statistical Society},
  volume={85},
  number={4},
  pages={597--612},
  year={1922},
  publisher={JSTOR}
}

@article{lars,
author = {Bradley Efron and Trevor Hastie and Iain Johnstone and Robert Tibshirani},
title = {{Least angle regression}},
volume = {32},
journal = {The Annals of Statistics},
number = {2},
publisher = {Institute of Mathematical Statistics},
pages = {407 -- 499},
keywords = {boosting, coefficient paths, Lasso, Linear regression, Variable selection},
year = {2004},
doi = {10.1214/009053604000000067},
URL = {https://doi.org/10.1214/009053604000000067}
}

@Article{glmnet,
    title = {Regularization Paths for Cox's Proportional Hazards Model via Coordinate Descent},
    author = {Noah Simon and Jerome Friedman and Trevor Hastie and Rob Tibshirani},
    journal = {Journal of Statistical Software},
    year = {2011},
    volume = {39},
    number = {5},
    pages = {1--13},
    url = {https://www.jstatsoft.org/v39/i05/},
    doi = {10.18637/jss.v039.i05},
  }

@article{bartlett1937properties,
  title={Properties of sufficiency and statistical tests},
  author={Bartlett, Maurice Stevenson},
  journal={Proceedings of the Royal Society of London. Series A-Mathematical and Physical Sciences},
  volume={160},
  number={901},
  pages={268--282},
  year={1937},
  publisher={The Royal Society London}
}

@article{zhangandzhang2014,
 ISSN = {13697412, 14679868},
 URL = {http://www.jstor.org/stable/24772752},
 author = {Cun-Hui Zhang and Stephanie S. Zhang},
 journal = {Journal of the Royal Statistical Society. Series B (Statistical Methodology)},
 number = {1},
 pages = {217--242},
 publisher = {[Royal Statistical Society, Wiley]},
 title = {Confidence intervals for low dimensional parameters in high dimensional linear models},
 urldate = {2024-05-13},
 volume = {76},
 year = {2014}
}

@article{JMLR:v15:javanmard14a,
  author  = {Adel Javanmard and Andrea Montanari},
  title   = {Confidence Intervals and Hypothesis Testing for High-Dimensional Regression},
  journal = {Journal of Machine Learning Research},
  year    = {2014},
  volume  = {15},
  number  = {82},
  pages   = {2869--2909},
  url     = {http://jmlr.org/papers/v15/javanmard14a.html}
}

@misc{lee2024boosting,
      title={Boosting e-BH via conditional calibration}, 
      author={Junu Lee and Zhimei Ren},
      year={2024},
      eprint={2404.17562},
      archivePrefix={arXiv},
      primaryClass={stat.ME}
}

@article{AS-LJ:2020,
  title={Powerful Knockoffs via Minimizing Reconstructability},
  author={Spector, Asher and Janson, Lucas},
  journal={Annals of Statistics},
  year={2022},
  volume={50},
  number={1},
  pages={252--276}
}

@Misc{berge,
 Author = {Berge, Claude},
 Title = {Topological spaces including a treatment of multi-valued functions, vector spaces and convexity},
 Year = {1963},
 Language = {English},
 HowPublished = {Edinburgh-{London}: {Oliver} \& {Boyd}, {Ltd}. {XIII}, 270 p. (1963).},
 zbMATH = {3186512},
 Zbl = {0114.38602}
}

@article{tseng2001,
author = {Tseng, P.},
title = {Convergence of a block coordinate descent method for nondifferentiable minimization},
year = {2001},
issue_date = {June 2001},
publisher = {Plenum Press},
address = {USA},
volume = {109},
number = {3},
issn = {0022-3239},
url = {https://doi.org/10.1023/A:1017501703105},
doi = {10.1023/A:1017501703105},
journal = {J. Optim. Theory Appl.},
month = {jun},
pages = {475–494},
numpages = {20},
keywords = {Gauss-Seidel method, block coordinate descent, convergence, nondifferentiable minimization, pseudoconvex functions, quasiconvex functions, stationary point}
}

@article{tian2018selective,
  title={Selective inference with a randomized response},
  author={Tian, Xiaoying and Taylor, Jonathan},
  journal={The Annals of Statistics},
  volume={46},
  number={2},
  pages={679--710},
  year={2018},
  publisher={JSTOR}
}

@article{tibshirani2005sparsity,
  title={Sparsity and smoothness via the fused lasso},
  author={Tibshirani, Robert and Saunders, Michael and Rosset, Saharon and Zhu, Ji and Knight, Keith},
  journal={Journal of the Royal Statistical Society Series B: Statistical Methodology},
  volume={67},
  number={1},
  pages={91--108},
  year={2005},
  publisher={Oxford University Press}
}

@Manual{r,
    title = {R: A Language and Environment for Statistical Computing},
    author = {{R Core Team}},
    organization = {R Foundation for Statistical Computing},
    address = {Vienna, Austria},
    year = {2024},
    url = {https://www.R-project.org/},
  }

@Article{glmnet1,
    title = {Regularization Paths for Generalized Linear Models via
      Coordinate Descent},
    author = {Jerome Friedman and Robert Tibshirani and Trevor Hastie},
    journal = {Journal of Statistical Software},
    year = {2010},
    volume = {33},
    number = {1},
    pages = {1--22},
    doi = {10.18637/jss.v033.i01},
  }

@article{crt_power,
    author = {Wang, Wenshuo and Janson, Lucas},
    title = "{A high-dimensional power analysis of the conditional randomization test and knockoffs}",
    journal = {Biometrika},
    volume = {109},
    number = {3},
    pages = {631-645},
    year = {2021},
    month = {11},
    issn = {1464-3510},
    doi = {10.1093/biomet/asab052},
    url = {https://doi.org/10.1093/biomet/asab052},
    eprint = {https://academic.oup.com/biomet/article-pdf/109/3/631/45512197/asab052.pdf},
}

@ARTICLE{bayati2012,
  author={Bayati, Mohsen and Montanari, Andrea},
  journal={IEEE Transactions on Information Theory}, 
  title={The LASSO Risk for Gaussian Matrices}, 
  year={2012},
  volume={58},
  number={4},
  pages={1997-2017},
  keywords={Vectors;Noise measurement;Mean square error methods;Equations;Calibration;Noise;Algorithm design and analysis;Compressed sensing;graphical models;message passing algorithms;random matrix theory;state evolution;statistical learning.},
  doi={10.1109/TIT.2011.2174612}}

@book{rudin1976principles,
  title     = {Principles of Mathematical Analysis},
  author    = {Rudin, Walter},
  year      = {1976},
  edition   = {3},
  publisher = {McGraw-Hill},
  address   = {New York},
  isbn      = {9780070542358}
}

@article{white1980,
  title={A heteroskedasticity-consistent covariance matrix estimator and a direct test for heteroskedasticity},
  author={White, Halbert},
  journal={Econometrica},
  volume={48},
  number={4},
  pages={817--838},
  year={1980},
  publisher={Wiley},
  doi={10.2307/1912934}
}

@article{lassouniqueness,
author = {Tibshirani, Ryan},
year = {2012},
month = {06},
pages = {},
title = {The Lasso Problem and Uniqueness},
volume = {7},
journal = {Electronic Journal of Statistics},
doi = {10.1214/13-EJS815}
}

@book{billingsley1999convergence,
  title={Convergence of Probability Measures},
  author={Billingsley, Patrick},
  year={1999},
  edition={2nd},
  publisher={Wiley}
}

@book{vandervaart1998asymptotic,
  title     = {Asymptotic Statistics},
  author    = {van der Vaart, A. W.},
  year      = {1998},
  publisher = {Cambridge University Press},
  series    = {Cambridge Series in Statistical and Probabilistic Mathematics},
  volume    = {3},
  isbn      = {9780521784504}
}

@inbook{Lucas_Carvalho_Wang_Bild_Nevins_West_2006, place={Cambridge}, title={Sparse Statistical Modelling in Gene Expression Genomics}, booktitle={Bayesian Inference for Gene Expression and Proteomics}, publisher={Cambridge University Press}, author={Lucas, Joseph and Carvalho, Carlos and Wang, Quanli and Bild, Andrea and Nevins, Joseph R. and West, Mike}, editor={Do, Kim-Anh and Müller, Peter and Vannucci, MarinaEditors}, year={2006}, pages={155–176}}

@article{annurev:/content/journals/10.1146/annurev-economics-061109-080451,
   author = "Fan, Jianqing and Lv, Jinchi and Qi, Lei",
   title = "Sparse High-Dimensional Models in Economics", 
   journal= "Annual Review of Economics",
   year = "2011",
   volume = "3",
   number = "Volume 3, 2011",
   pages = "291-317",
   doi = "https://doi.org/10.1146/annurev-economics-061109-080451",
   url = "https://www.annualreviews.org/content/journals/10.1146/annurev-economics-061109-080451",
   publisher = "Annual Reviews",
   issn = "1941-1391",
   type = "Journal Article",
   keywords = "independence screening",
   keywords = "oracle properties",
   keywords = "portfolio selection",
   keywords = "factor models",
   keywords = "variable selection",
   keywords = "penalized likelihood",
   abstract = "This article reviews the literature on sparse high-dimensional models and discusses some applications in economics and finance. Recent developments in theory, methods, and implementations in penalized least-squares and penalized likelihood methods are highlighted. These variable selection methods are effective in sparse high-dimensional modeling. The limits of dimensionality that regularization methods can handle, the role of penalty functions, and their statistical properties are detailed. Some recent advances in sparse ultra-high-dimensional modeling are also briefly discussed.",
  }

@article{https://doi.org/10.1002/cem.1418,
author = {Filzmoser, Peter and Gschwandtner, Moritz and Todorov, Valentin},
title = {Review of sparse methods in regression and classification with application to chemometrics},
journal = {Journal of Chemometrics},
volume = {26},
number = {3-4},
pages = {42-51},
keywords = {sparse methods, high-dimensional data, partial least squares regression, discriminant analysis, principal component analysis},
doi = {https://doi.org/10.1002/cem.1418},
url = {https://analyticalsciencejournals.onlinelibrary.wiley.com/doi/abs/10.1002/cem.1418},
eprint = {https://analyticalsciencejournals.onlinelibrary.wiley.com/doi/pdf/10.1002/cem.1418},
abstract = {High-dimensional data often contain many variables that are irrelevant for predicting a response or for an accurate group assignment. The inclusion of such variables in a regression or classification model leads to a loss in performance, even if the contribution of the variables to the model is small. Sparse methods for regression and classification are able to suppress these variables. This is possible by adding an appropriate penalty term to the objective function of the method. An overview of recent sparse methods for regression and classification is provided. The methods are applied to several high-dimensional data sets from chemometrics. A comparison with the non-sparse counterparts allows us to acquire an insight into their performance. Copyright © 2012 John Wiley \& Sons, Ltd.},
year = {2012}
}

@article{64524c4c-35f2-309c-807b-12f074681130,
 ISSN = {08834237, 21688745},
 URL = {https://www.jstor.org/stable/26997931},
 abstract = {In this paper, we review state-of-the-art methods for feature selection in statistics with an application-oriented eye. Indeed, sparsity is a valuable property and the profusion of research on the topic might have provided little guidance to practitioners. We demonstrate empirically how noise and correlation impact both the accuracy—the number of correct features selected—and the false detection—the number of incorrect features selected—for five methods: the cardinality-constrained formulation, its Boolean relaxation, ��1 regularization and two methods with non-convex penalties. A cogent feature selection method is expected to exhibit a two-fold convergence, namely the accuracy and false detection rate should converge to 1 and 0 respectively, as the sample size increases. As a result, proper method should recover all and nothing but true features. Empirically, the integer optimization formulation and its Boolean relaxation are the closest to exhibit this two properties consistently in various regimes of noise and correlation. In addition, apart from the discrete optimization approach which requires a substantial, yet often affordable, computational time, allmethods terminate in times comparable with the glmnet package for Lasso. We released code for methods that were not publicly implemented. Jointly considered, accuracy, false detection and computational time provide a comprehensive assessment of each feature selection method and shed light on alternatives to the Lasso-regularization which are not as popular in practice yet.},
 author = {Dimitris Bertsimas and Jean Pauphilet and Bart Van Parys},
 journal = {Statistical Science},
 number = {4},
 pages = {pp. 555--578},
 publisher = {Institute of Mathematical Statistics},
 title = {Sparse Regression: Scalable Algorithms and Empirical Performance},
 urldate = {2025-11-14},
 volume = {35},
 year = {2020}
}

@book{rish2014sparse,
  title={Sparse modeling: theory, algorithms, and applications},
  author={Rish, Irina and Grabarnik, Genady},
  year={2015},
  publisher={CRC press}
}

\appendix

\section{Characterization of the $\ell$-distribution under $H_j(\gamma):\beta_j = \gamma$}
\label{sec:app_ldist}

In Section \ref{sec:elldist}, we characterized the conditional distribution, $\hat \beta_j^\lambda\mid \bm S^{(j)}$ under $H_j$ based on the quantiles of $u_1$. In this section, we extend the result to provide a similar characterization of the $\ell$-distribution under $H_j(\gamma):\beta_j = \gamma$. 
\begin{theorem}
\label{thm:general_distexpression}
    Consider the linear model defined in Theorem \ref{thm:distexpression}, fix $\gamma \in \mathbb R$ and define $\hat \sigma_j(\gamma) = \|(\bm I - \bm P_{-j})(\bm y - \gamma \bm X_j)\|$. Then,
    \begin{enumerate}[label = (\alph*)]
        \item $\bm S^{(j)}(\gamma) = (\bm X_{-j}^T\bm y, \|(\bm I - \bm P_{-j})(\bm y - \gamma \bm X_j)\|)$ is sufficient under $H_j(\gamma)$. Furthermore, fixing an orthogonal matrix $\bm V$ for the column-space of $\bm X_{-j}$ as in Lemma \ref{lem:conddist}, there exists a unique vector $\bm u^\gamma \in \mathbb S^{n-d}$ such that $\bm y = \hat{\bm y}_j + \gamma (\bm I - \bm P_{-j})\bm X_j + \hat \sigma_j(\gamma)\bm V \bm u^\gamma$ and
        \[
        \bm u^\gamma \mid \bm S^{(j)}(\gamma) \stackrel{H_j(\gamma)}{\sim}\mathrm{Unif}\left(\mathbb S^{n-d}\right).
        \]
        \item Furthermore, analogously define $\Lambda_j(\cdot,\cdot;\gamma):\mathbb R^2\mapsto \mathbb R$ by
    \[
        \Lambda_j(b, \epsilon;\gamma) = \frac{-\bm X_j^T(\hat {\bm y}_j + \gamma (\bm I - \bm P_{-j})\bm X_j -b\bm X_j - \bm X_{-j}\hat {\bm \beta}^\lambda_{-j}(b)) + n\lambda \epsilon }{\hat \sigma_j(\gamma)\|(\bm I - \bm P_{-j})\bm X_j\|}.
    \]
    Then the function $f_{\bm S^{(j)}(\gamma)}^\gamma:\mathbb R\mapsto \mathbb R$, whose inverse is defined as
    \[
        \left(f_{\bm S^{(j)}(\gamma)}^\gamma\right)^{-1}(b) =\begin{cases}
                \Lambda_j(b, \mathrm{sign}(b);\gamma), & b\neq 0\\
                [\Lambda_j(0, -1;\gamma), \Lambda_j(0, 1;\gamma)], & b = 0
            \end{cases},
    \]
    is continuous and increasing in $\mathbb R$ and strictly increasing in $\{u: f^\gamma_{\bm S^{(j)}(\gamma)}(u)\neq 0\}$ and satisfies  $\hat \beta_j^{\lambda} = f^\gamma_{\bm S^{(j)}}(u^\gamma_j)$.
    \end{enumerate}
\end{theorem}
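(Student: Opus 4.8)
The plan is to reduce everything to Lemma~\ref{lem:conddist} and Theorem~\ref{thm:distexpression} applied to the translated response $\bm y - \gamma\bm X_j$, which under $H_j(\gamma)$ satisfies model~\eqref{eqn:lm} with $j$th coefficient $0$, i.e.\ it satisfies $H_j$. For Part~1, I would apply Lemma~\ref{lem:conddist} with $\bm y$ replaced by $\bm y-\gamma\bm X_j$: its minimal sufficient statistic $(\bm X_{-j}^T(\bm y-\gamma\bm X_j), \|\bm y-\gamma\bm X_j\|^2)$ is, for fixed $\gamma$, a bijective function of $\bm S^{(j)}(\gamma)$ --- the first coordinate since $\bm X_{-j}^T(\bm y-\gamma\bm X_j) = \bm X_{-j}^T\bm y - \gamma\bm X_{-j}^T\bm X_j$, and the second since the Pythagorean identity $\|\bm y-\gamma\bm X_j\|^2 = \|\bm P_{-j}(\bm y-\gamma\bm X_j)\|^2 + \hat\sigma_j(\gamma)^2$ together with $\|\bm P_{-j}(\bm y-\gamma\bm X_j)\|^2 = (\bm X_{-j}^T(\bm y-\gamma\bm X_j))^T(\bm X_{-j}^T\bm X_{-j})^{-1}\bm X_{-j}^T(\bm y-\gamma\bm X_j)$ lets one pass between $\|\bm y-\gamma\bm X_j\|^2$ and $\hat\sigma_j(\gamma)$. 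Hence $\bm S^{(j)}(\gamma)$ is sufficient under $H_j(\gamma)$. Lemma~\ref{lem:conddist} then gives a unique $\bm u^\gamma\in\mathbb{S}^{n-d}$ with $\bm y - \gamma\bm X_j = \bm P_{-j}(\bm y-\gamma\bm X_j) + \hat\sigma_j(\gamma)\bm V\bm u^\gamma$; using $\bm P_{-j}(\bm y-\gamma\bm X_j) = \hat{\bm y}_j - \gamma\bm P_{-j}\bm X_j$ and adding $\gamma\bm X_j$ to both sides yields the claimed $\bm y = \hat{\bm y}_j + \gamma(\bm I-\bm P_{-j})\bm X_j + \hat\sigma_j(\gamma)\bm V\bm u^\gamma$, while $\bm u^\gamma\mid\bm S^{(j)}(\gamma)\sim\mathrm{Unif}(\mathbb{S}^{n-d})$ under $H_j(\gamma)$ is immediate from~\eqref{eqn:udist}.

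For Part~2 the subtlety is that we must characterize the \emph{original} (un-translated) LASSO coefficient $\hat\beta_j^\lambda$ on $(\bm y,\bm X)$ through the $\gamma$-shifted noise coordinate $u_1^\gamma$ from Part~1. I would combine the Part~1 decomposition with the KKT/blockwise-coordinate-descent characterization used in the proof of Theorem~\ref{thm:distexpression}: $\hat\beta_j^\lambda = b$ holds iff there is $\epsilon\in\partial|b|$ (the singleton $\{\mathrm{sign}(b)\}$ for $b\neq 0$, the interval $[-1,1]$ for $b=0$) with $\hat{\bm\beta}^\lambda_{-j} = \hat{\bm\beta}^\lambda_{-j}(b)$ and $\bm X_j^T(\bm y - b\bm X_j - \bm X_{-j}\hat{\bm\beta}^\lambda_{-j}(b)) = n\lambda\epsilon$. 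Substituting the identity $\bm X_j^T\bm y = \bm X_j^T\hat{\bm y}_j + \gamma\|(\bm I-\bm P_{-j})\bm X_j\|^2 + \hat\sigma_j(\gamma)\|(\bm I-\bm P_{-j})\bm X_j\|\,u_1^\gamma$, which follows from the Part~1 decomposition together with $\bm X_j^T\bm V_1 = \|(\bm I-\bm P_{-j})\bm X_j\|$ and $\bm X_j^T\bm V_k = 0$ for $k\geq 2$ (since $\bm V_2,\dots$ span the orthogonal complement of $\mathrm{col}(\bm X)\ni\bm X_j$), and solving for $u_1^\gamma$, gives exactly $u_1^\gamma = \Lambda_j(b,\epsilon;\gamma)$. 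Since $\hat{\bm y}_j$, $\hat\sigma_j(\gamma)$, and $\hat{\bm\beta}^\lambda_{-j}(b)$ (which depends on $\bm y$ only through $\bm X_{-j}^T\bm y$, as its defining objective does) are all functions of $\bm S^{(j)}(\gamma)$ for fixed $b,\epsilon,\gamma$, this is precisely the inverse relation defining $f^\gamma_{\bm S^{(j)}(\gamma)}$, and $\hat\beta_j^\lambda = f^\gamma_{\bm S^{(j)}(\gamma)}(u_1^\gamma)$.

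For the regularity claims, I would observe that, holding $\bm X_{-j}^T\bm y$ (and $\bm X$) fixed, the LASSO depends on $\bm y$ only through $\bm X^T\bm y$, so $\hat\beta_j^\lambda$ is a function of $\bm X_j^T\bm y$ alone; the proof of Theorem~\ref{thm:distexpression}, once one removes its distributional reparametrization through $u_1$, shows that this map $\bm X_j^T\bm y\mapsto\hat\beta_j^\lambda$ is continuous, non-decreasing, and strictly increasing where it is nonzero. Then $f^\gamma_{\bm S^{(j)}(\gamma)}$ is the composition of this map with the strictly increasing affine map $u_1^\gamma\mapsto\bm X_j^T\bm y$ from the identity in Part~2 (its slope $\hat\sigma_j(\gamma)\|(\bm I-\bm P_{-j})\bm X_j\|$ is positive because $\bm X_{-j}$ has full column rank and $\hat\sigma_j(\gamma)>0$, which holds with probability one and is the only regime in which $\bm u^\gamma$ is defined), so $f^\gamma_{\bm S^{(j)}(\gamma)}$ inherits all three properties. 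I expect the only real obstacle to be the bookkeeping in Part~2 --- keeping straight that the selection object is the \emph{original} LASSO while the noise coordinate $u_1^\gamma$ comes from the $\gamma$-\emph{shifted} decomposition --- together with verifying the sufficiency bijection in Part~1; neither step is deep, but both are easy to get subtly wrong, e.g.\ by conflating $\Lambda_j(b,\epsilon;\gamma)$ with the function obtained by mechanically replacing $\bm y$ by $\bm y-\gamma\bm X_j$ in $\Lambda_j(b,\epsilon)$ (which would instead characterize the translated LASSO $\hat\beta_j^\lambda(\gamma)$).
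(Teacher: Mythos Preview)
Your proposal is correct and follows essentially the same route as the paper: Part~1 by applying Lemma~\ref{lem:conddist} to the translated response $\bm y-\gamma\bm X_j$, and Part~2 by rerunning the KKT/blockwise-coordinate-descent argument from the proof of Theorem~\ref{thm:distexpression} with the $\gamma$-shifted decomposition plugged in. Your explicit handling of the original-versus-translated LASSO distinction and the sufficiency bijection in Part~1 is in fact more careful than the paper's own terse ``substitute $(\bm z,\bm\delta,\hat\sigma_j(\gamma))$ and substitute back'' sketch.
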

Note that, as discussed in Section \ref{sec:postselection}, the proof of item 1 of the above theorem directly follows from Lemma \ref{lem:conddist} applied to $(\bm y - \gamma \bm X_j, \bm X)$. Analogous to the proof of Theorem \ref{thm:distexpression} using Lemma \ref{lem:conddist}, one can derive item 2 of Theorem \ref{thm:general_distexpression} from its item 1, by defining $\bm z : = \bm y - \gamma\bm X_{j}$ and $\bm \delta$ by $\bm \delta_{-j} = \bm \beta_{-j}$ and $\delta_j = \beta_j - \gamma$ and copying exactly the same proof as in Appendix \ref{sec:app_proof_distexpression} but by replacing $(\bm y, \bm \beta, \hat \sigma_j)$ with $(\bm z, \bm \delta, \hat \sigma_j(\gamma))$ and then substituting back for $\bm \delta$ and $\bm z$ at the end. We thus skip an explicit proof of Theorem \ref{thm:general_distexpression}.

Here we would also like to draw attention to the fact that the function $f_{\bm S^{(j)}(\gamma)}^\gamma$, for any $\gamma \in \mathbb R$, is defined for \textit{any} real number $u$ and not necessarily restricted to $[0,1]$ and also that $\Lambda_j(b,\mathrm{sign}(b))$ can take values outside the $[-1,1]$ range. In fact if the values exceed this range, we can often draw conclusive insights about the behavior of the LASSO estimate $\hat \beta_j^\lambda$. For example, $\Lambda(0,\pm 1)<-1$ implies that $u_1^\gamma>\Lambda(0,1)$ and hence that for any $\bm y$ generated using the condition in Theorem \ref{thm:general_distexpression} for a $\bm u^\gamma \in \mathbb S^{n-d}$, the resultant LASSO estimator of the $j^{\mathrm{th}}$ coefficient, $\hat \beta_j^\lambda$, will always be positive. This can happen in situations as we next described in Section \ref{sec:app_improved_performance} and can in-fact, result in the $\ell$-test producing exactly the one-sided $t$-test $p$-value.

\section{Achieving the power of a one-sided $t$-test}
\label{sec:app_improved_performance}
The conclusions of the previous section show that if $\beta_j>0$ (which, without any loss of generality, we will assume throughout this section), the $\ell$-test would produce the exact $p$-value of a one-sided $t$-test if $\Lambda(0,\pm 1) = v_{\pm}< -1$ (as in this case $\hat \beta_j^\lambda>0$ under the null conditional distribution of $\bm y\mid \bm S^{(j)}$, and hence, the $\ell$-distribution puts all its mass on the positive half and we saw in Section \ref{sec:elltest} that the contribution from this part to the $\ell$-test $p$-value is exactly the one-sided $t$-test $p$-value). In this section, we will try to take a closer look at when this can be the case. Note that we introduced $\hat m_j$ in Section \ref{sec:improved_performance}, defined by,
\[
    \hat m_j = \frac{-\bm X_j^T (\hat {\bm y}_j - \bm X_{-j}\hat{\bm \beta}^\lambda_{-j}(0))}{\hat \sigma_j \|(\bm I - \bm P_{-j})\bm X_j\|},
\]
which is the mid-point of the interval $[v_-, v_+]$, and argued that its numerator is a proxy to a quantity given by,
\[
-\bm X_j^T(\hat {\bm y}_j - \bm X_{-j}\bm \beta_{-j})\sim \mathcal{N}(-\beta_j \bm X_j^T \bm P_{-j}\bm X_j, \sigma^2 \bm X_j^T \bm P_{-j}\bm X_j).
\]
Thus roughly speaking, one can observe $v_-,v_+\leq -1$ if $q_j = \frac{\beta_j \bm X_j^T \bm P_{-j}\bm X_j}{\|(\bm I - \bm P_{-j})\bm X_j\|}$ is highly positive. In particular, the magnitude of $q_j$ intuitively quantifies the reliability of the sign-guess. Notably, larger values of $|q_j|$ indicates higher differences between the mass the $\ell$-distribution assigns in the two halves of the real line (and hence is more asymmetric), thereby implying that the $\ell$-test $p$-value is much different from its two-sided-$t$-test counterpart. Now that we have an understanding of the role that $q_j$ plays, we next describe two situations in which one can observe $v_-,v_+\leq -1$:
\begin{itemize}
    \item \textbf{Strong signal size: }For any fixed design matrix $\bm X$, if the signal size $\beta_j$ is strong enough to make the quantity $q_j$ highly positive, one can expect that $\hat m_j$ would be sufficiently negative and hence, we can expect to see the $p$-value of a one-sided $t$-test. This result is intuitive as with increase in the signal size the variable gets more and more distinguishable. Thus the power of the one-sided $t$-test, the two-sided test and the $\ell$-test, all increase with the power of the latter getting closer to the power of the one-sided $t$-test.

    \item \textbf{High feature correlation: }Note that except for $\beta_j$, the other factor in $q_j$, 
    \[
        \frac{\bm X_j^T\bm P_{-j}\bm X_j}{\|(\bm I - \bm P_{-j})\bm X_j\|} = \frac{\bm X_j^T\bm P_{-j}\bm X_j}{\sqrt{\|\bm X_j\|^2 - \bm X_j^T\bm P_{-j}\bm X_j }},
    \]
    depends on the design. This shows that as $\bm X_j^T\bm P_{-j}\bm X_j$ increases and gets closer to $\bm X_j^T\bm X_j$, this factor starts blowing up and makes $q_j$ more positive highlighting another case when the power of the $\ell$-test can get closer to the power of the one-sided $t$-test. This, on the first glance, might seem non-intuitive because the increase in $\bm X_j^T\bm P_{-j}\bm X_j$ actually implies that the $j^{\mathrm{th}}$ variable gets more correlated with the rest of the variables and hence it should become harder to distinguish its effect. Note that unlike the previous case, in this case the power does not increase and as expected, the larger the quantity $\bm X_j^T\bm P_{-j}\bm X_j$ gets, the more the performance of all the three tests deteriorate. However with increase in $\bm X_j^T\bm P_{-j}\bm X_j$, the quantity $q_j$ increases in magnitude suggesting an increase in the belief about the validity of the sign guess. Put another way, with increase in $\bm X_j^T\bm P_{-j}\bm X_j$, it becomes possible to obtain a more reliable sign-guess as a function of the sufficient statistic, $\bm S^{(j)}$. Thus, though with this increasing correlation the tests loose power, the performances of the $\ell$-test and the one-sided $t$-test gets closer because of the improved sign-guessing ability of the former.
    
    Note that for design matrices, $\bm X$, of dimension $n\times d$ with i.i.d. drawn Gaussian columns (as is the case with most of the simulations in this paper), it indeed holds that with $d$ getting closer to $n$, the component $\bm X_j^T\bm P_{-j}\bm X_j$ increases in magnitude. Thus, in this case we would expect that the power curves of $\ell$-test and the one-sided $t$-test come closer as $d$ gets closer to $n$ (that is, as we move closer to unidentifiability). 
\end{itemize}

Finally, note that it follows as a direct consequence of the discussions in this section and Section \ref{sec:improved_performance} that the $\ell$-test gains no power over the $t$-test if $\bm X_j$ is orthogonal with the rest of the columns (and in particular, for orthogonal designs). In this case, $\hat m_j = 0$, so that we have no estimate of the sign of $\beta_j$ and hence, the $\ell$-distribution is symmetric about 0. With smoothing out of the $p$-value at $1$, we would expect the $\ell$-test and the $t$-test, as well as the respective confidence intervals, to perform similarly.

\color{black}
\section{$\ell$-test when $\sigma$ is known}
\label{sec:known_sigma_ell}

The $\ell$-test can easily be extended to the case with a known error variance $\sigma$, in particular, with the following substitutions over the usual, unknown-$\sigma$-case results: For a generic $(\bm y, \bm X_{n\times d})$---the sufficient statistic under $H_j$ now just shrinks to its first element and, re-using the old notation, is now given by $\bm S^{(j)} = \bm X_{-j}^T\bm y$, while the known-$\sigma$ analogue of Equation~\eqref{eqn:udist} is
\begin{align}
    \label{eqn:zdist}
    \bm y = \bm P_{-j}\bm y + \sigma \bm V\bm z,
\end{align}
for some unique $(n-d+1)$-dimensional vector $\bm z$ with distribution $\mathcal N(\bm 0, \bm I_{n-d+1})$ under $H_j$. A known-$\sigma$ analogue of Theorem~\ref{thm:distexpression} still holds true with $\hat \sigma_j$ replaced by the known $\sigma$ and $u_1$ replaced by $z_1$, and in particular the known-$\sigma$ analogue of Equation~\eqref{eqn:dist_b_nonzero} equates the event $\{\hat \beta_j^{\lambda} \leq b\}$ with $\left\{z_1 \leq \Lambda_j(b, \mathrm{sign}(b))\right\}$, where $\Lambda_j(b, \mathrm{sign}(b))$ is easily seen to be a function only of ${\bm S}^{(j)}$ as required for a straightforward evaluation of the probability of this event via the Gaussian CDF, since the known-$\sigma$ expression of $\Lambda_j(b, \mathrm{sign}(b))$ (defined by the aforementioned known-$\sigma$ analogue of Theorem~\ref{thm:distexpression}) now has $\sigma$, instead of $\hat \sigma_j$ in the denominator. 
The known-$\sigma$ analogues of Sections \ref{sec:smoothing} and \ref{sec:lambda_choice_singletest} similarly replace $\bm u$ and $\tilde{\bm u}$ with $\bm z$ and $\tilde{\bm z}$, respectively, along with replacing their corresponding distributions (in particular, Unif($\mathbb{S}^{n-d}$) is replaced with $\mathcal{N}(\bm 0,\bm I_{n-d+1})$).

In Section~\ref{sec:power_analysis}, we discuss a re-centered test, which, in the unknown-$\sigma$-case, would be based on the test statistic $|u_1-\hat m_j|$, but in the known-$\sigma$ case (which is the case considered in Section~\ref{sec:power_analysis} and Theorem~\ref{thm:power} therein) the re-centered test replaces $u_1$ with $z_1$ and replaces $\hat\sigma_j$ in the expression for $\hat m_j$ with the known $\sigma$. We will denote this known-$\sigma$ analogue of $\hat m_j$ by $\hat m_j'$.
\color{black}

\section{Proofs}

\subsection{Proof of Lemma \ref{lem:conddist}}
\label{sec:proof_conddist}
\begin{proof}
Because $\bm V$ is a full column-rank matrix, there exists a unique $\bm u$, such that,
\[
    \bm V \bm u = \frac{\bm y - \hat{\bm y}_j}{\|\bm y - \hat{\bm y}_j\|} = \frac{\bm y - \hat{\bm y}_j}{\hat \sigma_j}.
\]
Clearly,
\[
    \|\bm V \bm u\| = \frac{\|\bm y - \hat{\bm y}_j\|}{\|\bm y - \hat{\bm y}_j\|} = 1.
\]
Because orthogonal transformations preserve norm, we must have that this unique $\bm u$ satisfy,
\[
    \|\bm u \| = \|\bm V \bm u\| = 1,
\]
which proves \eqref{eqn:conddist_lm} in the statement of Lemma \ref{lem:conddist}. The proof \eqref{eqn:udist} follows directly from \citet[Proposition E.1]{luo2022}.
\end{proof}

\subsection{Proof of Lemma \ref{lem:pvalequivalence}}
\label{sec:proof_pvalequivalence}
\begin{proof}
       Note that we can write the OLS coefficient as (see \citet[Section 4]{fithianthesis}),
        \begin{align}
            \label{eqn:olsequivalent}
            \hat\beta_{j,\mathrm{OLS}} = \frac{\bm X_j^T(\bm I - \bm P_{-j})\bm y}{\|(\bm I - \bm P_{-j})\bm X_j\|^2}.
        \end{align}
        We first start by showing that the OLS estimate, $\hat \beta_{j, \mathrm{OLS}}$, is a constant multiple of the statistic, $u_1$, where the constant is a deterministic function of the sufficient statistic, $\bm S^{(j)}$. Based on Lemma \ref{lem:conddist}, we have the decomposition,
        \begin{align*}
            \bm y = \hat{\bm y_j} + \underbrace{\hat \sigma_j \bm V \bm u}_{\bm \hat e_j}.
        \end{align*}
        This implies,
        \begin{align}
            \label{eqn:u_decomp}
        \begin{aligned}
            \bm X_j^T(\bm I - \bm P_{-j})\bm y &= \bm X_j^T(\bm I - \bm P_{-j}) \bm P_{-j}\bm y + \hat \sigma_j  \bm X_j^T(\bm I - \bm P_{-j}) \bm V \bm u\\
            &= \hat \sigma_j \bm X_j^T \bm V \bm u. \textrm{ [since the columns of }\bm V\textrm{ are orthogonal to the columns of }\bm X_{-j}]
        \end{aligned}
        \end{align}
     Note that from the choice of $\bm V$, we have that $\bm V_1 = \frac{(\bm I - \bm P_{-j})\bm X_j}{\|(\bm I - \bm P_{-j})\bm X_j\|}$, and hence one can find other orthogonal vectors $\bm v_i, i\in \{2,\dots, n-d+1\}$, such that, $\bm V = [\bm v_1, \dots, \bm v_{n-d+1}]$. Then it follows that $\bm X_j^T\bm v_i = 0, i >1$ and hence for $\bm u = (u_1,\dots, u_{n-d+1})^T$,
    \begin{align}
    \label{eqn:u_1}
        \bm X_j^T \bm V \bm u = \bm X_j^T \sum_{i=1}^{n-d+1}\bm v_i u_i = \bm X_j^T\bm V_1u_1 = \frac{\bm X_j^T(\bm I-\bm P_{-j})\bm X_j}{\|(\bm I - \bm P_{-j})\bm X_j\|}u_1 = \|(\bm I - \bm P_{-j})\bm X_j\|u_1,
    \end{align}
    thereby implying that,
    \begin{align}
         \label{eqn:ols_uj}
         \hat\beta_{j,\mathrm{OLS}} = \frac{\bm X_j^T(\bm I - \bm P_{-j})\hat{\bm e}_j}{\|(\bm I - \bm P_{-j})\bm X_j\|^2}. = \frac{\hat \sigma_j u_1}{\|(\bm I - \bm P_{-j})\bm X_j\|}
    \end{align}
    Note that the factor pre-multiplying $u_1$ in the above equation is a function of the sufficient statistic, $\bm S^{(j)}$, and the design matrix, $\bm X$. Thus for samples from $\bm y\mid \bm S^{(j)}$, under $H_{j}$, showing that a statistic is increasing in $u_1$ or $\hat \beta_{j, \mathrm{OLS}}$ are equivalent. 

    Next, we show that the $t$-test statistic is an increasing function of $\hat \beta_{j, \mathrm{OLS}}$. For that, we first decompose the error term $\hat{\bm e}_j$ as $\hat {\bm e}_j = \hat {\bm e}_{\parallel} + \hat {\bm e}_{\perp}$, where,
        \begin{align}
        \label{eqn:perpandparallel}
        \begin{aligned}
            & \hat{\bm e}_{\parallel} = \frac{(\bm I - \bm P_{-j})\bm X_j \bm X_j^T (\bm I - \bm P_{-j})\hat{\bm e}_{j}}{\|(\bm I - \bm P_{-j})\bm X_j\|^2} = (\bm I - \bm P_{-j})\bm X_j \hat\beta_{j,\mathrm{OLS}}\textrm{, and }\\
            &\hat{\bm e}_{\perp} = \left(\bm I - \frac{(\bm I - \bm P_{-j})\bm X_j \bm X_j^T (\bm I - \bm P_{-j})}{\|(\bm I - \bm P_{-j})\bm X_j\|^2}\right)\hat{\bm e}_{j}.
        \end{aligned}
        \end{align}
        Note that $\hat{\bm e}_{\parallel}$ is the component of $\hat{\bm e}_j$ along the component of $\bm X_j$ orthogonal to $\bm X_{-j}$, while the component $\hat{\bm e}_{\perp}$ is perpendicular to the columnspace of the entire $\bm X$. Furthermore, the components,  $\hat{\bm e}_{\perp}$ and $\hat{\bm e}_{\parallel}$ are themselves orthogonal to each other. Using the relation $\hat{\bm e}_j = \hat{\sigma}_{j}\bm V \bm u$ and the fact that the matrix pre-multiplying  $\hat{\bm e}_j$ to obtain $\hat{\bm e}_{\perp}$ is idempotent, one can write,
        \begin{align*}
            \|\hat{\bm e}_{\perp}\|^2 &= \hat\sigma_j^2 \bm u^T \bm V^T\left(\bm I - \frac{(\bm I - \bm P_{-j})\bm X_j \bm X_j^T (\bm I - \bm P_{-j})}{\|(\bm I - \bm P_{-j})\bm X_j\|^2}\right) \bm V \bm u\\
            &= \hat\sigma_j^2 \bm u^T \bm V^T\bm V \bm u - \frac{\left(\bm X_j^T \hat{\bm e}_j\right)^2}{\|(\bm I - \bm P_{-j})\bm X_j\|^2}\\
            &= \hat \sigma_j^2- \frac{\left(\bm X_j^T \hat{\bm e}_j\right)^2}{\|(\bm I - \bm P_{-j})\bm X_j\|^2} \textrm{ }\left[\because\bm V\textrm{ is an orthogonal matrix and }\bm u^T\bm u = 1\right]\\
            &= \hat \sigma_j^2- \left(\hat\beta_{j,\mathrm{OLS}}\right)^2\|(\bm I - \bm P_{-j})\bm X_j\|^2
        \end{align*}
         Because $\hat{\bm e}_{\perp}$ is in the orthogonal complement of the columnspace of $\bm X$, we can write,
        \[
            (\bm I - \bm P)\hat{\bm e}_j = \hat{\bm e}_{\perp} + (\bm I - \bm P)\hat{\bm e}_{\parallel}.
        \]
        Also,
        \begin{align*}
            \|(\bm I - \bm P)\hat{\bm e}_j\|^2 &= \|\hat{\bm e}_{\perp}\|^2 + \|(\bm I - \bm P)\hat{\bm e}_{\parallel}\|^2\\
            &=  \hat \sigma_j^2- \left(\hat\beta_{j,\mathrm{OLS}}\right)^2\|(\bm I - \bm P_{-j})\bm X_j\|^2 + \|(\bm I-\bm P)(\bm I - \bm P_{-j})\bm X_j\|^2(\hat\beta_{j,\mathrm{OLS}})^2\\
            &=  \hat \sigma_j^2 - (\hat\beta_{j,\mathrm{OLS}})^2\left(\|(\bm I - \bm P_{-j})\bm X_j\|^2 - \|(\bm I-\bm P)(\bm I - \bm P_{-j})\bm X_j\|^2\right)\\
            &= \hat \sigma_j^2 - (\hat\beta_{j,\mathrm{OLS}})^2\underbrace{\|\bm P(\bm I - \bm P_{-j})\bm X_j\|^2}_{=:\, \kappa}.
        \end{align*}
        With these expressions, we can write,
        \[
            \Tstat = \frac{\hat \beta_{j,OLS}}{\hat \sigma \sqrt{(\bm X^T \bm X)^{-1}_{j,j}}} = C \frac{\hat\beta_{j,\mathrm{OLS}}}{\sqrt{\hat\sigma_j^2 - \left(\hat\beta_{j,\mathrm{OLS}}\right)^2\kappa}} = C \frac{\frac{\hat \sigma_j u_1}{\|(\bm I - \bm P_{-j})\bm X_j\|}}{\sqrt{\hat\sigma_j^2 - u_1^2 \left(\frac{\hat \sigma_j}{\|(\bm I - \bm P_{-j})\bm X_j\|}\right)^2\kappa}},
        \]
        where, $C = \sqrt{\frac{n-d}{(\bm X^T\bm X)^{-1}_{j,j}}}$ and $\kappa$ are both positive. Defining, $C' = C\cdot \frac{\hat \sigma_j}{\|(\bm I - \bm P_{-j})\bm X_j\|}$, $\kappa' = \kappa \cdot \left(\frac{\hat \sigma_j}{\|(\bm I - \bm P_{-j})\bm X_j\|}\right)^2$ and $g_{\bm S^{(j)}}(u) = \frac{C' u}{\sqrt{\hat \sigma_j^2 - u^2\kappa'}}$, the above equation shows $T_j = g_{\bm S^{(j)}}(u_1)$, thereby establishing that, $g_{\bm S^{(j)}}$ is a functional of $\bm S^{(j)}$ and itself is continuous, strictly increasing and anti-symmetric.
\end{proof}

\subsection{Proof of Theorem \ref{thm:distexpression}}
\label{sec:app_proof_distexpression}
\color{black}
We first start with an auxiliary lemma that states conditions to establish the continuity (in its remaining arguments) of the minimizer of a marginal of a function. In order to do so, we first recall the concept of upper hemicontinuity of a set-valued function.
\begin{defn}[Upper hemicontinutiy]
    Consider topological spaces $X$ and $Y$ and consider a set-valued function $\phi:X\mapsto 2^Y$. We call $\phi$ upper hemicontinuous at a point $x\in X$, if for any open set $U\subseteq Y$ such that $\phi(x)\subseteq U$, there exists a neighborhood $V$ of $x$ such that $w\in V\implies \phi(w)\subseteq U$.
\end{defn}
It is easy to see that if the set-valued function $\phi$ is actually a function (that is, only singleton sets belong to its range), then hemicontinuity of $\phi$ at $x$ actually implies the continuity of $\phi$ at $x$ as a function, as the definition implies that pre-image under $\phi$ of every open set is open. Now we state the following lemma that will be used multiple times in this, and some of the following proofs.

\begin{lemma}
    \label{lem:minima_cont}
    Consider a real-valued function $\phi:\mathcal X\times \mathcal Y\mapsto \mathbb R_{\geq 0}$ (with $\mathcal X, \mathcal Y$ subsets of Euclidean spaces) and $\bm y^*\in \mathcal Y$ satisfying the following properties:
    \begin{enumerate}
        \item $(\bm x,\bm y)\mapsto \phi(\bm x, \bm y)$ is continuous.
        \item For all $d>0$, there exists $M:=M(d)>0$ such that for $\epsilon = \underset{\bm y\in \bar B(\bm y^*, d)}{\sup}\phi(\bm 0, \bm y)$ (this supremum exists as $\bm y\mapsto \phi(\bm 0, \bm y)$ is a continuous transformation and $\bar B(\bm y^*, d)=\{\bm y: \|\bm y - \bm y^*\|\leq d\}$ is a closed ball), we have that $\|\bm x\|>M\implies \underset{\bm y \in \bar B(\bm y^*, d)}{\inf}\phi(\bm x, \bm y)>\epsilon$.
    \end{enumerate}
    Then the map $\bm y\mapsto \underset{\bm x}{\min}~\phi(\bm x, \bm y)$ is continuous at $\bm y^*$ and the map $\bm y\mapsto \underset{\bm x}{\arg\min}~\phi(\bm x, \bm y)$ is upper hemicontinuous (where $\arg\min$ denotes the set of possible minimizers) at $\bm y^*$. Hence, if the minimizer is unique for values of $\bm y$ in a neighborhood of $\bm y^*$, then the map is continuous at $\bm y^*$.
\end{lemma}

\begin{proof}
    Fix any $\bm y^*\in \mathcal Y$ and let $G\subset \mathcal Y$ be a compact set containing $\bm y^*$. Due to the Hiene--Borel theorem \citep[Theorem 2.41]{rudin1976principles}, $G$ is closed and bounded and hence, there is some closed ball $\bar B(\bm y^*, d)$ of radius $d>0$ centerd at $\bm y^*$ such that $G\subset \bar B(\bm y^*, d)$. Note that due to Property 2 above we have that with $M = M(d)$, $\|\bm x\|>M\implies \phi(\bm x, \bm y)> \underset{\bm y\in \bar B(\bm y^*, d)}{\sup}\phi(\bm 0, \bm y)$ for all $\bm y \in \bar B(\bm y^*, d)$. This implies that for any $\bm y \in \bar B(\bm y^*, d)$, $\phi(\bm 0, \bm y)<\phi(\bm x, \bm y)$ for any $\|\bm x\|>M$. This implies that for any $\bm y \in \bar B(\bm y^*, d)$,
    \[
        \underset{\bm x}{\arg\min}~\phi(\bm x, \bm y) = \underset{\bm x: \|\bm x\|\leq M}{\arg\min}~\phi(\bm x, \bm y)
    \]
    Now define the set-valued function:
    \[
        C(\bm y) = \{\bm x: \|\bm x\|\leq M\},
    \]
    and note that $C(\bm y)$ is compact and $\bm y\mapsto C(\bm y)$ is a constant function. We can then apply Berge's Maximum Theorem \citep{berge} to conclude that the map $\bm y\mapsto \underset{\bm x\in C(\bm y)}{\arg\min}~\phi(\bm x, \bm y)$ is upper hemicontinuous at $\bm y^*$, and since that map is equal to the map $\underset{\bm x}{\arg\min}~\phi(\bm x, \bm y)$ in a neighborhood of $\bm y^*$, the latter map is also upper hemicontinuous at $\bm y^*$. Similarly, Berge's Maximum Theorem implies the map $\bm y\mapsto \underset{\bm x\in C(\bm y)}{\min}~\phi(\bm x, \bm y)$ is continuous at $\bm y^*$, and since that map is equal to the map $\underset{\bm x}{\min}~\phi(\bm x, \bm y)$ in a neighborhood of $\bm y^*$, the latter map is also continuous at $\bm y^*$.
\end{proof}
\color{black}

Next, we first prove that $f_{\bm S^{(j)}}$ is continuous and strictly increasing in $\{u:f_{\bm S^{(j)}}(u)\neq 0\}$ (in \ref{sec:proof_distexpression_1}), followed by a proof of the characterization of its inverse (in \ref{sec:proof_distexpression_2}).
\subsubsection{Proof of continuity and increasing properties of $f_{\mathbf{S}_j}$}
\label{sec:proof_distexpression_1}
In this section, we will prove that $f_{\bm S^{(j)}}$ is continuous and strictly increasing in $\{u:f_{\bm S^{(j)}}(u)\neq 0\}$.
\begin{proof}
    We again start with the decomposition,
        \begin{align*}
            \bm y = \hat{\bm y}_j + \underbrace{\hat \sigma_j \bm V \bm u}_{\hat{\bm  e}_j},
        \end{align*}
        and as we did in the proof in Section \ref{sec:proof_pvalequivalence}, we again decompose,
        \[
            \hat{\bm e}_j = \hat{\bm e}_{||} + \hat{\bm e}_{\perp}.
        \]
        Next using Equation \eqref{eqn:olsequivalent} and \eqref{eqn:ols_uj},
        \[
            u_1 = \frac{\bm X_j^T(\bm I - \bm P_{-j})\bm y}{\hat \sigma_j \|(\bm I - \bm P_{-j})\bm X_j\|} = \frac{\bm X_j^T(\bm I - \bm P_{-j})\hat{\bm e}_j}{\hat \sigma_j \|(\bm I - \bm P_{-j})\bm X_j\|}.
        \]
        Also from Equation \eqref{eqn:perpandparallel},
        \[
            \hat{\bm e}_{\parallel} = \frac{(\bm I - \bm P_{-j})\bm X_j \bm X_j^T (\bm I - \bm P_{-j})\hat{\bm e}_{j}}{\|(\bm I - \bm P_{-j})\bm X_j\|^2}= \frac{\hat{\sigma}_j (\bm I - \bm P_{-j})\bm X_j u_1}{\|(\bm I - \bm P_{-j})\bm X_j\|}.
        \]
    Based on these relations, we have,
        \begin{align}
        \begin{aligned}
            &\|\bm y - \bm X\bm \beta\|^2 + 2n\lambda|\beta_j| + 2n\lambda \sum_{i\neq j}|\beta_i| \\
            &= \|\hat{\bm y}_j + \hat{\bm e}_j - \bm X_{-j}\bm \beta_{-j} - \bm X_j\beta_j\|^2 + 2n\lambda|\beta_j| + 2n\lambda \sum_{i\neq j}|\beta_i| \\
            &= \|\hat{\bm y}_j - \bm X_{-j}\bm \beta_{-j} - \bm P_{-j} \bm X_j\beta_j + \hat{\bm e}_j - (\bm I - \bm P_{-j})\bm X_j \beta_j\|^2  + 2n\lambda|\beta_j| + 2n\lambda \sum_{i\neq j}|\beta_i| \\
            &= \|\hat{\bm y}_j - \bm X_{-j}\bm \beta_{-j} - \bm P_{-j} \bm X_j\beta_j + \hat{\bm e}_{\perp} + \hat{\bm e}_{\parallel} - (\bm I - \bm P_{-j})\bm X_j \beta_j\|^2  \\
            &+ 2n\lambda|\beta_j| + 2n\lambda \sum_{i\neq j}|\beta_i| \textrm{ [}\hat{\bm e}_{\perp}, \hat{\bm e}_{\parallel}\textrm{ are defined above]}\\
            &= \|\hat{\bm y}_j - \bm X_{-j}\bm \beta_{-j} - \bm P_{-j}\bm X_j \beta_j\|^2 + \|\hat{\bm e}_{\perp}\|^2 + \|\hat{\bm e}_{\parallel} - (\bm I - \bm P_{-j})\bm X_j \beta_j\|^2  + 2n\lambda|\beta_j| + 2n\lambda \sum_{i\neq j}|\beta_i|\\
            &= \|\hat{\bm y}_j - \bm X_{-j}\bm \beta_{-j} - \bm P_{-j}\bm X_j \beta_j\|^2 + \|\hat{\bm e}_{\perp}\|^2 + \left(\frac{u_1 \hat \sigma_j}{\|(\bm I - \bm P_{-j})\bm X_j\|} - \beta_j\right)^2\|(\bm I - \bm P_{-j})\bm X_j\|^2  \\
            &+ 2n\lambda|\beta_j| + 2n\lambda \sum_{i\neq j}|\beta_i| \\
            &= \tilde{f}( \bm \beta; \bm y, \bm X)  + \left(\frac{u_1 \hat \sigma_j}{\|(\bm I - \bm P_{-j})\bm X_j\|} - \beta_j\right)^2\|(\bm I - \bm P_{-j})\bm X_j\|^2,
        \end{aligned}
        \end{align}
        $\tilde f(\bm \beta; \bm y, \bm X)$ denotes the expression it is replacing, and whenever the context is clear, we will use $\tilde f$ to denote $\tilde f(\bm \beta;\bm y, \bm X)$. Note that $\tilde f$ also depends on $\lambda>0$, but for compactness, we have suppressed this in the notation as in the following lines we will not be interested in the behavior of $\tilde f$ as a function of $\lambda$. In fact, we will only analyze $\tilde f$ as a function of the argument $\bm \beta$. Thus we have,
        \begin{align*}
            &\underset{\bm \beta}{\arg\min}\left(\|\bm y - \bm X\bm \beta\|^2 + 2n\lambda|\beta_j| + 2n\lambda \sum_{i\neq j}|\beta_i|\right)\\
            &= \underset{\bm \beta}{\arg\min}\left(\tilde{f} + \left(\frac{\hat \sigma_j u_1}{\|(\bm I - \bm P_{-j})\bm X_j\|} - \beta_j\right)^2\|(\bm I - \bm P_{-j})\bm X_j\|^2\right).
        \end{align*}
        Define 
        \begin{align}
        \label{eqn:hatbeta_a}
            \hat{\bm{\beta}}(a) := \underset{\bm \beta}{\arg\min}\left(\underbrace{\tilde{f} + \left(a - \beta_j\right)^2\|(\bm I - \bm P_{-j})\bm X_j\|^2}_{=:U( a,\bm \beta; \bm y, \bm X)}\right).
        \end{align}
        Thus, $\hat{\beta}_j\left(\frac{\hat\beta_{j,\mathrm{OLS}}}{\|(\bm I - \bm P_{-j})\bm X_j\|^2}\right)$ is just the LASSO estimate of $\beta_j$, $\hat \beta^{\lambda}_j$ and hence, it suffices to show that $\hat \beta^{\lambda}_j(a)$ is a non-decreasing function of $a$. As is the case with the function, $\tilde f$, we will also be primarily be interested in the behavior of $U$ as a function of the arguments, $(a,\bm \beta)$. We start with listing some properties of the function, $U$. 
        
        First, note that $U$ is a continuous, convex function in its arguments and is strictly convex in $a$ for any fixed value of $\bm \beta$. \color{black}Next, we will show that $\hat{\bm \beta}(a)$ is continuous in $a$ by invoking Lemma \ref{lem:minima_cont}. In order to do so, observe that $(a,\bm \beta)\mapsto U(a, \bm \beta)$ is a continuous function satisfying item 1 of the requirements. Next, observe that $U(a, \bm \beta)\geq \tilde f(\bm \beta)\geq 2n\lambda \|\bm \beta\|_1$ for any $a$ and $\bm \beta$, and note that $2n\lambda \|\bm \beta\|_1$ does not depend on $a$. Thus, noting that $\lambda>0$, for any $d$ and $a^*$, we have that $\inf_{|a-a^*|\leq d}U(a,\bm \beta)\geq 2n\lambda \|\bm \beta\|_1\to \infty$, as $\|\bm \beta\|\to \infty$. This satisfies item 2, and we can now invoke Lemma \ref{lem:minima_cont} to conclude the continuity of $a\mapsto \hat{\bm \beta}(a)$.\color{black}
        
        Next, note that the only non-differentiable component in the expression of $U$ is the $\ell_1$-penalty of $\bm \beta$, which implies that the $U$ has a partial derivative in $\beta_j$ at all non-zero values of $\beta_j$. Consider a value of $a$ such that $\hat\beta_j(a)\neq 0$ and let $\mathcal A(a) = \{ i\in  [1:k]\setminus \{j\} : \hat \beta_i(a)\neq 0\}$ denote the active set among the remaining variables. 
        
        Note that because $\hat {\bm \beta}(a)$ minimizes $U$, and $\forall i\notin \mathcal A(a)\cup \{j\}$, $\hat {\bm \beta}_j(a) = 0$, it holds that $\hat {\bm \beta}_{\mathcal A(a) \cup \{j\}}(a)$ is a minimizer of $U(a,\bm \gamma; \bm y, \bm X_{\mathcal A(a)\cup \{j\}})$, where now $\bm \gamma$ is a vector of length $|\mathcal A(a)\cup \{j\}|$. For a proof, see Lemma 1 of \citet[Section 3.2]{dcrt}. Because all the entries of $\hat {\bm \beta}_{\mathcal A(a)\cup \{j\}}$ are non-zero, $U(a,\bm \gamma; \bm y, \bm X_{\mathcal A(a)\cup \{j\}})$ is differentiable in $\bm\gamma$ at $\hat {\bm \beta}_{\mathcal A(a)\cup \{j\}}$, and because the latter is a minimizer, an appeal to the first-order stationary conditions yield that for any $i\in \mathcal A(a)$,
        \begin{align*}
            &\frac{\partial }{\partial \gamma_i} U(a,\bm \gamma; \bm y, \bm X_{\mathcal A(a)\cup \{j\}})\big\vert_{\bm \gamma = \hat{\bm \beta}_{\mathcal A(a) \cup \{j\} }}(a)=0\\
            \implies & -\bm X_i^T(\hat{\bm y}_j - \bm X_{\mathcal A(a)}\hat{\bm \beta}_{\mathcal A(a)}(a) - \bm P_{-j}\bm X_j \hat \beta^{\lambda}_j(a)) + 2n\lambda \mathrm{sign}(\hat \beta_i(a)) = 0
        \end{align*}
        Note that from the continuity of $\hat{\bm \beta}(a)$ in the neighborhood of $a$ where $\mathcal A(a)$ does not change, the sign of the active variables also remain constant so that $\mathrm{sign}(\hat \beta_i(a))$ is a constant in that neighborhood, $\forall i \in \mathcal A(a)\cup \{j\}$. Thus, differentiating the above equation both the sides with respect to $a$ yields,
        \begin{align*}
            &\bm X_i^T\bm X_{\mathcal A(a)}\frac{\partial }{\partial a}\hat{\bm \beta}_{\mathcal A(a)} + \bm X_{i}^T\bm P_{-j}\bm X_j \frac{\partial }{\partial a}\hat\beta_j(a)=0, \forall i\in \mathcal A(a)\\
            \implies & \bm X_{\mathcal A(a)}^T\bm X_{\mathcal A(a)}\frac{\partial }{\partial a}\hat{\bm \beta}_{\mathcal A(a)} + \bm X_{\mathcal A(a)}^T\bm P_{-j}\bm X_j \frac{\partial }{\partial a}\hat\beta_j(a)=\bm 0\\
            \implies &\frac{\partial }{\partial a}\hat{\bm \beta}_{\mathcal A(a)} = -\left(\bm X_{\mathcal A(a)}^T\bm X_{\mathcal A(a)}\right)^{-1}\bm X_{\mathcal A(a)}^T \bm P_{-j}\bm X_j \frac{\partial }{\partial a}\hat{\beta}_j(a)
        \end{align*}
        Similarly using the first-order stationary condition on the index $j$ yields,
        \begin{align*}
            &-\bm X_j^T\bm P_{-j}(\hat{\bm y}_j - \bm X_{\mathcal A(a)}\hat{\bm \beta}_{\mathcal A(a)}(a) - \bm P_{-j}\bm X_j \hat \beta^{\lambda}_j(a)) - (a-\hat \beta^{\lambda}_j(a))\|(\bm I - \bm P_{-j})\bm X_j\|^2 =0\\
            \implies & \bm X_j^T \bm X_{\mathcal A(a)}\frac{\partial }{\partial a}\hat{\bm \beta}_{\mathcal A(a)}(a)+\bm X_j^T \bm P_{-j}\bm X_j \frac{\partial }{\partial a}\hat \beta^{\lambda}_j(a)- \left(1- \frac{\partial }{\partial a}\hat \beta^{\lambda}_j(a)\right)\|(\bm I - \bm P_{-j})(\bm X_j)\|^2 = 0\\
            \implies & -\bm X_j^T \bm X_{\mathcal A(a)}\left(\bm X_{\mathcal A(a)}^T\bm X_{\mathcal A(a)}\right)^{-1}\bm X_{\mathcal A(a)}^T \bm P_{-j}\bm X_j \frac{\partial }{\partial a}\hat{\beta}_j(a)+\bm X_j^T\bm X_j \frac{\partial }{\partial a}\hat \beta^{\lambda}_j(a) = \|(\bm I - \bm P_{-j})(\bm X_j)\|^2\\
             \implies & -\bm X_j^T \bm P_{\mathcal A(a)} \bm P_{-j}\bm X_j \frac{\partial }{\partial a}\hat{\beta}_j(a)+\|\bm X_j\|^2 \frac{\partial }{\partial a}\hat \beta^{\lambda}_j(a) = \|(\bm I - \bm P_{-j})(\bm X_j)\|^2\\
             \implies & -\bm X_j^T  \bm P_{-j}\bm P_{\mathcal A(a)} \bm P_{-j}\bm X_j \frac{\partial }{\partial a}\hat{\beta}_j(a)+\|\bm X_j\|^2 \frac{\partial }{\partial a}\hat \beta^{\lambda}_j(a) = \|(\bm I - \bm P_{-j})(\bm X_j)\|^2\\
             \implies & -\bm X_j^T  \bm P_{-j}\bm P_{\mathcal A(a)}\bm P_{\mathcal A(a)} \bm P_{-j}\bm X_j \frac{\partial }{\partial a}\hat{\beta}_j(a)+\|\bm X_j\|^2 \frac{\partial }{\partial a}\hat \beta^{\lambda}_j(a) = \|(\bm I - \bm P_{-j})(\bm X_j)\|^2\\
            \implies & -\|\bm P_{\mathcal A(a)} \bm P_{-j}\bm X_j\|^2 \frac{\partial }{\partial a}\hat{\beta}_j(a)+\|\bm X_j\|^2 \frac{\partial }{\partial a}\hat \beta^{\lambda}_j(a) = \|(\bm I - \bm P_{-j})(\bm X_j)\|^2\\
             \implies & \frac{\partial }{\partial a}\hat \beta^{\lambda}_j(a) = \frac{\|(\bm I - \bm P_{-j})(\bm X_j)\|^2}{\|\bm X_j\|^2 - \|\bm P_{\mathcal A(a)} \bm P_{-j}\bm X_j\|^2},
        \end{align*}
        
        which is positive (as $\|\bm P_{\mathcal A(a)} \bm P_{-j}\bm X_j\|^2\leq \| \bm P_{-j}\bm X_j\|^2\leq \|\bm X_j\|^2$), whenever $\hat \beta^{\lambda}_j(a)\neq 0$. Hence, for an $a\in \mathbb R$, either $\hat \beta^{\lambda}_j(a) = 0$ or $\frac{\partial \hat\beta_j(a)}{\partial a}>0$, showing that $\hat \beta^{\lambda}_j$ is locally increasing around $a$ in the latter case. Now define,
        \[
            f_{\bm S^{(j)}}(u) = \hat \beta_j^\lambda\left(u\cdot \frac{\hat \sigma_j}{\|(\bm I - \bm P_{-j})\bm X_j\|}\right),
        \]
        and note that, $\hat \beta_j^{\lambda} = f_{\bm S^{(j)}}(u_1)$. Furthermore, note that $f_{\bm S^{(j)}}$ is a functional of the sufficient statistic, $\bm S^{(j)}$, and using the arguments above, is a continuous, piece-wise linear function that is increasing in the region, $\{u: f_{\bm S^{(j)}}(u) \neq 0\}$. This establishes all the claims about $f_{\bm S^{(j)}}$ in the statement of Theorem \ref{thm:distexpression} except for \eqref{eqn:lambda_characterization}, which we turn to next.

\end{proof}

\subsubsection{Proof of Equation \eqref{eqn:lambda_characterization} in Theorem \ref{thm:distexpression}}
\label{sec:proof_distexpression_2}
To prove Equation \eqref{eqn:lambda_characterization}, we first start with an intermediate result.        
\begin{theorem}
\label{thm:eventequivalence}
   For data $(\bm y, \bm X)$, with $\bm X$ full column-rank and $\lambda>0$, define $\hat{\bm \beta}_*^\lambda(b)$ to be $b$ at the $j^{\mathrm{th}}$ coordinate and $\hat{\bm \beta}_{-j}^\lambda(b)$ on the rest, where, $\hat{\bm \beta}_{-j}^\lambda(b)$ is defined as in Equation \eqref{eqn:betax}. Also let
    \begin{equation}
    \label{eqn:lassoobjective}
        f_{\lambda}(\bm y;\bm \beta) := \frac{1}{2n}\|\bm y - \bm X \bm \beta\|^2 + \lambda \|\bm \beta\|_1,
    \end{equation}
    be the LASSO objective function. 
    Then for any $b\in \mathbb R$, the following are equivalent
    \begin{enumerate}[label = (\alph*)]
        \item $0\in \partial_{\beta_j} f_{\lambda}({\bm y}; {\bm \beta})\big\lvert_{\bm \beta = \hat{\bm\beta}_*^\lambda(b)}$
        \item $\hat{\bm \beta}^\lambda = \hat{\bm \beta}_*^\lambda(b)$
        \item $\hat \beta_j^\lambda = b$
    \end{enumerate}

\end{theorem}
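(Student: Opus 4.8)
The plan is to prove the three-way equivalence through the block structure of the LASSO objective, using only elementary convex analysis. Two facts will be used throughout. First, since $\bm X$ is full column-rank, the quadratic term of $f_\lambda(\bm y;\cdot)$ has positive-definite Hessian, so $f_\lambda(\bm y;\cdot)$ is strictly convex and $\hat{\bm\beta}^\lambda$ is its \emph{unique} minimizer; likewise, $\bm X_{-j}$ being full column-rank makes the objective in \eqref{eqn:betax} strictly convex, so $\hat{\bm\beta}^\lambda_{-j}(b)$ is \emph{unique}. Second, since the nonsmooth part $\lambda\|\bm\beta\|_1=\lambda\sum_k|\beta_k|$ is separable, the subdifferential of $f_\lambda(\bm y;\cdot)$ at any point is the Cartesian product of the one-dimensional partial subdifferentials $\partial_{\beta_k}f_\lambda(\bm y;\cdot)$; combined with Fermat's rule, this means a point is the global minimizer $\hat{\bm\beta}^\lambda$ if and only if $0\in\partial_{\beta_k}f_\lambda(\bm y;\cdot)$ holds \emph{at that point} for every coordinate $k$.

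The first step I would isolate is the observation that $\hat{\bm\beta}^\lambda_*(b)$ already satisfies ``most'' of the optimality conditions: freezing $\beta_j=b$ changes $f_\lambda$ only by reparametrizing the quadratic term and adding the constant $\lambda|b|$, so minimizing $f_\lambda(\bm y;\cdot)$ over $\bm\beta_{-j}$ with $\beta_j=b$ is exactly problem \eqref{eqn:betax}; hence the first-order conditions for \eqref{eqn:betax} give $0\in\partial_{\beta_k}f_\lambda(\bm y;\bm\beta)\big|_{\bm\beta=\hat{\bm\beta}^\lambda_*(b)}$ for all $k\neq j$, automatically. Given this, the implications fall out quickly. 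For (a)$\Rightarrow$(b): adding the hypothesis $0\in\partial_{\beta_j}f_\lambda(\bm y;\cdot)\big|_{\hat{\bm\beta}^\lambda_*(b)}$ supplies the one missing coordinate, so \emph{all} $d$ coordinate conditions hold at $\hat{\bm\beta}^\lambda_*(b)$, making it a global minimizer; uniqueness then forces $\hat{\bm\beta}^\lambda=\hat{\bm\beta}^\lambda_*(b)$. The converse (b)$\Rightarrow$(a) is immediate: $\hat{\bm\beta}^\lambda$ is the global minimizer, so $\bm 0\in\partial f_\lambda(\bm y;\cdot)\big|_{\hat{\bm\beta}^\lambda}$ and its $j$-th coordinate condition is exactly (a). For (b)$\Rightarrow$(c), read off coordinate $j$; for (c)$\Rightarrow$(b), note $\hat{\bm\beta}^\lambda$ globally minimizes $f_\lambda$ and hence also minimizes it over the slice $\{\bm\beta:\beta_j=b\}$, so by uniqueness of the minimizer in \eqref{eqn:betax} we get $\hat{\bm\beta}^\lambda_{-j}=\hat{\bm\beta}^\lambda_{-j}(b)$, which together with $\hat\beta_j^\lambda=b$ gives $\hat{\bm\beta}^\lambda=\hat{\bm\beta}^\lambda_*(b)$. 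This establishes (a)$\Leftrightarrow$(b)$\Leftrightarrow$(c).

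The one place to be careful — the main (modest) obstacle — is the bookkeeping connecting the partial subdifferentials: one must verify that the $k$-th partial subgradient appearing in the optimality conditions for \eqref{eqn:betax} is literally the $k$-th factor of $\partial f_\lambda(\bm y;\cdot)$ evaluated at $\hat{\bm\beta}^\lambda_*(b)$, and that the same holds for the $j$-th coordinate, where, when $b=0$, $\partial_{\beta_j}f_\lambda$ is a nondegenerate interval (so condition (a) is genuinely weaker than any single equality, which is ultimately why $\hat\beta_j^\lambda=0$ occurs with positive probability — the point mass discussed in Section~\ref{sec:smoothing}). Everything else — strict convexity and uniqueness from full column-rank, separability of the $\ell_1$ subdifferential, and Fermat's rule for convex minimization — is standard.
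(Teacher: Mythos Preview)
Your proof is correct. The approach differs from the paper's in one notable way: for the implication (a)$\Rightarrow$(b), the paper argues via blockwise coordinate descent, observing that the iterates starting from $\hat{\bm\beta}_*^\lambda(b)$ are constant and then invoking \cite[Proposition~5.1]{tseng2001} to conclude this fixed point is a global minimizer; similarly, (c)$\Rightarrow$(b) is reduced to the same coordinate-descent fixed-point argument. You instead verify all $d$ coordinatewise stationarity conditions directly by exploiting separability of the $\ell_1$ subdifferential, and for (c)$\Rightarrow$(b) you use the simpler observation that a global minimizer restricts to a minimizer on the slice $\{\beta_j=b\}$, combined with uniqueness of that restricted minimizer. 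Your route is more elementary in that it avoids citing an external convergence result, while the paper's coordinate-descent framing has the mild advantage of being reusable in settings where separability of the subdifferential is less transparent; in the present LASSO setting both arguments are short and the difference is largely stylistic.
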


\begin{proof}[Proof of Theorem \ref{thm:eventequivalence}]
        We first show the equivalence of (a) and (b).
        Assume that, $0 \in  \partial_{\beta_j} f_{\lambda}({\bm y}; {\bm \beta})\big\lvert_{\bm \beta = \hat{\bm\beta}_*^\lambda(b)} $. The convexity of $f_{\lambda}({\bm y}; (\bm \beta_{-j} = \hat {\bm \beta}_{-j}^\lambda(b), \beta_j))$ in $\beta_j$ shows that the $j^{\mathrm{th}}$ entry of $\hat {\bm \beta}_*^\lambda(b)$ is the minimizer of $f_{\lambda}({\bm y}; (\bm \beta_{-j} = \bm \hat {\bm \beta}_{-j}^\lambda(b), \beta_j))$ in $\beta_j$. That is,
        \begin{align*}
                 \underset{\beta_{j}}{\arg\min} f_{\lambda}({\bm y};(\bm \beta_{-j} = \hat{\bm \beta}_{-j}^\lambda(b), \beta_j))  = b.
        \end{align*}
        But by the definition of $\hat {\bm \beta}^{\lambda}_{-j}(b)$, we know that
        \[
             \underset{\bm \beta_{-j}}{\arg\min} f_{\lambda}({\bm y};(\bm \beta_{-j}, \beta_j = b)) = \hat{\bm \beta}_{-j}^\lambda(b).
        \]
        Thus, if one runs a blockwise coordinate descent with blocks $\{j\}$ and $[1:p]\setminus \{j\}$ 
        starting at $\hat{\bm \beta}_*^\lambda(b)$, we see that the iterates will be constant at $\hat{\bm \beta}_*^\lambda(b)$, thereby implying that this is a limit point of the iterates. One can now invoke \citet[Proposition 5.1]{tseng2001} (the conditions for applying this proposition follow directly as $f_{\lambda}$ can be separated into the squared error loss and the non-differentiable $\ell_1$-penalty) to conclude that,
        \[
            \hat{\bm \beta} = \underset{\bm \beta}{\arg\min}f_{\lambda}({\bm y};\bm \beta)= \hat{\bm \beta}_*^\lambda(b),
        \]
        which establishes the implication of (a) to (b). The reverse implication follows directly from the fact that $\hat{\bm \beta}$ is the optimizer of the LASSO objective, so that each coordinate, and in particular the $j^{\mathrm{th}}$ coordinate of the sub-gradient, contains 0, that is, $0\in \partial_{\beta_j} f_{\lambda}({\bm y}; {\bm \beta})\big\lvert_{\bm \beta = \hat{\bm\beta}}$. The fact that $ \hat{\bm \beta}= \hat{\bm \beta}_*^\lambda(b)$ completes the argument. It is also straightforward to see that (b) implies (c). To show that (c) implies (b), note that one can use the blockwise coordinate descent argument used above to conclude that $\hat {\bm \beta}^{\lambda} = \hat {\bm \beta}^{\lambda}(\hat \beta_j^\lambda)$, and then use the hypothesis of (c) (that is, $\hat \beta_j^\lambda = b$) to conclude (b).
\end{proof}
Hence, Theorem \ref{thm:eventequivalence} now implies that $\hat \beta_j^{\lambda} = b$ if and only if $0\in \partial_{\beta_j}f_{\lambda}(\bm y; \bm \beta)\lvert _{\bm \beta = \hat{\bm \beta}_*^\lambda(b)}$. We will now prove item 2 of Theorem \ref{thm:distexpression} by evaluating this sub-gradient.

\begin{proof}[Proof of item 2 of Theorem \ref{thm:distexpression}]
    Note that we have the following decomposition,
    \begin{align*}
        &f_{\lambda}({\bm  y}; \bm \beta)\\
        &= \frac{1}{2n}\|{\bm  y} - \bm X \bm \beta\|^2 + \lambda \|\bm \beta\|_1\\
        &= \frac{1}{2n}\|\hat{\bm y}_j +\hat \sigma_j \bm V \bm u - \bm P_{-j}\bm X_j \beta_j - (\bm I - \bm P_{-j})\bm X_j \beta_j - \bm X_{-j}\bm \beta_{-j}\|^2 + \lambda \|\bm \beta\|_1\\
        &= \frac{1}{2n}\|\hat{\bm y}_j -\bm P_{-j}\bm X_j\beta_j- \bm X_{-j}\bm \beta_{-j} \|^2 + \frac{1}{2n}\|\hat \sigma_j \bm V \bm u - (\bm I - \bm P_{-j})\bm X_j \beta_j \| + \lambda |\beta_j| + \lambda \sum_{i\neq j}|\beta_i|
    \end{align*}
    Now define,
    \[
        s(\beta_j) = \begin{cases}
            [-1,1], &\beta_j = 0\\
            \mathrm{sign}(\beta_j), & \beta_j \neq 0
        \end{cases}.
    \]
    Then we have,
    \begin{align*}
        & \partial_{\beta_j}f_{\lambda}({\bm  y}; \bm \beta)\\
        &= -\frac{1}{n}\bm X_j^T\bm P_{-j}(\hat{\bm y}_j - \bm P_{-j}\bm X_j\beta_j- \bm X_{-j}\bm \beta_{-j}) -\frac{1}{n}\bm X_j^T (\bm I - \bm P_{-j}) \left(\hat \sigma_j \bm V \bm u - (\bm I - \bm P_{-j})\bm X_j \beta_j \right) + \lambda s(\beta_j)\\
        &= -\frac{1}{n}\bm X_j^T \hat{\bm y}_j -\frac{\hat \sigma_j}{n}\bm X_j^T \bm V\bm u + \frac{\bm X_j^T \bm P_{-j}\bm X_j\beta_j  + \bm X_j^T(\bm I - \bm P_{-j})\bm X_j\beta_j + \bm X_j^T \bm X_{-j}\bm \beta_{-j}}{n} + \lambda s(\beta_j)\\
        &= -\frac{1}{n}\bm X_j^T \hat{\bm y}_j -\frac{\hat \sigma_j}{n}\bm X_j^T \bm V\bm u + \frac{\bm X_j^T \bm X_j\beta_j + \bm X_j^T \bm X_{-j}\bm \beta_{-j}}{n} + \lambda s(\beta_j)\\
        &= -\frac{1}{n}\bm X_j^T \hat{\bm y}_j -\frac{\hat \sigma_j}{n}\bm X_j^T \bm V\bm u + \frac{\bm X_j^T \bm X\bm \beta}{n} + \lambda s(\beta_j)\\
        &= -\frac{1}{n}\bm X_j^T (\hat{\bm y}_j - \bm X\bm \beta) - \frac{\hat \sigma_j}{n}\bm X_j^T\bm V \bm u + \lambda s(\beta_j).
    \end{align*}
    From the calculations in Section \ref{sec:proof_pvalequivalence},
    \[
        \bm X_j^T \bm V \bm u = \bm X_j^T \sum_{i=1}^{n-d+1}\bm v_i u_i = \bm X_j^T\bm v_ju_1 = \frac{\bm X_j^T(\bm I-\bm P_{-j})\bm X_j}{\|(\bm I - \bm P_{-j})\bm X_j\|}u_1 = \|(\bm I - \bm P_{-j})\bm X_j\|u_1,
    \]
    and thus we have,
    \begin{align*}
        &\partial_{\beta_j}f_{\lambda}({\bm  y};\bm \beta)\\
        &= -\frac{1}{n}\bm X_j^T (\hat{\bm y}_j - \bm X\bm \beta) - \frac{\hat \sigma_j\|(\bm I - \bm P_{-j})\bm X_j\|}{n}u_1 + \lambda s(\beta_j).
    \end{align*}
    Setting $\bm \beta = \hat{\bm \beta}^*(b)$ for a $b$, the above equation along with Theorem \ref{thm:eventequivalence} establishes that for $b\neq 0$, $\hat \beta_j^{\lambda} = b$ if and only if,
    \[
        u_1 = \frac{-\bm X_j^T(\hat{\bm y}_j - b \bm X_j - \bm X_{-j}\hat {\bm \beta}_{-j}(b))+n\lambda \mathrm{sign}(b)}{\hat \sigma_j \|(\bm I - \bm P_{-j})\bm X_j\|} = \Lambda_j(b, \mathrm{sign}(b)),
    \]
    while for $b\neq 0$, $\hat \beta_j^\lambda = 0$ if and only if,
    \[
        u_1 \in \left[ \frac{-\bm X_j^T(\hat{\bm y}_j - b \bm X_j - \bm X_{-j}\hat {\bm \beta}_{-j}(b))\pm n\lambda }{\hat \sigma_j \|(\bm I - \bm P_{-j})\bm X_j\|}\right] = [\Lambda_j(0,-1), \Lambda_j(0,1)].
    \]
    Noting that $\hat \beta_j^\lambda = f_{\bm S^{(j)}}(u_1)$ completes the proof of item 2.
\end{proof}

\subsection{Proof of the fact that $T_j$ is independent of the sufficient statistic, $\mathbf{ S}_j$ under $H_{j}:\beta_j = 0$}
\label{sec:app_ttest_indep_proof}

\begin{proof}
In Section \ref{sec:proof_pvalequivalence}, we showed that,
\begin{align*}
    \Tstat &\propto  \frac{\hat\beta_{j,\mathrm{OLS}}}{\sqrt{\hat \sigma_j^2 - {\hat\beta_{j,\mathrm{OLS}}}^2}}\\
    &= \frac{\frac{\hat\beta_{j,\mathrm{OLS}}}{\hat \sigma_j}}{\sqrt{1 - \left(\frac{\hat\beta_{j,\mathrm{OLS}}}{\hat \sigma_j}\right)^2}},
\end{align*}
where the proportionality constant consists of terms that entirely depend on the design matrix. Thus, the only stochastic component in the expression for $\Tstat$ is $\frac{\hat\beta_{j,\mathrm{OLS}}}{\hat \sigma_j} = \frac{\bm X_j^T(\bm I - \bm P_{-j})\bm y}{\|(\bm I - \bm P_{-j})\bm X_j\|^2\hat \sigma_j} =: \frac{\bm X_j^T}{\|(\bm I - \bm P_{-j})\bm X_j\|^2}\bm L_j$, where $\bm L_j$ equals the term its replacing. Thus, it suffices to show that under $H_{j}$, the unconditional distribution of $\bm L_j$ is the same as its conditional distribution, $\bm L_j\mid \bm S^{(j)}$.\\

Note that from Equation \eqref{eqn:conddist_lm}, we have that under $H_{j}$
\[
    \bm L_j\mid \bm S^{(j)} \sim \bm V \bm u,
\]
where, $\bm u$ is uniformly distributed over $\mathbb S^{n-d}$.
Now, let us evaluate the unconditional distribution. Under $H_{j}$, we can write,
\[
    \bm y \sim \bm X_{-j}\bm \beta_{-j} + \bm \epsilon,
\]
for some, $\bm \epsilon \sim N_n(\bm 0, \sigma^2 \bm I_n)$. Then, because $\bm V$ denotes a matrix with columns forming an orthonormal basis for the complement of the columnspace of $\bm X_{-j}$, we have, $\bm I - \bm P_{-j} = \bm V\bm V^T$. Thus, we have under $H_{j}$,
\begin{align*}
    \bm L_j &= \frac{\bm V\bm V^T \bm y}{\|\bm V \bm V^T \bm y\|}\\
    &= \frac{\bm V\bm V^T \bm \epsilon}{\|\bm V \bm V^T \bm \epsilon\|}\\
    &= \bm V \frac{\bm V^T \bm \epsilon}{\|\bm V^T \bm \epsilon\|}. \textrm{ [since orthogonal transformations do not change the norm]}.
\end{align*}
Now, since $\bm V$ is orthogonal, we have, $\bm v = \bm V^T \bm \epsilon \sim \mathcal{N}(\bm 0, \sigma^2 \bm I_{rank(\bm V)}) =  \mathcal{N}(\bm 0, \sigma^2 \bm I_{n-d+1})$. Thus we have under $H_{j}$,
\[
   \bm L_j =  \bm V \frac{\bm v}{\|\bm v\|}  = \bm V \bm u^*,
\]
where, $\bm u^* = \bm v/\|\bm v\|$ is uniformly distributed over $\mathbb S^{n-d}$. This completes the proof.
\end{proof}

\subsection{Proof of the map of $U_j$ to the $t$-distribution}
\label{sec:proof_uquantiles}
In this section, we will prove that $ \bm u\sim \mathrm{Unif}(\mathbb S^m)$ implies that
\[
    \frac{\sqrt{m}\cdot u_j}{\sqrt{1-u_j^2}}\sim t_{m}.
\]
The proof follows from the following representation of $\bm u$: Let $\bm X\sim N_{m+1}(\bm 0, \bm I_{m+1})$, then we have that,
\[
    u_j \stackrel{d}{=}\frac{X_j}{\sqrt{X_j^2+\sum_{i\neq j} X_i^2}}\implies \sqrt{m}\cdot\frac{u}{\sqrt{1-u^2}}\stackrel{d}{=} \frac{\sqrt{m}X_j}{\sqrt{\sum_{i\neq j} X_i^2}}.
\]
The proof follows from the fact that $X_j\sim N(0,1)$ and $\sum_{i\neq j}X_i^2\sim \chi^2_{m}$, independent of $X_j$.

\color{black}
\subsection{Supplementary details for and proof of Theorem \ref{thm:asymp}}
\label{sec:proof_asymp}
Before presenting a proof of Theorem~\ref{thm:asymp}, we introduce some notation for and formalize cross-validation, and present a technical assumption about it.

\subsubsection{Formalization of cross-validation for Theorem~\ref{thm:asymp}}
\label{sec:missing_details}
For a general $(\bm y, \bm X_{n'\times d'})$, we formalize cross-validation as follows: For $m\in \mathbb N$, let $\Xi = \{S_1,\dots, S_m\}$ denote a partition of $[1:n']$ into $m$ disjoint non-empty sub-groups, and let $\tilde{\bm z}\in \mathbb R^{n'-d'+1}$. Define the multivariate function $(\bm y, \bm X)\mapsto\bm l(\bm y, \bm X)$ as the function that maps the data to a set of (positive) candidate $\lambda$ values. Then the value chosen by cross-validation, denoted $\hat\lambda(\bm y,\bm X)$, is the element of $\bm l(\bm y, \bm X)$ that minimizes (with ties broken arbitrarily) the average error across $S_1,\dots,S_m$, where the error for $S_j$ is computed by training the LASSO (with solution chosen arbitrarily among minimizers if non-unique) on the data points $[1:n']\setminus S_j$ and computing the mean squared error of that fitted LASSO on the observations in $S_j$. Applying this to the cross-validation procedure in the $\ell$-test, the chosen $\hat\lambda$ is given by $\hat\lambda\left(\bm P_{-j}\bm y + \sigma\bm V\tilde{\bm z}, \bm X\right)$ for $\Xi$ and $\tilde{\bm z}$ chosen independently of $(\bm y,\bm X)$. 

Since the following assumption is only used in Theorem~\ref{thm:asymp}, we will write $\Xi$ and $\tilde{\bm z}$ as $\Xi^{(n)}$ and $\tilde{z}^{(n)}$ (the latter is scalar-valued because $n'=d'=d$ in the setting of Theorem~\ref{thm:asymp}).


\begin{assm}
    \label{assm:cv}
    \
    \begin{enumerate}
        \item The random variables $\Xi^{(n)}$ and $\tilde{z}^{(n)}$ are sampled independently of one another and of $(\hat{\bm \theta}_n, \hat{\bm \Sigma}_n)$ from a distribution that does not depend on $n$.
        \item Letting $r$ denote the maximum integer such that with probability 1, all elements of $\Xi^{(n)}$ have size at least $r$, then any matrix comprised of $r$ or more rows of $\bm \Sigma^{-1/2}$ has columns in general position.
        \item For $\bm X = \bm \Sigma^{-1/2}$ and any (possibly, randomly drawn) $\tilde z$, any partition $\Xi$ of $[1:d]$, any $\bm \beta\in \mathbb R^{d}$, and $\bm y\sim \mathcal N(\bm \Sigma^{-1/2}\bm \beta, \sigma^2\bm I_d)$ (drawn independently of $\tilde {z}$, if $\tilde z$ is drawn randomly), with probability 1, no two distinct $\lambda_1,\lambda_2\in \bm l(\bm P_{-j}\bm y + \sigma\bm V \tilde { z},\bm X)$ have the same cross-validated error. 
        \item (a) The map $\bm l$ outputs a vector of positive values whose dimension does not depend on its inputs
        and (b) the map $(\bm y, \bm X)\mapsto \bm l(\bm y, \bm X)$ is continuous.
    \end{enumerate}
\end{assm}

Note that items 1 and 4 are entirely under the control of the analyst and hold for typical choices.
For example, the function \texttt{glmnet()} in the \textbf{R} package \texttt{glmnet} \citep{glmnet1} by default sets the grid of candidate $\lambda$ values to be 100 values equally spaced on the logarithmic scale between $0.0001\bm X^T\bm y/n$ and $\bm X^T\bm y/n$. Items 2 and 3 are also quite weak assumptions and are simply needed to ensure asymptotic uniqueness of the LASSO solutions within the cross-validation procedure and of the final minimizer $\hat\lambda$; these are needed to argue continuity of the entire procedure as we do in the proof of Theorem~\ref{thm:asymp}.


\subsubsection{Proof of Theorem~\ref{thm:asymp}}
\begin{proof}[proof of Theorem~\ref{thm:asymp}]
    Let $p_{j,n}$ denote the known-$\sigma$-$\ell$-test $p$-value for testing $H_j:\theta_j = 0$ using response $\sqrt{n}\hat {\bm \Sigma}_n^{-1/2}\hat{\bm \theta}_n$($=:\bm y^{(n)}$) and design matrix $\hat{\bm \Sigma}_n^{-1/2}$ ($=:\bm X^{(n)}$) and $\hat\lambda$ chosen using cross-validation as formalized in Section~\ref{sec:missing_details}.

    First note that we can write our $p$-values as $p_{j,n} = \zeta^{(\Xi^{(n)})}(\bm y^{(n)},\bm X^{(n)},\tilde z^{(n)})$, for an appropriate map $\zeta^{(\Xi^{(n)})}$, where $\Xi^{(n)}$ and $\tilde z^{(n)}$ are described in item 1 of Assumption~\ref{assm:cv}, where the partition and the $\tilde z$ are sampled separately for each $n$. Now, fix any (non-random) $\Xi$ partitioning the rows of $\bm X^*:=\bm \Sigma^{-1/2}$ into $m$ disjoint sets, with each set having size at least $r$. Suppose for this fixed $\Xi$, we consider the sequence $\{\zeta^{(\Xi)}(\bm y^{(n)},\bm X^{(n)},\tilde z)\}_n$, for some $\tilde z$ drawn from the common distribution of $\tilde z^{(n)}$ as stated in item 1 of Assumption~\ref{assm:cv} (let us denote this distribution by $\mu_z$), independent of $\left\{\left(\bm y^{(n)},\bm X^{(n)}\right)\right\}_n$.

    We start with the key lemma showing the continuity of the map $(\bm y, \bm X)\mapsto  \zeta^{(\Xi)}\left(\bm y, \bm X, \tilde{ z}\right)$, whose proof is deferred to Section~\ref{sec:additional_lemma_proofs}.

    \begin{lemma}
    \label{lem:asymp_pval_cont}
    Let $\bm y^*\sim \mathcal N(\bm X^*\bm \theta, \bm I)$ and $\tilde z$ be drawn independently of $\bm y^*$. Then, for the partition $\bm \Xi$ fixed above, and under Assumption~\ref{assm:cv}, the map $(\bm y, \bm X)\mapsto  \zeta^{(\Xi)}\left(\bm y, \bm X, \tilde{ z}\right)$ is continuous at $(\bm y^*, \bm X^*)$, almost surely. Here, the `almost sure' statement is made under the law of $(\bm y^*,\tilde z)$.
    \end{lemma}
    
    Thus, Lemma~\ref{lem:asymp_pval_cont} implies, 
    \begin{align}
        \label{eqn:prob_1_transformation}
        \begin{aligned}
        \mathbb P_{\tilde z,\bm y^*}\left((\bm y, \bm X)\mapsto \zeta^{(\Xi)}\left(\bm y, \bm X, \tilde{z}\right)\textrm{ is continuous at }(\bm y^*, \bm X^*)\right)=1.\textrm{ [law of double expectation]}
        \end{aligned}
    \end{align}
    Now define the event,
    \[
        E = \left\{(\bm y, \bm X, \tilde{z}):\zeta^{(\Xi)}_{\tilde z}\textrm{ is continuous at }(\bm y, \bm X)\right\},
    \]
    where $\zeta^{(\Xi)}_{\tilde z}(\bm y, \bm X) := \zeta^{(\Xi)}(\bm y, \bm X, \tilde z)$ and note that because of Equation~\eqref{eqn:prob_1_transformation},\\ $ \bm E^* = \left\{(\bm y, \bm X,\tilde{z})\in E: \bm X=\bm X^*\right\}\subseteq E$ has probability 1 under the law of $(\bm y^*, \bm X^*,\tilde z)$ (which is given by $\mu^* = \mathcal N\left(\bm X^*\bm \theta,\bm I\right)\otimes\delta_{\left(\bm X^*\right)}\otimes \mu_z$), and hence, so does the set $E$. 

    For the next step, we will need the following auxiliary lemma whose proof is again deferred to Section~\ref{sec:additional_lemma_proofs}.
    \begin{lemma}
    \label{lem:asymp_6}
        Let $W_n\dto W$ and consider a random variable $U$, such that $U\ind\{W_n\}_n$. For a function $(w,u)\mapsto f(w,u)$ define,
        \[
            E = \left\{(w,u):f_u\textrm{ is continuous at }w\right\},
        \]
        where $f_u(w):=f(w,u)$. Let $\mu_W,\mu_U$ denote the laws of $W$ and $U$ respectively. If $\mu_W\otimes\mu_U(E) = 1$, then $f(W_n,U)\dto f(W,U)$. Here, $\mu\otimes\nu$ denotes the product measure formed by independent cross of $\mu$ and $\nu$. 
    \end{lemma}

    Hence, invoking Lemma~\ref{lem:asymp_6}, $\zeta^{(\Xi)}\left(\bm y^{(n)},\bm X^{(n)},\tilde z\right)\dto \zeta^{(\Xi)}\left(\bm y^*, \bm X^*,\tilde z\right)$ for any fixed $\Xi$. Note that $\Xi$ can take only finitely many values and let us denote that finite space of values by $\mathcal F_{\Xi}$. Then, we have that
    \begin{align*}
        &\zeta^{(\varphi)}\left(\bm y^{(n)},\bm X^{(n)},\tilde z\right)\dto \zeta^{(\varphi)}\left(\bm y^*, \bm X^*,\tilde z\right), \forall \varphi\in \mathcal F_{\Xi}\\
        \implies &\mathbb P\left(\zeta^{(\varphi)}\left(\bm y^{(n)},\bm X^{(n)},\tilde z\right)\leq b\right)\to \mathbb P\left(\zeta^{(\varphi)}\left(\bm y^{*},\bm X^{*},\tilde z\right)\leq b\right), \forall b\in D_{\varphi}\textrm{, with }\mu^*(D_{\varphi})=1, \forall \varphi\in \mathcal F_{\Xi}.
    \end{align*}
    Define $D=\cap_{\varphi \in \mathcal F_{\Xi}}D_{\varphi}$ and because $\mathcal F_{\Xi}$ is finite, we conclude that $\mu^*(D) = 1$. Now, let $\tilde \Xi$ be a randomly chosen partition, independent of $\tilde z$ and $\{(\bm y^{(n)}, \bm X^{(n)})\}_n$, then by the definition of $D$, for all $b\in D$,
    \begin{align*}
        \mathbb P\left(\zeta^{(\tilde\Xi)}\left(\bm y^{(n)},\bm X^{(n)},\tilde z\right)\leq b\right)=&\sum_{\varphi\in \mathcal F_{\Xi}} \mathbb P\left(\zeta^{(\varphi)}\left(\bm y^{(n)},\bm X^{(n)},\tilde z\right)\leq b\mid \tilde \Xi = \varphi\right)\mathbb P\left(\tilde\Xi = \varphi\right)\\
        \underset{n\to \infty}{\longrightarrow}&\sum_{\varphi\in \mathcal F_{\Xi}} \mathbb P\left(\zeta^{(\varphi)}\left(\bm y^{*},\bm X^{*},\tilde z\right)\leq b\mid \tilde \Xi = \varphi\right)\mathbb P\left(\tilde \Xi = \varphi\right)\textrm{ [Finiteness of }\mathcal F_{\Xi}]\\
        =&\mathbb P\left(\zeta^{(\tilde\Xi)}\left(\bm y^{*},\bm X^{*},\tilde z\right)\leq b\right).
    \end{align*}
    Finally note that,
    \begin{align*}
        & \mathbb P\left(\zeta^{(\tilde\Xi)}\left(\bm y^{(n)},\bm X^{(n)},\tilde z\right)\leq b\right)\\
        =&  \mathbb P(\zeta^{(\Xi^{(n)})}(\bm y^{(n)},\bm X^{(n)}, \tilde z^{(n)})\leq b)\textrm{ [because }(\Xi^{(n)}, \tilde z^{(n)}, \bm y^{(n)}, \bm X^{(n)})\stackrel{d}{=}(\tilde\Xi, \tilde z, \bm y^{(n)}, \bm X^{(n)})]\\
        =& \mathbb P(p_{j,n}\leq b).
    \end{align*}
    Thus, we have $\forall b\in D$, $ \mathbb P\left( p_{j,n}\leq b\right)\underset{n\to \infty}{\longrightarrow} \mathbb P\left(\zeta^{(\tilde\Xi)}\left(\bm y^{*},\bm X^{*},\tilde z\right)\leq b\right)$. Because $\mu^*(D)=1$, we have,
    \[
        p_{j,n}\dto \zeta^{(\tilde\Xi)}\left(\bm y^{*},\bm X^{*},\tilde z\right).
    \]
    Because this limit is a known-$\sigma$-$\ell$-test $p$-value of a Gaussian linear model, we conclude that $\{p_{j,n}\}$ is an asymptotically valid sequence of $p$-values.
\end{proof}

\subsubsection{Lemma proofs}
\label{sec:additional_lemma_proofs}
\begin{proof}[Proof of Lemma~\ref{lem:asymp_pval_cont}]
    This proof is primarily composed of the following three lemmas, whose proofs are deferred to after the end of this proof.

    \begin{lemma}
        \label{lem:asymp_2}
        If $g(\bm x, \bm y)$ is a continuous function in $(\bm x, \bm y)\in \mathcal X\times \mathcal Y$ and so is $\bm x \mapsto z(\bm x)\in \mathcal Y$, then, $\bm x \mapsto g(\bm x, z(\bm x))$ is also continuous.
    \end{lemma}

    \begin{lemma}
    \label{lem:asymp_4}
        For a linear model with response $\bm y$ and design matrix $\bm X$, let (over-loading notations) $p_j^\lambda(\bm y, \bm X)$ be the known-$\sigma$-$\ell$-test $p$-value with regularizer $\lambda>0$. Let us define $\bm \xi := (\bm y, \bm X, \lambda)$. Then for a $\bm \xi^*= (\bm y^*, \bm X^*, \lambda^*)$ such that $\bm X^*$ is full column-rank, there exists a neighborhood $V$ about $\bm \xi^*$ such that the map $\bm \xi \mapsto p_j^\lambda(\bm y, \bm X)$ is a real-valued function in $V$ and continuous at $\bm \xi^*$.
    \end{lemma}    

    \begin{lemma}
        \label{lem:asymp_5}
        Suppose a matrix $\bm X^*_{n'\times d'}$ satisfies item 2 of Assumption~\ref{assm:cv} and that $\bm y^*\in \mathbb R^{n'}$, a partition $\Xi$ of $[1:n']$ as described in Section~\ref{sec:missing_details}, and $\tilde{\bm z} \in \mathbb R^{n'-d'+1}$ are such that no two $\lambda \in \bm l(\bm P^*_{-j}\bm y^*+\sigma\bm V^*\tilde{\bm z}, \bm X^*)$ give rise to the same cross-validated error (here, $\bm P_{-j}^*,\bm V^*$ are computed based on $\bm X^*$). Furthermore, suppose item 4(a) of  Assumption~\ref{assm:cv} is also satisfied. Then $(\bm y, \bm X)\mapsto \hat{\lambda}$($=\hat \lambda(\bm P_{-j}\bm y + \sigma \bm V\tilde{\bm z})$ that uses the partition $\Xi$ for cross-validation) is continuous at $(\bm y^*, \bm X^*)$.
    \end{lemma}

    Now note $\left(\bm y^{(n)},\bm X^{(n)}\right)\dto \left(\bm y^*,\bm X^*\right)$, with $\bm y^*\sim \mathcal N\left(\bm X^*\bm \theta, \bm I\right)$. Hence, Assumption~\ref{assm:cv}, along with Lemma~\ref{lem:asymp_5} implies that $(\bm y, \bm X)\mapsto \hat \lambda$ is continuous at $(\bm y^*, \bm X^*)$ almost surely, while Lemma~\ref{lem:asymp_4} implies that $(\bm y, \bm X,\lambda)\mapsto p_j^\lambda(\bm y, \bm X)$ is continuous at $(\bm y^*, \bm X^*,\lambda^*)$, for any $\lambda^*$, almost surely. Lemma~\ref{lem:asymp_2} now implies $(\bm y, \bm X)\mapsto  \zeta^{(\Xi)}\left(\bm y, \bm X, \tilde{ z}\right)$ is continuous at $(\bm y^*, \bm X^*)$, almost surely.
\end{proof}

\begin{proof}[Proof of Lemma~\ref{lem:asymp_2}]
    Fix $\epsilon>0$ and $\bm x^*\in \mathcal X$, $\bm y^* = \bm z(\bm x^*)$. Because $(\bm x, \bm y)\mapsto g(\bm x, \bm y)$ is a continuous transformation, we have that there is some $\delta$ such that $\|(\bm x, \bm y) - (\bm x^*, \bm y^*)\|<\delta\implies |g(\bm x, \bm y) - g(\bm x^*, \bm y^*)|<\epsilon$. Now, fix a $\delta_1>0$. Then there is a $0<\delta_2<\delta_1$ such that $\|\bm x- \bm x^*\|<\delta_2\implies |\bm z(\bm x) - \bm z(\bm x^*)|<\delta_1$ because of the continuity of $\bm z$. This also implies that $\|(\bm x, \bm z(\bm x)) - (\bm x^*, \bm z(\bm x^*))\|< \sqrt{\delta_1^2 + \delta_2^2}<\delta_1\sqrt{2}$ for $\bm x$ such that $\|\bm x - \bm x^*\|<\delta_2$. Choose $\delta_1 = \delta/\sqrt{2}$ and let $\delta_2'$ denote the corresponding $\delta_2$. Then, we have that 
    \[
        \|\bm x - \bm x^*\|<\delta_2 \implies \|(\bm x, \bm z(\bm x)) - (\bm x^*, \bm z(\bm x^*))\|<\delta \implies |g(\bm x, \bm z(\bm x)) - g(\bm x^*, \bm z(\bm x^*))|<\epsilon.
    \]
    This completes the proof.
\end{proof}

\begin{proof}[Proof of Lemma~\ref{lem:asymp_4}]
    Observe that from the expression of $p_j^\lambda(\bm y, \bm X)$,
    \begin{align}
    \label{eqn:ell_pval_lambda}
        p_j^\lambda(\bm y, \bm X) = \begin{cases}
            \Phi(-\Lambda_j(-|\hat \beta_j^{\lambda}|,-1)) + 1 - \Phi(\Lambda_j(|\hat \beta_j^{\lambda}|,1)), &\textrm{if }\hat\beta_j^{\lambda} \neq 0 \Leftrightarrow z_1\notin [\Lambda_j(0,\pm 1)]\\
            \Phi(\hat m_j - |z_1 - \hat m_j|) + 1- \Phi(\hat m_j + |z_1 - \hat m_j|), &\textrm{if }\hat\beta_j^{\lambda} = 0\Leftrightarrow z_1\in [\Lambda_j(0,\pm 1)]
        \end{cases},
    \end{align}
    where $\Lambda_j$ and $\hat{m}_j$ refer to their known-$\sigma$ variants as discussed in Section~\ref{sec:known_sigma_ell} with $\sigma$ in the denominator in place of $\hat \sigma_j$. Three LASSO-estimate-based quantities enter into Equation~\eqref{eqn:ell_pval_lambda}: $\hat{\beta}_j^\lambda$, $ \hat{\bm \beta}^\lambda_{-j}(|\hat \beta_j^\lambda|)$, and $\hat{\bm \beta}^\lambda_{-j}(-|\hat \beta_j^\lambda|)$.
    
    If $\bm X$ is full column-rank (more generally, if its columns are in general position), the LASSO estimator is unique \citep[Lemma 3]{lassouniqueness}. Now, first observe the fact that $\bm X\mapsto \bm X^T\bm X\mapsto \det\left(\bm X^T\bm X\right)$ is a continuous transformation. A matrix $\bm X^*$ being full column-rank implies that $\det\left({\bm X^*}^T\bm X^*\right)>0$, and hence, by the continuity of the transformation, this implies that for an $\bm X$ in an open ball $U$ around $\bm X^*$, we have $\det\left(\bm X^T\bm X\right)>0$, and hence, all $\bm X$'s in that ball are full column-rank. This means that we can find a ball $V$ about $\bm \xi^*$ such that $\bm \xi\in V$ implies that $\bm X\in  U$ and hence, they are full column-rank, and hence so is any $\bm X_{-j}$ for $\bm X\in  U$. Hence, as along as $\bm \xi\in V$, all the LASSO-based quantities $\hat{\beta}_j^\lambda$ and $\hat{\bm \beta}^\lambda_{-j}(b),\forall b$, and in particular $\hat{\bm \beta}^\lambda_{-j}(\pm |\hat \beta_j^\lambda|)$, are uniquely defined. Hence, we can find an open neighborhood $V$ about $\bm \xi^*$ such that for any $\bm \xi\in V$, the above LASSO-estimate-based quantities are unique, and hence, $\bm \xi\mapsto p_j^\lambda(\bm y, \bm X)$ is a real-valued (and not set-valued) function in $V$. As the remainder of the proof concerns validity at $\bm\xi^*$, we can and will always restrict our functions to $V$ (without mentioning so explicitly) so that we can treat those functions as vector-valued (as opposed to set-valued).

    Let us define $N_j := \{\bm \xi: \hat \beta_j^\lambda\neq \{0\}\}$. We will establish the continuity of $\bm \xi\mapsto p_j^\lambda(\bm y, \bm X)$ at $\bm \xi^*$ separately for the three cases: $\bm \xi^*\in N_j$, $\bm \xi^*\in \mathrm{int}~N^c_j$ and $\bm \xi^*\in \del N_j$, where $\mathrm{int}~A$ denotes the interior of a set $A$ (defined as the union of all possible open subsets of $A$), and $\del A:=\bar A\setminus \mathrm{int}~A$ denotes the boundary of the set $A$, where $\bar A$ denotes the closure of $A$. But, first, let us state a lemma that establishes the continuity of the LASSO-based quantities on which $p_j^\lambda(\bm y, \bm X)$ depends, whose proof is deferred to after the end of this current proof.

    \begin{lemma}
    \label{lem:asymp_3}
        Consider any $\bm \xi^*$ with full column-rank $\bm X^*$. Then there is a neighborhood $V$ about $\bm \xi$, such that $\bm \xi \mapsto \hat \beta_j^\lambda$, $\bm \xi \mapsto \hat {\bm \beta}^\lambda_{-j}(|\hat \beta_j^\lambda|)$, and $\bm \xi \mapsto \hat {\bm \beta}^\lambda_{-j}(-|\hat \beta_j^\lambda|)$ are all vector-valued (and not set-valued) transformations in $V$ and continuous at $\bm \xi^*$.
    \end{lemma}

    \underline{Case I: $\bm \xi^*\in N_j$}. Recall that as mentioned earlier, we are only interested in the behavior of the transformation in $V$, in which $\bm \xi\mapsto p_j^\lambda(\bm y, \bm X)$ is a real-valued function. In this case, we have that $\bm \xi^*\in N_j\cap V$. We will first show that there is a neighborhood $W\subseteq N_j\cap V$ such that $\bm\xi^*\in W$ and for all $\bm \xi\in W$, the corresponding $\hat \beta_j^\lambda \neq 0$. Existence of such a $W$ ensures that the form of $p_j^\lambda(\bm y, \bm X)$ (that is, which of the two cases in Equation~\eqref{eqn:ell_pval_lambda} determines its form) does not change in a neighborhood around $\bm \xi^*$.
    To prove this, assume to the contrary that the statement is false. Then, we can find a sequence $\{\bm \xi_k\}$ such that $\bm \xi_k\notin N_j\cap V$ and $\bm \xi_k\to \bm \xi^*$. Because $V$ is open and contains $\bm \xi^*$, a tail of the sequence $\{\bm \xi_k\}$ is inside $V$, which implies that for $k$ larger than or equal to some $K$, $\bm \xi_k\in N_j^c\cap V$. If $\hat\beta_{j,k}^\lambda$ denotes the (uniquely defined) LASSO estimates obtained using $\bm \xi_k$ for $k\geq K$, then this implies $\hat \beta^\lambda_{j,k}=0$. $\bm \xi\mapsto \hat{\beta}_j^\lambda$ is continuous at $\bm \xi^*$ by Lemma~\ref{lem:asymp_3} and the $\hat \beta_{j,k}^\lambda$'s are uniquely defined for $k\geq K$, so we have that the LASSO estimate evaluated at $\bm \xi^*$ is the limit of $\hat\beta^\lambda_{j,k}$, as $k\to \infty$, which is equal to 0. This implies $\bm \xi^*\in N_j^c$, which is a contradiction.

    Because now we have established that there is a neighborhood $W\subseteq N_j\cap V$ containing $\bm \xi^*$, where the LASSO estimates are uniquely defined, and where the form of $p_j^\lambda(\bm y, \bm X)$ does not change, for the remainder of the proof of this case, without any loss of generality, we will restrict the transformation $\bm \xi\mapsto p_j^\lambda(\bm y, \bm X)$ to within $W$.
    
    Now, note that,
    \[
        \Lambda_j(b,\epsilon) = \frac{-\bm X_j^T(\bm P_{-j}\bm y - b\bm X_j - \bm X_{-j}\hat{\bm \beta}_{-j}^{\lambda}(b)) + n\lambda\epsilon}{\sigma \|(\bm I - \bm P_{-j})\bm X_j\|},\quad \epsilon \in \{-1,1\}.
    \]
     Note that an application of Lemma~\ref{lem:asymp_2} (and using the continuity of $\bm \xi\mapsto \hat \beta_j^\lambda$ at $\bm \xi^*$ via Lemma~\ref{lem:asymp_3}) shows that all the terms appearing in the expression of $\Lambda_j(|\hat\beta_j^\lambda|,1)$ are continuous in $\bm \xi$ at $\bm \xi^*$, and hence so is the transformation $\bm \xi\mapsto \Lambda_j(|\hat\beta_j^\lambda|,1)$. Thus, $\bm \xi \mapsto \Phi(\Lambda_j(|\hat \beta_j^\lambda|,1))$ is continuous at $\bm \xi^*$. Similar arguments establish that $\bm \xi\mapsto \Phi(-\Lambda_j(-|\hat \beta_j^{\lambda}|,-1)) + 1 - \Phi(\Lambda_j(|\hat \beta_j^{\lambda}|,1))$ is also continuous at $\bm \xi^*$. 

    \underline{Case II: $\bm \xi^*\in \mathrm{int}~N_j^c$}. Because by definition $\mathrm{int}~N_j^c$ is an open set, and so is $V$, there is an open ball $W'\subseteq V\cap \mathrm{int}~N_j^c$ such that $\bm \xi^*\in W'$. Now because $W'\subseteq \mathrm{int}~N_j^c\subseteq N_j^c$, by definition of $N_j^c$ for all $\bm \xi\in W'$, $\hat \beta_j^\lambda = 0$ and hence, the form of $p_j^\lambda(\bm y, \bm X)$ does not change in $W'$. As before, we will restrict our domain to $W'$ to prove the continuity at $\bm \xi^*$ in this case. Note that similar to Case I above, we can establish the continuity of $\bm \xi \mapsto \Lambda_j(0,\pm1)$ at $\bm \xi^*$ and hence that of $\bm \xi \mapsto \hat m_j = \frac{\Lambda_j(0,1)+\Lambda_j(0,-1)}{2}$ and $\bm \xi^*$. Also, we have that 
    \[
        z_1 = \frac{(\bm I - \bm P_{-j})\bm y}{\sigma\|(\bm I - \bm P_{-j})\bm X_j\|},
    \]
    so that $\bm \xi \mapsto z_1$ is also continuous at $\bm \xi^*$. This implies that $\bm \xi \mapsto \Phi(\hat m_j - |z_1 - \hat m_j|) + 1- \Phi(\hat m_j + |z_1 - \hat m_j|)$ is also continuous at $\bm \xi^*$. 

    \underline{Case III: $\bm \xi^* \in \del N_j=\del N_j^c$.}

    Because, $\bm \xi^*\in \del N_j^c\cap V=\del N_j\cap V$ , we have that $\bm \xi^*\in \del N_j$ and hence, is a limit point of $N_j$. Thus, we have some sequence $\{\bm \xi_k\}\subseteq N_j$ such that $\bm \xi_k\to \bm \xi^*$. Because $V$ is open and $\bm \xi^*\in V$, we know that a tail of the sequence $\{\bm \xi_k\}$ is eventually inside $V$. Without any loss of generality, we can assume that $\{\bm \xi_k\}\subseteq N_j\cap V$.
    
    By a similar argument, $\bm \xi^*$ is also a limit point of $N_j^c$. And thus there is some sequence $\{\bm \xi_k'\}\subseteq N_j^c\cap V$, such that $\bm \xi_k'\mapsto \bm \xi^*$. Note that by definition of $N_j^c$, the LASSO estimates evaluated at each of $\bm \xi_k'$ are all 0. Now let $\hat\beta^\lambda_{j*}$ denote the LASSO estimate evaluated at $\bm \xi^*$, we can invoke the continuity of the map $\bm \xi\mapsto \hat\beta_j^\lambda$ from Lemma~\ref{lem:asymp_3} to conclude that $\hat\beta^\lambda_{j*} = 0$, implying that $\bm \xi^*\in N_j^c\cap V$. Now for all $\bm \xi$ such that $\hat\beta_j$ is uniquely defined, define,
    \[
        \tilde p_j^\lambda(\bm y, \bm X) := \Phi(\hat m_j - |z_1 - \hat m_j|) + 1- \Phi(\hat m_j + |z_1-\hat m_j|),
    \]
    and note that when $\bm \xi\in N_j^c$, $p_j^\lambda(\bm y, \bm X) = \tilde p_j^\lambda(\bm y, \bm X)$. Lemma~\ref{lem:asymp_3} implies that $\bm \xi\mapsto \tilde p_j^\lambda(\bm y, \bm X)$ is continuous in $\{\bm \xi: \hat\beta_j\textrm{ is uniquely defined}\}$. Let us also use the notations $p_j(\bm \xi)=p_j^\lambda(\bm y, \bm X)$ and $\tilde p_j(\bm \xi)=\tilde p_j^\lambda(\bm y, \bm X)$ to streamline the notation.
    Then the above shows that,
    \begin{align}
    \label{eqn:outside_nj}
        \lim_{\bm \xi\to \bm \xi^*, \bm \xi\in N_j^c}~p_j(\bm \xi) =  \lim_{\bm \xi\to \bm \xi^*, \bm \xi\in N_j^c}~\tilde p_j(\bm \xi) = \tilde p_j(\bm \xi^*) =p_j(\bm \xi^*),
    \end{align}
    where recall, again, that the last equality follows because $\bm \xi^*\in N_j^c$. However, note that unlike case I and II, there is no neighborhood of $\bm \xi^*$ contained entirely in $N_j^c\cap V$. 

    Next, note again from the continuity of the LASSO maps established in lemma~\ref{lem:asymp_3} along with Lemma~\ref{lem:asymp_2}, we have that 

    \begin{align}
    \label{eqn:inside_nj}
    \begin{aligned}
        \underset{\bm \xi \rightarrow \bm \xi^*, \bm \xi \in N_j}{\lim}p_j(\bm \xi) &= \underset{\bm \xi \rightarrow \bm \xi^*, \bm \xi \in N_j}{\lim}\left[\Phi\left(-\Lambda_j(-|\hat \beta_j^\lambda|,-1)\right) + 1- \Phi(\Lambda_j(|\hat \beta_j^\lambda|,1))\right]\\
        &=  \Phi\left(\Lambda_j^*(0,-1)\right) + 1- \Phi(\Lambda_j^*(0,1)),
    \end{aligned}
    \end{align}
    where $\Lambda_j^*(0,\pm 1)$ are the values of $\Lambda_j(0,\pm 1)$ evaluated at $\bm \xi^*$. Thus, if we can just show that $p_j(\bm \xi^*) =  \Phi\left(\Lambda_j^*(0,-1)\right) + 1- \Phi(\Lambda_j^*(0,1))$ (implying that the limits of \eqref{eqn:outside_nj} and \eqref{eqn:inside_nj} match), we would establish that $\bm \xi\mapsto p_j^\lambda(\bm y, \bm X) = p_j(\bm \xi)$ is continuous at $\bm \xi^*$. Let $z_1^*$ denote $z_1$ computed using $\bm \xi^*$. Because $p_j^\lambda = p_j(\bm \xi) = \Phi(\hat m_j - |z_1 - \hat m_j|) + 1- \Phi(\hat m_j + |z_1 - \hat m_j|)$ for $\bm \xi \in N_j^c$, we will establish the claim above by showing that $z_1^*\in \{\Lambda_j^*(0,-1), \Lambda_j^*(0,+1)\}$, because note that if $z_1^*$ takes either of these values then (using the fact that at $\bm \xi^*$, $\hat m_j
    = \frac{\Lambda_j^*(0,-1)+\Lambda_j^*(0,1)}{2}$) we have at $\bm \xi^*$, $\Phi(\hat m_j - |z_1 - \hat m_j|) + 1- \Phi(\hat m_j + |z_1 - \hat m_j|)$ takes the value $ \Phi\left(\Lambda_j^*(0,-1)\right) + 1- \Phi(\Lambda_j^*(0,1))$.

    Now assume to the contrary. Because $\hat \beta_{j*}^\lambda=0 \Leftrightarrow z_1^*\in [\Lambda^*(0,-1), \Lambda^*(0,1)]$, we must have that there is some $a,b\in \mathbb R$ such that the following relation holds:
    \[
        \Lambda_j^*(0,-1)<a<z_1^*<b<\Lambda_j^*(0,1).
    \]
    Now because the map $\bm \xi \mapsto z_1$ is continuous, we can find a $\delta>0$ such that for any $\bm \xi \in B(\bm \xi^*;\delta)$ we have that $z_1 := z_1(\bm \xi)\in (a,b)$. Similarly, there is also another $\delta'>0$ such that for all $\bm \xi \in B(\bm \xi^*, \delta')\cap V$, we have that $\Lambda_j(0,-1)<a$ and $\Lambda_j(0,1)>b$ (this again follows from the continuity of the map $\bm \xi \mapsto \Lambda_j(0,\pm 1)$ in $V$). Thus, for $\delta'' = \min\{\delta, \delta'\}$ and $\bm \xi\in B(\bm \xi^*, \delta'')\cap V$, we have the relation
    \[
        \Lambda_j(0,-1)<z_1<\Lambda_j(0,1),
    \]
    and hence, it must be the case that: $\bm \xi\in B(\bm \xi^*, \delta'')\cap V\implies \hat \beta_j^\lambda = 0\implies \bm \xi \in N_j^c$. Thus, $B(\bm \xi^*, \delta'')\cap V\subseteq N_j^c$. Now because $V$ is open, we can again find a $\delta'''$ such that $B(\bm \xi^*, \delta ''')\subseteq B(\bm \xi^*, \delta'')\cap V\subseteq N_j^c\implies B(\bm \xi^*, \delta ''')\subseteq \mathrm{int}~N_j^c\implies \bm \xi^*\in \mathrm{int}~N_j^c$. But this is a contradiction since, $\bm \xi^*\in \partial N_j=\partial N_j^c$, and the boundary of a set is disjoint from its interior. This completes the proof.
\end{proof}
    
\begin{proof}[Proof of Lemma~\ref{lem:asymp_3}]
    The fact that $\bm \xi \mapsto \hat \beta_j^\lambda$ and $\bm \xi \mapsto \hat{\bm \beta}_{-j}^\lambda(|\hat\beta_j^\lambda|)$ are vector-valued functions in a neighborhood $V$ around $\bm \xi^*$ follows from a similar argument as presented in the second paragraph of the proof of Lemma~\ref{lem:asymp_4}. In the remainder of the proof, we will restrict $\bm\xi$ to the neighborhood $V$ without mentioning explicitly every time, since in order to establish the continuity of a function at a point, we are only interested in its behavior in a neighborhood of the point. We next state and prove a lemma that establishes the continuity of LASSO estimates.

\begin{lemma}
    \label{lem:asymp_1}
    Let $(\bm y, \bm X)\sim P$, a probability measure on $\mathcal X\times \mathcal Y\subseteq \mathbb R^{n\times p}\times \mathbb R^n$, and let $\lambda>0$. Then, $(\bm y, \bm X, \lambda)\mapsto \hat{\bm \beta}^\lambda$ is an upper-hemicontinuous transformation (in case the minimizer of the LASSO objective is not unique, we abuse notations and use this same $\hat{\bm \beta}^\lambda$ to denote the set of minimizers), where we endow the space $\bm y$ and $\lambda$ lie in with the usual Euclidean metric, while the space $\bm X$ lies in is endowed with the Frobenius metric.
    \end{lemma}
 
    \begin{proof}[Proof of Lemma~\ref{lem:asymp_1}]
    Define $\bm \xi = (\bm y,\bm X, \lambda)$ and  $f(\bm \beta; \bm \xi) = \frac{1}{2n}\left(\bm \beta^T\bm X^T\bm X\bm \beta - 2\bm \beta^T\bm X^T\bm y + \|\bm y\|^2\right) + \lambda \|\bm \beta\|_1$, and note that $f$ is continuous in ($\bm \beta$, $\bm \xi$) (satisfying property 1 of Lemma \ref{lem:minima_cont} with $\phi$ replaced by $f$, $\bm x$ replaced by $\bm \beta$ and $\bm y$ replaced by $\bm \xi$). Also note that we have 
    \[
        \hat{\bm \beta}^\lambda = \underset{\bm \beta}{\arg\min}f(\bm \beta;\bm \xi).
    \]
    Throughout this proof, we will use the notation $\bar B(\bm h, r)$ to denote a ball of radius $r$ in the metric space the object $\bm h$ lies in, without clarifying this explicitly each time. Now fix a point $\bm \xi^* = (\bm y^*, \bm X^*, \lambda^*)$. This implies that for any $d>0$ and all $\bm \xi = (\bm y, \bm X, \lambda)\in \bar B(\bm \xi^*, d)$, noting that $\lambda>0$, we have,
   \[
        \underset{\bm \xi \in \bar B(\bm \xi^*, d)}{\inf}~f(\bm \beta, \bm \xi)\geq \lambda \|\bm \beta\|_1\rightarrow \infty,\textrm{ as }\|\bm \beta\|\rightarrow \infty.
   \]
   This implies property 2 of Lemma \ref{lem:minima_cont}. We can now invoke Lemma \ref{lem:minima_cont} to conclude that $\bm \xi\mapsto \hat{\bm \beta}^\lambda$ is upper hemicontinuous.
    \end{proof}

    Now coming back to the proof of Lemma~\ref{lem:asymp_3}, from Lemma \ref{lem:asymp_1} and the fact that the maps in question are vector-valued, it follows that $\bm \xi \mapsto \hat{\beta}_j^\lambda$ and hence $\bm \xi \mapsto| \hat{\beta}_j^\lambda|$ are continuous transformations at $\bm \xi^*$. Now consider the transformation
    \begin{align*}
        (\bm z, \bm y, \bm X, \lambda)\mapsto (\bm y - \bm z, \bm X_{-j}, \lambda)\mapsto\hat{\bm \gamma}_{-j}(\bm z)
    \end{align*}
    where,
    \[
        \hat{\bm \gamma}_{-j}(\bm z):=\underset{\bm \gamma}{\arg\min}\left(\frac{1}{2n}\left(\bm \gamma^T\bm X_{-j}^T\bm X_{-j}\bm \gamma - 2(\bm y - \bm z)^T\bm X_{-j}\bm \gamma + \|\bm y - \bm z\|^2\right) + \lambda \|\bm \gamma\|_1 \right).
    \]
    Note that the proof of Lemma \ref{lem:asymp_1} can exactly be copied with $\bm y$ replaced with $\bm y - \bm z$, $\bm \beta$ replaced with $\bm \gamma$ and $\bm X$ replaced with $\bm X_{-j}$ (which is also full-column rank because so is $\bm X$) to conclude that $(\bm y- \bm z, \bm X_{-j}, \lambda)\mapsto \hat{\bm \gamma}_{-j}(\bm z)$ is a continuous transformation at any point $(\bm y^* - \bm z^*,\bm X^*, \lambda^*)$ with full-column rank $\bm X^*$. Finally, noting that $(\bm z, \bm y)\mapsto \bm y-\bm z$ is continuous and so is $\bm X\mapsto \bm X_{-j}$, we conclude that $(\bm z, \bm y, \bm X, \lambda)\mapsto (\bm y - \bm z, \bm X_{-j}, \lambda)$ is continuous. Lemma~\ref{lem:asymp_2} now gives  $(\bm z, \bm y, \bm X, \lambda)\mapsto \hat{\bm \gamma}_{-j}(\bm z)$ is a continuous transformation at any point $(\bm z^*,\bm y^*,\bm X^*, \lambda^*)$ with full-column rank $\bm X^*$. Lemma~\ref{lem:asymp_2} can again be invoked to get that $\bm \xi \mapsto \hat{\bm \gamma}_{-j}(\bm y - \bm X_j| \hat \beta_j^\lambda|) = \hat{\bm \beta}_{-j}^\lambda(|\hat \beta_j^\lambda|)$ is also continuous at $\bm \xi^*$, and so is the transformation $\hat{\bm \beta}_{-j}^\lambda(-|\hat \beta_j^\lambda|)$.
\end{proof}

\begin{proof}[Proof of Lemma~\ref{lem:asymp_5}]
    We proceed by first establishing a sequence of results.
    
    \underline{\textit{Claim 1:}} There exists a ball around $\bm X^*$ such that all the matrices in the ball satisfy item 2 of Assumption~\ref{assm:cv}.
    
    \underline{\textit{Proof of Claim 1}}. Without loss of generality, we will prove the following: Suppose $\tilde {\bm X}^*$ is a matrix formed by some $s\geq r$ rows of $\bm X^*$. We will use the notation $\tilde {\bm A}$ to denote the matrix formed by the same $s$ rows of some matrix $\bm A$. Then there exists a ball about $\bm X^*$, such for any matrix $\bm X$ in the ball, the columns of $\tilde{\bm X}$ are in general position. 
    Assume to the contrary. Then there exists a sequence of matrices $\{\bm X^{(i)}\}$ such that $\bm X^{(i)}\to \bm X$ and such that the columns of $\tilde{\bm X}^{(i)}$ are not in general position, for any $i$. This means that for every $i$, there is some set of $k'$ columns of $\tilde{\bm X}^{(i)}$ that lie in a $(k'-2)$-dimensional subspace, for $2\leq k'\leq s+1$. Because the number of such subsets is finite and there are infinitely many elements in $\{\tilde{\bm X}^{(i)}\}_i$, there is some set of columns of size $k$ (which without loss of generality we assume to be the first $k$ columns), and a subsequence $\{\tilde{\bm X}^{(i_l)}\}\subseteq \{\tilde{\bm X}^{(i)}\}$, such that $\{\tilde{\bm X}^{(i_l)}_1,\dots, \tilde{\bm X}^{(i_l)}_k\}$ lie in a $(k-2)$-dimensional affine subspace, for all $l$. Now, define a matrix, $\bm W^{(i_l)} := \left[\left(\tilde{\bm X}^{(i_l)}_1-\tilde{\bm X}^{(i_l)}_k\right), \cdots, \left(\tilde{\bm X}^{(i_l)}_{k-1}-\tilde{\bm X}^{(i_l)}_k\right)\right]$ and $\bm W^*$ similarly with $\tilde{\bm X}^{(i_l)}$ replaced by $\tilde{\bm X}^*$. Because the first $k$ columns of $\tilde{\bm X}^*$ are in general position, we have that the columns of ${\bm W}^*$ are linearly independent. Since this is not the case for $\tilde{\bm X}^{(i_l)}$, the columns of  $\bm W^{(i_l)}$ are not linearly independent. Observe that $\tilde{\bm X}\mapsto \bm W$ is a continuous transformation and hence, $\bm W^{(i_l)}\to \bm W^*$, as $l\to \infty$. 
    Again note the sequence of matrix transformations: $\bm A \mapsto \bm A^T\bm A\mapsto \mathrm{det}(\bm A^T\bm A)$ is continuous, and hence, we must have $\mathrm{det}\left({\bm W^{(i_l)}}^T\bm W^{(i_l)} \right)\to \mathrm{det}\left({\bm W^*}^T{\bm W^*}\right)$, as $l\rightarrow \infty$. But note that $\mathrm{det}\left({\bm W^{(i_l)}}^T\bm W^{(i_l)} \right) = 0,\forall l$ as the columns of $\bm W^{(i_l)}$ are not linearly independent, and hence their limit $\mathrm{det}\left({\bm W^*}^T{\bm W^*}\right)$ must also be 0. But this is a contradiction since the columns of $\bm W^*$ are linearly independent. This proves Claim 1.

    Now, let for a vector $\bm v\in \mathbb R^{n'}$ and a set $S\subseteq [1:n']$, $\bm v_S$ denote the sub-vector formed by the entries of $\bm v$ corresponding to the indices in $S$, while for a matrix $\bm A$, let $\bm A_{S,\cdot}$ denote the sub-matrix formed by the rows corresponding to the indices in $S$. Now the cross-validated error for a general $(\bm y, \bm X)$ is given by:
    \[
        \psi(\bm y, \bm X,\lambda) = \sum_{i=1}^m \left\|\left(\bm P_{-j}\bm y +\sigma \bm V\tilde{\bm z}\right)_{S_i} - \bm X_{S_i, \cdot}\hat{\bm \beta}^\lambda_{[1:n']\setminus S_i}\right\|^2,
    \]
    where $\hat{\bm \beta}^\lambda_{[1:n']\setminus S_i}$ is \textit{any} LASSO estimate obtained using the response $\left(\bm P_{-j}\bm y +\sigma \bm V\tilde{\bm z}\right)_{[1:n']\setminus S_i}$ and design matrix $\bm X_{[1:n']\setminus S_i, \cdot}$ with regularizer $\lambda$.

    Note that because $\bm X^*$ satisfies item 2 of Assumption~\ref{assm:cv}, \citet[Lemma 3]{lassouniqueness} implies that LASSO estimates obtained on all the folds of cross-validation using $\bm X^*$ are all uniquely defined, which in turn implies that the cross validated errors are uniquely defined. Now, let us define $\psi_j(\bm y, \bm X):=\psi(\bm y, \bm X,  l_j\left(\bm P_{-j}\bm y +\sigma \bm V\tilde{\bm z},\bm X)\right)$, the cross-validated error from the $j^{\mathrm{th}}$ candidate $\lambda$. Now, because item 4(a) of Assumption~\ref{assm:cv} is satisfied and based on the theorem statement, $\psi_j(\bm y^*, \bm X^*)$ are distinct across different $j$'s, without loss of generality, assume that
    \begin{equation}
        \label{eqn:psi_ordering}
        \psi_1(\bm y^*, \bm X^*)<\cdots<\psi_t(\bm y^*, \bm X^*),
    \end{equation}
    where $t$ is the output dimension of $\bm l$.
    Then, we make the following claim: 
    
    \underline{\textit{Claim 2:}} Let $(\bm y^{(i)}, \bm X^{(i)})\mapsto (\bm y^*, \bm X^*)$. Then, for large $i$, the relation
    \[
        \psi_1(\bm y^{(i)}, \bm X^{(i)})<\cdots<\psi_t(\bm y^{(i)}, \bm X^{(i)}),
    \]
    holds along with these quantities being well-defined.\\

    \underline{\textit{Proof of Claim 2}}. By Claim 1, there is a ball about $\bm X^*$ such that any matrix in that ball satisfies item 1 of Assumption~\ref{assm:cv}. Because $(\bm y^{(i)}, \bm X^{(i)})\mapsto (\bm y^*, \bm X^*)$, there exists some $i_0$ such that for all $i\geq i_0$, $\bm X^{(i)}$ also satisfies item 1 of Assumption~\ref{assm:cv}. \citet[Lemma 3]{lassouniqueness} then implies that the LASSO estimate computed using any $s$-subset of the rows of $\bm X^{(i)}$, where $s\geq r$, is unique. In particular this implies that the LASSO estimates obtained on various folds while doing cross-validation using $\bm X^{(i)}$ are all uniquely defined, and hence, so are $\psi\left(\bm y^{(i)},\bm X^{(i)},\lambda\right),\forall i\geq i_0$. From here on, we will focus on $i\geq i_0$. 
    
    By Lemma~\ref{lem:asymp_1}~and~\ref{lem:asymp_2}, note that $(\bm y,\bm X)\mapsto \psi(\bm y, \bm X, \lambda)$ is continuous at $(\bm y^*, \bm X^*)$, which immediately implies that
    \[
        \psi_j\left(\bm y^{(i)},\bm X^{(i)}\right)\to  \psi_j\left(\bm y^{*},\bm X^{*}\right), \forall j\in [1:t],i\to\infty,
    \]
    from which Claim 2 follows.

    Now, defining
    \[
        \hat{\lambda}(\bm y) := \underset{\lambda \in \bm l(\bm y, \bm X)}{\arg\min}~\psi(\bm y, \bm X, \lambda),
    \]
    we see that the cross-validated $\lambda$ is given by $\hat \lambda(\bm P_{-j}\bm y + \sigma \bm V\tilde{\bm z})$. Let us define $\hat\lambda^*$ as the cross-validated $\lambda$ corresponding to $(\bm y^*,\bm X^*)$, and from Equation~\eqref{eqn:psi_ordering}, it follows that $\hat\lambda^* = l_1\left(\bm P_{-j}^*\bm y^* +\sigma\bm V^*\tilde{\bm z},\bm X^*\right)$. Similarly, define $\hat\lambda_i = \hat\lambda\left(\bm P^{(i)}_{-j}\bm y^{(i)} + \sigma \bm V^{(i)}\tilde{\bm z}\right)$, where as before, $\bm P_{-j}^{(i)}$ and $\bm V^{(i)}$ have been computed using $\bm X^{(i)}$. We want to show that $\hat{\lambda}_i\to \hat\lambda^*$.

    From Claim 2, we know that $\hat\lambda_i = l_1\left(\bm P^{(i)}_{-j}\bm y^{(i)} + \sigma \bm V^{(i)}\tilde{\bm z},\bm X^{(i)}\right)+o(1)$. The convergence of $\{\hat{\lambda}_i\}$ now follows from the continuity of the maps $(\bm y, \bm X)\mapsto \left(\bm P_{-j}\bm y + \sigma \bm V\tilde{\bm z},\bm X\right)\mapsto \bm l\left(\bm P_{-j}\bm y + \sigma \bm V\tilde{\bm z},\bm X\right)\mapsto l_1\left(\bm P_{-j}\bm y + \sigma \bm V\tilde{\bm z},\bm X\right)$.
\end{proof}

\begin{proof}[Proof of Lemma~\ref{lem:asymp_6}]
    First, note that because of the independence, $(W_n,U)\dto (W,U)$. Now note that
    \begin{align*}
        &\mathbb P((W,U)\in E)=1\\
        \Leftrightarrow& \mathbb E\left[\mathbb P((W,U)\in E\mid U)\right]=1\\
        \Leftrightarrow &\int \mathbb P((W,u)\in E\mid U=u)\mathbb P(du)=1\\
        \Leftrightarrow&\mathbb P\left((W,U)\in E\mid U\right) = 1\textrm{, almost surely}.\\
        \Leftrightarrow& \mathbb P\left(f_U\textrm{ is continuous at }W\mid U\right)=1\textrm{, almost surely}.
    \end{align*}
    Now for any function $h$, let us introduce the notation $\mathcal C_h$ to denote the set of continuity points of $h$. Then, the above suggests that
    \[
        \mu_W\left(\mathcal C_{f_U}\right)=1,\textrm{almost surely }[\mu_U].
    \]
    Now, define $\Omega = \left\{u:\mu_W\left(\mathcal C_{f_u}\right)=1\right\}$ and note that the above equation suggests $\mu_U\left(\Omega\right)=1$. Now because $W_n\dto W$, by the continuous mapping theorem, we have,
    \begin{align*}
        \Omega =&\left\{u:\mu_W\left(\mathcal C_{f_u}\right)=1\right\}\subseteq \left\{u:f_u(W_n)\dto f_u(W)\right\}\\
        \implies & \mu_U\left(\left\{u:f_u(W_n)\dto f_u(W)\right\}\right)=1\\
        \Longleftrightarrow & \mu_U\left(\left\{u:\mathbb E[g\circ f(W_n,u)]\to \mathbb E[g\circ f(W,u)],\forall\textrm{ bounded continuous }g\right\}\right)=1,
    \end{align*}
    where the last implication follows from Portmanteau's Theorem \citep{billingsley1999convergence}. Hence, for one particular bounded, continuous function $g$,
    \begin{align*}
        &\mu_U\left(\left\{u:\mathbb E[g\circ f(W_n,u)]\to \mathbb E[g\circ f(W,u)]\right\}\right)=1\\
        \implies & \mathbb E[g\circ f(W_n,U)\mid U]\to \mathbb E[g\circ f(W,U)\mid U],\textrm{ almost surely }[\mu_U].
    \end{align*}
    Because $g$ is bounded, the bounded convergence theorem implies that,
    \[
        \mathbb E[g\circ f(W_n,U)]\to \mathbb E[g\circ f(W,U)].
    \]
    Because the above holds for any arbitrary bounded, continuous $g$, Portmanteau Theorem suggests that $f(W_n,U)\dto f(W,U)$.
\end{proof}

Finally, as an intermediate result, we have also established that the known-$\sigma$-$\ell$-test $p$-value evaluated using a fixed, known $\lambda$ is also asymptotically valid for a class of locally asymptotically Gaussian linear models. We formally summarize this as a corollary below.
\begin{corollary}
    \label{cor:validity_known_sigma_ell_test}
   Let $p^{\lambda}_{j,n}$ denote the known-$\sigma$-$\ell$-test $p$-value for testing $H_j$ using response $\sqrt{n}\hat {\bm \Sigma}_n^{-1/2}\hat{\bm \theta}_n$($=:\bm y^{(n)}$), design matrix $\hat{\bm \Sigma}_n^{-1/2}$($=:\bm X^{(n)}$), and $\lambda$ as the regularizer for the LASSO. Then
     $p^{\lambda}_{j,n}$ is asymptotically valid.
\end{corollary}
\begin{proof}
    Note that as a consequence of Lemma~\ref{lem:asymp_4}, $(\bm y, \bm X)\mapsto p_j^\lambda(\bm y, \bm X)$ is continuous at $(\bm y^*, \bm X^*)$ almost surely, under the law of $\bm y^*$. Because $(\bm y^{(n)}, \bm X^{(n)})\dto (\bm y^*, \bm X^*)$, and $\bm y^*\mid \bm X^*\sim \mathcal N(\bm X^*\bm \theta, \bm I)$, the continuous mapping theorem (and the fact that $p_j^\lambda(\bm y, \bm X)$ is known-$\sigma$-$\ell$-test $p$-value) yields that $\{p_{j,n}^{\lambda}\}_n$ is an asymptotically valid sequence of $p$-values.
\end{proof}

\subsection{Theory from Section~\ref{sec:power_analysis}}
\label{sec:proof_thm_power}
\begin{proof}[Proof of Theorem~\ref{thm:power}]
    The $p$-value for the re-centered test is given by (for an independently drawn $\tilde z_1\sim N(0,1)$):
\begin{align*}
\begin{aligned}
    \repval &= \mathbb P(|\tilde z_1 - \hat m_j'|\geq |z_1 - \hat m_j'|\mid \bm S^{(j)}, \bm X)\\
    &= \begin{cases}
        1-\Xi(z_1) + \Xi(2\hat m_j' - z_1), &\textrm{if }z_1>\hat m_j'\\
        \Xi(z_1) + 1-\Phi(2\hat m_j' - z_1), &\textrm{if }z_1\leq \hat m_j'
    \end{cases}\\
    &= 1 - \left[\left(\Phi(z_1) - \Phi(2\hat m_j' - z_1)\right)\right]\textrm{sign}(z_1-\hat m_j'),
\end{aligned}
\end{align*}
where, $z_1 = \frac{\bm X_j^T(\bm I - \bm P_{-j})\bm y}{\sigma\|(\bm I - \bm P_{-j})\bm X_j\|}$ and $\hat m_j' = \frac{-\bm X_j^T(\bm  P_{-j}\bm y - \bm X_{-j}\hat{\bm \beta}_{-j}^{\lambda}(0))}{\sigma\|(\bm I - \bm P_{-j})\bm X_j\|}$. 

Also, recall that in the known-$\sigma$ case, we abuse notations and also use $\bm S^{(j)}=\bm X_{-j}^T\bm y$ to denote the sufficient statistic under $H_j$. Next, we have,
\begin{align*}
    \|(\bm I - \bm P_{-j})\bm X_j\|^2
    = \bm X_j^T(\bm I - \bm P_{-j})\bm X_j
    = \bm X_j^T\bm \Gamma^T\bm D \bm \Gamma\bm X_j
    = \bm W_j^T \bm D \bm W_j,
\end{align*}
where $(\bm I - \bm P_{-j}) = \bm \Gamma^T\bm D \bm \Gamma$ is its spectral decomposition with an orthogonal $\bm \Gamma$ and a diagonal matrix $\bm \Sigma$ of eigen-values, and we define $\bm W_j:=\bm \Gamma\bm X_j$. Under Setting 1, the entries of $\bm X$ are drawn i.i.d. from $\mathcal N(0,1/n)$, and hence due to the independence between $\bm X_j$ and $\bm X_{-j}$, the orthogonal matrix $\bm \Gamma$ is independent of $\bm X_j$ and hence, $\bm W_j$ has entries i.i.d from $N(0,1/n)$. Note that $\bm D$ has exactly $n-d$ many 1's in its diagonal entries, and the rest are 0, so that $\bm W_j^T\bm D \bm W_j$ is the sum of squares of $n-d$ i.i.d $N(0,1/n)$'s. Hence, we have that 
\begin{align}\label{eqn:1-kappalimit}
    \|(\bm I - \bm P_{-j})\bm X_j\|^2 = \frac{1}{n}\sum_{i=1}^{n-d} Q_i^2 = \frac{n-d}{n}\frac{1}{n-d}\sum_{i=1}^{n-d} Q_i^2\Pto (1-\kappa) \cdot1 = 1-\kappa,
\end{align}
where, $Q_i\iid N(0,1)$. Now, let $F_n:=\frac{\bm X_j^T(\bm I - \bm P_{-j})\bm y}{\sigma\sqrt{1-\kappa}}$, while $G_n:=\frac{\bm X_j^T(\bm y - \bm X_{-j}\hat{\bm \beta}_{-j}^\lambda(0))}{\sigma\sqrt{1-\kappa}}$ (curiously, $G_n$ is the \textit{distilled-LASSO statistic} discussed in \citet{crt_power}), then note that we have the following relations:
\begin{align}
\label{eqn:decompositions}
\begin{aligned}
    z_1 &= \frac{F_n\sqrt{1-\kappa}}{ \|(\bm I - \bm P_{-j})\bm X_j\|};\\
    2\hat m_j' - z_1&=\frac{-2\bm X_j^T(\bm P_{-j}\bm y  - \bm X_{-j}\hat{\bm \beta}^\lambda_{-j}(0)) - \bm X_j^T(\bm I - \bm P_{-j})\bm y}{\sigma \|(\bm I - \bm P_{-j})\bm X_j\|}\\
    &=\frac{-2\bm X_j^T(\bm P_{-j}\bm y  - \bm X_{-j}\hat{\bm \beta}^\lambda_{-j}(0)) -2 \bm X_j^T(\bm I - \bm P_{-j})\bm y + \bm X_j^T(\bm I - \bm P_{-j})\bm y}{\sigma \|(\bm I - \bm P_{-j})\bm X_j\|}\\
    &= \frac{-2\left[\bm X_j^T(\bm P_{-j}\bm y  - \bm X_{-j}\hat{\bm \beta}^\lambda_{-j}(0)) + \bm X_j^T(\bm I - \bm P_{-j})\bm y \right] +  \bm X_j^T(\bm I - \bm P_{-j})\bm y}{\sigma \|(\bm I - \bm P_{-j})\bm X_j\|}\\
    &=\frac{-2\bm X_j^T(\bm y  - \bm X_{-j}\hat{\bm \beta}^\lambda_{-j}(0)) + \bm X_j^T(\bm I - \bm P_{-j})\bm y}{\sigma \|(\bm I - \bm P_{-j})\bm X_j\|}\\
    &= \frac{(F_n - 2G_n)\sqrt{1-\kappa}}{\|(\bm I - \bm P_{-j})\bm X_j\|};\\
    z_1-\hat m'_j &= \frac{\bm X_j^T(\bm I - \bm P_{-j})\bm y + \bm X_j^T(\bm P_{-j}\bm y  - \bm X_{-j}\hat{\bm \beta}^\lambda_{-j}(0))}{\sigma \|(\bm I - \bm P_{-j})\bm X_j\|}\\
    &= \frac{\bm X_j^T(\bm y  - \bm X_{-j}\hat{\bm \beta}^\lambda_{-j}(0))}{\sigma \|(\bm I - \bm P_{-j})\bm X_j\|}\\
    &= \frac{G_n\sqrt{1-\kappa}}{\|(\bm I - \bm P_{-j})\bm X_j\|}.
\end{aligned}
\end{align}
Thus, $\repval$ can be written as
\begin{align}
\label{eqn:r_pval}
    1-\left[\Phi\left(\frac{F_n\sqrt{1-\kappa}}{\|(\bm I - \bm P_{-j})\bm X_j\|} \right) - \Phi\left(\frac{(F_n - 2G_n)\sqrt{1-\kappa}}{ \|(\bm I - \bm P_{-j})\bm X_j\|}\right)\right]\mathrm{sign}(G_n).
\end{align}

Thus, to characterize the asymptotic distribution of $\repval$, we first characterize the asymptotic distribution of $(F_n,G_n)$. To simplify notation, we borrow notation from \citet{crt_power} and re-define as follows (note that this notation is only valid for the rest of this proof and is not to be confused with the notation used elsewhere in the paper): $\bm X:=\bm X_j, \bm Z:=\bm X_{-j}, \hat{\bm \theta} := \hat{\bm \beta}_{-j}^\lambda(0),\bm \theta:=\bm \beta_{-j},\bm \epsilon':=\bm y - \bm Z\bm \theta, \beta := h$. Then we have that $(F_n,G_n) = (\bm X^T(\bm I - \bm P_{\bm Z})\bm y, \bm X^T(\bm y - \bm Z\hat{\bm \theta}))/(\sigma\sqrt{1-\kappa})$. Now note that following the third equation from the bottom of the proof of \citet[Lemma 9]{crt_power}, we have that
\begin{align}
\label{eqn:x_normal_dist}
    \bm X\mid \bm y, \bm Z\sim \mathcal N\left(\frac{\beta}{n\sigma^2+\beta^2}\bm \epsilon', \frac{\sigma^2}{n\sigma^2 + \beta^2}\bm I\right).
\end{align}

Observe that we can write $F_n = \bm X^T\tilde{\bm F}$ and $G_n = \bm X^T\tilde{\bm G}$, where $(\tilde{\bm F},\tilde{\bm G})$ are functions of $(\bm y, \bm Z)$. Then, we have that 
\begin{align*}
    (F_n,G_n)\mid \bm y, \bm Z\sim \mathcal N\left(\bm \mu_n, \bm \Sigma_n\right);\bm \mu_n = \begin{pmatrix}
        \mu_{1,n}\\
        \mu_{2,n}
    \end{pmatrix}\textrm{ and }\bm \Sigma_n = \begin{bmatrix}
        \Sigma_{11,n} & \Sigma_{12,n}\\
        \Sigma_{12,n} & \Sigma_{22,n},
    \end{bmatrix},
\end{align*}
where
\begin{align*}
    \mu_{1,n}= \frac{\beta}{(n\sigma^2 + \beta^2)}\frac{\bm y^T(\bm I - \bm P_{Z})\bm \epsilon'}{\sigma\sqrt{1-\kappa}}=\frac{\beta}{(\sigma^2+\beta^2/n)}\frac{\bm y^T(\bm I - \bm P_{\bm Z})\bm \epsilon'}{n\sigma\sqrt{1-\kappa}}= \frac{\beta}{(\sigma^2+\beta^2/n)}\frac{\bm \epsilon'^T(\bm I - \bm P_{\bm Z})\bm \epsilon'}{n\sigma\sqrt{1-\kappa}},
\end{align*}
where the last equality follows because $\bm y = \bm Z\bm \theta + \bm \epsilon'$. Note that $\bm \epsilon'$ has entries i.i.d. from $\mathcal N(0, \sigma^2+\beta^2/n)$, and 
since $\frac{\bm \epsilon'^T(\bm I - \bm P_{\bm Z})\bm \epsilon'}{n} \stackrel{d}{=} \left(\sigma^2+\beta^2/n\right)\|(\bm I - \bm P_{\bm Z})\bm X\|^2$,
we have that $\frac{\bm \epsilon'^T(\bm I - \bm P_{\bm Z})\bm \epsilon'}{n}\Pto (1-\kappa)\sigma^2 $ by Equation~\eqref{eqn:1-kappalimit}. Then we have that
\[
    \mu_{1,n}\Pto \frac{\beta\sigma^2(1-\kappa)}{\sigma^3\sqrt{1-\kappa}} = \frac{\beta\sqrt{1-\kappa }}{\sigma}.
\]
Next, note that we have
\begin{align*}
    \mu_{2,n} &= \mathbb E\left[G_n\middle|\bm y, \bm Z\right]\\
    &= \mathbb E\left[\bm X^T\left(\bm y - \bm Z\hat{\bm \theta}\right)\middle|\bm y, \bm Z\right]/\left(\sigma\sqrt{1-\kappa}\right)\\
    &= \frac{\beta}{n\sigma^2 + \beta^2}\frac{(\bm y - \bm Z\hat{\bm \theta})^T\bm \epsilon'}{\sigma\sqrt{1-\kappa}}~\textrm{[Equation~\eqref{eqn:x_normal_dist}]}\\
    &= \frac{\beta}{n\sigma^2 + \beta^2}\frac{(\bm y - \bm Z\hat{\bm \theta})^T(\bm y - \bm Z{\bm \theta})}{\sigma\sqrt{1-\kappa}}\\
    &=\frac{\beta}{\sigma^2 + \beta^2/n}\frac{(\bm y - \bm Z\hat{\bm \theta})^T(\bm y - \bm Z{\bm \theta})}{n\sigma \sqrt{1-\kappa}}.
\end{align*}

Using Equation (14) of the statement of \citet[Lemma 9]{crt_power}, we have that
\[
    \mu_{2,n}\asto \frac{\beta\lambda}{\alpha_\lambda \tau_\lambda\sigma\sqrt{1-\kappa}},
\]
where $\alpha_{\lambda}$ and $\tau_\lambda$ are solutions to the AMP equations. Next,
\begin{align*}
    \Sigma_{11,n} &= \frac{\sigma^2}{\sigma^2 + \beta^2/n}\frac{\|\tilde{\bm F}\|^2/n}{\sigma^2(1-\kappa)} =  \frac{\sigma^2}{\sigma^2 + \beta^2/n}\frac{\|(\bm I - \bm P_{\bm Z})\bm y\|^2/n}{\sigma^2(1-\kappa)} =  \frac{\sigma^2}{\sigma^2 + \beta^2/n}\frac{\|(\bm I - \bm P_{\bm Z})\bm \epsilon'\|^2/n}{\sigma^2(1-\kappa)}\\
    &\Pto 1 .
\end{align*}
\begin{align*}
    \Sigma_{12,n} &= \frac{\sigma^2}{\sigma^2 + \beta^2/n}\tilde{\bm F}^T\tilde{\bm H}/n =  \frac{\sigma^2}{\sigma^2 + \beta^2/n}\frac{\bm y^T(\bm I - \bm P_{\bm Z})(\bm y - \bm Z\hat{\bm \theta})/n}{\sigma^2(1-\kappa)} = \frac{\sigma^2}{\sigma^2 + \beta^2/n}\frac{\|(\bm I - \bm P_{\bm Z})\bm y\|^2/n}{\sigma^2(1-\kappa)}\\
    &\Pto 1.
\end{align*}
\begin{align*}
    \Sigma_{22,n} &= \mathrm{Var}\left(G_n\middle|\bm y, \bm Z\right)\\
    &= \frac{\sigma^2}{n\sigma^2+\beta^2}\frac{\|\bm y - \bm Z\hat{\bm \theta}\|^2}{\sigma^2(1-\kappa)}~\textrm{[Equation~\eqref{eqn:x_normal_dist}]}\\
    &= \frac{\sigma^2}{\sigma^2+\beta^2/n}\frac{\|\bm y - \bm Z\hat{\bm \theta}\|^2}{n\sigma^2(1-\kappa)}.
\end{align*}

Then, from Equation (13) of the statement of \citet[Lemma 9]{crt_power}, we have that $\Sigma_{22,n}\asto \frac{\lambda^2}{\alpha_\lambda^2\sigma^2(1-\kappa)}$.

Now, let $\bm C_n := (F_n,G_n)^T$ and let $\psi_{\bm C_n}$ denote its characteristic function (CHF) given by
\[
    \psi_{\bm C_n}(\bm t) = \mathbb E\left[\exp\left(i\bm t^T\bm C_n\right)\right],
\]
where $i = \sqrt{-1}$. Then we have, using the expression for the CHF of a normally distributed random variable,
\begin{align*}
    \psi_{\bm C_n}(\bm t) &= \mathbb E\mathbb E\left[\exp\left(i\bm t^T\bm C_n\right)\middle| \bm y, \bm Z\right] = \mathbb E\left[\exp\left(i\bm t^T\bm \mu_n - \frac{1}{2}\bm t^T\bm \Sigma_n\bm t\right)\right]\\
    &=\mathbb E\left[\exp\left(-\frac{1}{2}\bm t^T\bm \Sigma_n\bm t\right)\left(\cos(\bm t^T\bm \mu_n) + i\sin(\bm t^T\bm \mu_n)\right)\right]\\
    &=\mathbb E\left[\exp\left(-\frac{1}{2}\bm t^T\bm \Sigma_n\bm t\right)\cos(\bm t^T\bm \mu_n)\right] + i \mathbb E\left[\exp\left(-\frac{1}{2}\bm t^T\bm \Sigma_n\bm t\right)\sin(\bm t^T\bm \mu_n)\right].
\end{align*}
Now let $\bm\mu\in \mathbb R^2$ and $\bm \Sigma\in \mathbb R^{2\times 2}$ denote the in-probability limits of $\bm \mu_n$ and $\bm \Sigma_n$, respectively. Then, an application of bounded convergence theorem implies that
\begin{align*}
    &(0,1)\ni \mathbb E\left[\exp\left(-\frac{1}{2}\bm t^T\bm \Sigma_n\bm t\right)\cos(\bm t^T\bm \mu_n)\right]\to \mathbb E\left[\exp\left(-\frac{1}{2}\bm t^T\bm \Sigma\bm t\right)\cos(\bm t^T\bm \mu)\right]\\
    &(0,1)\ni \mathbb E\left[\exp\left(-\frac{1}{2}\bm t^T\bm \Sigma_n\bm t\right)\sin(\bm t^T\bm \mu_n)\right]\to \mathbb E\left[\exp\left(-\frac{1}{2}\bm t^T\bm \Sigma\bm t\right)\sin(\bm t^T\bm \mu)\right],
\end{align*}
thereby showing that for all $\bm t \in \mathbb R^2$,
\begin{align*}
    \psi_{\bm C_n}(\bm t)\to \exp\left(i\bm t^T\bm \mu -\frac{1}{2}\bm t^T\bm \Sigma \bm t\right),\textrm{ as }n\to \infty.
\end{align*}
As pointwise convergence of CHF implies convergence in distribution, the above implies that
\begin{align*}
    \begin{pmatrix}
        F_n\\
        G_n
    \end{pmatrix}\dto  \begin{pmatrix}
        F\\
        G
    \end{pmatrix}\sim \mathcal N \left(\begin{pmatrix}
            \frac{h\sqrt{1-\kappa}}{\sigma}\\
            \frac{h\lambda}{ \alpha_\lambda \tau_\lambda\sigma\sqrt{1-\kappa}}
        \end{pmatrix}, 
        \begin{bmatrix}
            1 & 1\\
           1 & \frac{\lambda^2}{\alpha_\lambda^2\sigma^2(1-\kappa)}
        \end{bmatrix}\right).
\end{align*}
Next, observe from Equation~\eqref{eqn:r_pval}, the transformation $(F_n,G_n)^T\mapsto \repval$ is continuous except for $(F_n, G_n)\in \{(f,g):g=0\}$, which precisely arises from the discontinuity of the sign function at $b=0$. This set has measure 0 under the joint normal distribution of $(F,G)^T$. Hence, by the continuous mapping theorem, along with applying Slutsky's lemma and the facts that $\|(\bm I - \bm P_{-j})\bm X_j\|\Pto (1-\kappa)$ and $\beta = h$ establishes the in-distribution limit of $\repval$ claimed in Theorem~\ref{thm:power}. 
\end{proof}

Similar ideas can be used to derive the asymptotic distributions of  $p_{j,n}^{z,1}$ and $p_{j,n}^{z,2}$ as well. Observe from the relations derived in Equation~\eqref{eqn:decompositions} that
\begin{align}
    \label{eqn:z_pval}
     p_{j,n}^{z,1} = 1-\Phi\left(z_1\right) = 1-\Phi\left(\frac{F_n\sqrt{1-\kappa}}{\|(\bm I - \bm P_{-j})\bm X_j\|}\right).
\end{align}
Using the facts that $F_n\dto F$ and $\|(\boldsymbol I - \boldsymbol P_{-j})\boldsymbol X_j\|^2\Pto 1-\kappa$ (Equation~\eqref{eqn:1-kappalimit}), an application of Slutsky's theorem yields, $\frac{F_n\sqrt{1-\kappa}}{\|(\bm I - \bm P_{-j})\bm X_j\|}\dto F$. The continuity of $\Phi(\cdot)$ and the continuous mapping theorem then yields
\[
    p_{j,n}^{z,1}\dto 1-\Phi(F).
\]
Similarly, noting that the two-sided $z$-test $p$-value can be written as
\[
    p_{j,n}^{z,2} = 1-[\Phi(z_1) - \Phi(-z_1)]\mathrm{sign}(z_1)
\]
Similarly as above (and noting that the sign function is only discontinuous at 0, a zero measure set under $N(0,1)$), we can argue that $p_{j,n}^{z,2}\dto 1-[\Phi(F) - \Phi(-F)]\mathrm{sign}(F)$. This completes the proof of the asymptotic distributions of all the $p$-values introduced in Section~\ref{sec:power_analysis}.
\color{black}

\color{black}
\section{Further investigation of power results from Section~\ref{sec:power_analysis}}
\label{sec:app_power_analysis}

\subsection{Achieving one-sided behavior}
\label{sec:power_behavior}
Recall that in Section~\ref{sec:power_analysis}, we derived the asymptotic distribution of the $p$-value of the re-centered test as
\[
    1-\left[\Phi\left(F \right) - \Phi\left(F-2G\right)\right]\mathrm{sign}(G),
\]
while that of the one-sided $z$-test as $1-\Phi(F)$, with the distribution of $(F,G)$ given in Equation~\eqref{eqn:ab_dist}. Observe that the closeness between these two $p$-values depends on $\mathrm{sign}(G)$ taking +1 with high probability and $J = 2G-F$ being highly positive with high probability. To study the first phenomenon, observe that
\begin{align*}
    &\mathbb P(\mathrm{sign}(G) = 1) = \mathbb P(G>0)=\mathbb P\left(\frac{G - \frac{h\lambda}{\sigma\alpha_\lambda \tau_\lambda\sqrt{1-\kappa}}}{\frac{\lambda}{\sigma\alpha_\lambda\sqrt{1-\kappa}}}>-\frac{h}{\tau_\lambda}\right)=\Phi\left(\frac{h}{\tau_\lambda}\right).
\end{align*}
This shows that the probability of $\mathrm{sign}(G)$ attaining $+1$ increases with the standardized mean of $G$, $h/\tau_\lambda$, which increases with signal size ($h$) and LASSO performance ($1/\tau_\lambda$). To understand this behavior further, we will consider a distribution $B_0 = (1-\pi)\delta_{\{0\}} + \pi\delta_{\{h\}}$ and plot this standardized mean of $G$ across  varying values of $h$, $\kappa$ and $\pi$, each time keeping the other two parameters fixed. 
To understand the behavior of $J$, we look at its standardized and un-standardized means for different varying values of $h,\kappa$ and $\pi$, as described above. The plots of the standardized means of $G$ and $J$ are in Figure~\ref{fig:stand_mean}, while the un-standardized mean of $J$ is in Figure~\ref{fig:unstand_mean}.

\begin{figure}[h]
    \centering
\includegraphics{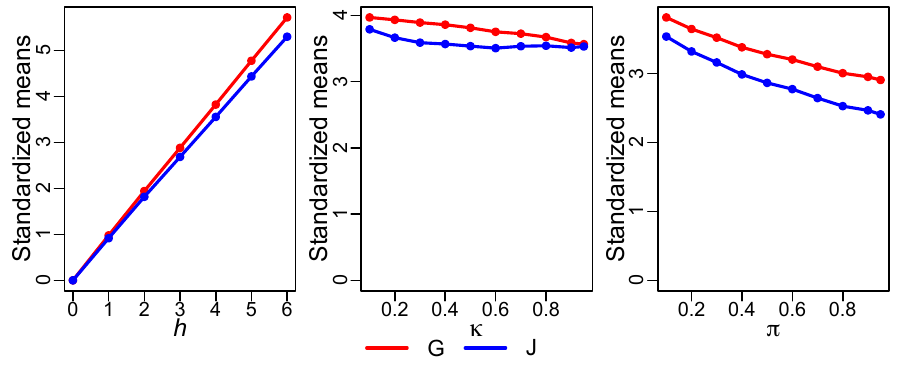}
    \caption{Plot of the standardized means of $G$ and $J$. For the left, we fix $\kappa = 0.5, \pi = 0.1$ and vary $h$. For the center, we fix $h = 3, \pi = 0.1$ and vary $\kappa$. For the right, we fix $h = 3, \kappa = 0.5$ and vary $\pi$.}
    \label{fig:stand_mean}
\end{figure}

\begin{figure}[h]
    \centering
\includegraphics{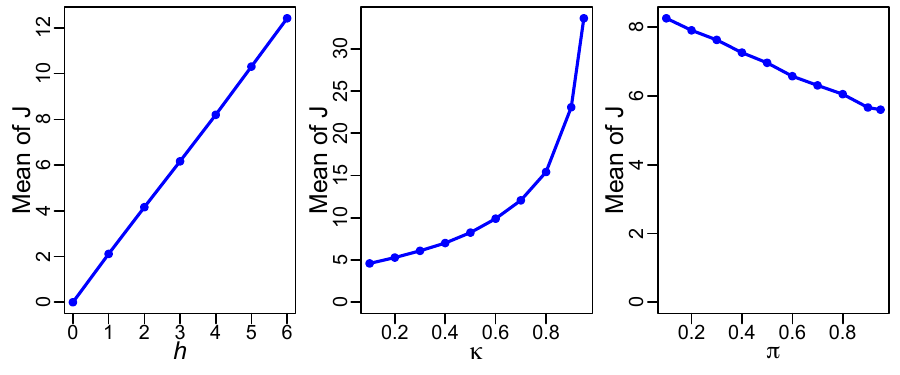}
    \caption{Plot of the (un-standardized) mean of $J$. The parameter specifications are exactly the same as in Figure~\ref{fig:stand_mean}.}
    \label{fig:unstand_mean}
\end{figure}

From Figure~\ref{fig:stand_mean}, as expected, with increasing $h$, the standardized mean of $G$ increases, showing a greater probability of $\mathrm{sign}(G)$ attaining $+1$. We see that an increase in $\kappa$ is associated with a decrease in this value, which can be explained by the poor LASSO performance (resulting in larger $\tau_\lambda$) as the dimension increases. A similar decrease is observed when the sparsity goes down (i.e., $\pi$ increases). However, note that in the latter two cases (and even when $\pi = 0.95$), the standardized mean lies above 2, which indicates less than $3\%$ chance of $\mathrm{sign}(G)$ being $-1$. Similar trends are also observed for the standardized mean of $J$.

The un-standardized mean of $J$ plotted in Figure~\ref{fig:unstand_mean} also shows similar trends, except for the plot where $\kappa$ varies. In this case, we see that the un-standardized mean increases with $\kappa$, which can be explained by the $\sqrt{1-\kappa}$ in the denominator of its expression (which gets removed once $J$ is standardized, as $\sqrt{1-\kappa}$ also appears in the denominator of the expression for the standard deviation of $J$), which gets smaller as $\kappa$ increases. Note that the mean of $J$ is approximately always above $5$ in the middle and right plot. Thus, for an appreciable signal size ($h\geq 3$) and across all the settings, the fact that the mean of $J$ lies above $5$ as well as its standardized mean is above $2$ (follows from Figure~\ref{fig:stand_mean}) shows that $J$ mostly puts its mass far into the positive half of the real line. This, combined with our understanding of $\mathrm{sign}(G)$'s behavior above, shows that with appreciable signal size, and across a large range of choices of $\kappa$ and $\pi$, the re-centered test performs close to the one-sided $z$-test.

\subsection{Power curves of the re-centered $z_1$-based-test}

\label{sec:power_curves}
In this section, we compare the limiting power curve of the re-centered test, obtained in Section~\ref{sec:power_analysis} under Setting~\ref{setting:amp}, along with the one- and two-sided $z$-tests for the following two candidates for $B_0$ (that is, the distribution from which the rest of the coefficients are drawn):
\begin{enumerate}
    \item $B_0 = \mu_1(\pi,h): = \pi\delta_{(0)} + (1-\pi)\delta_{(h)}$, where $h$ is the signal of the coefficient we are testing; here, $\delta_{(x)}$ represents a distribution putting all its mass at $x$.
    \item $B_0 = \mu_2 := \mathcal N(0,1)$.
\end{enumerate}
Observe that the distribution $\mu_1(\pi,h)$ on average sets a fraction $1-\pi$ of the coefficients to 0, while the rest have the same amplitude as the coefficient for which we are testing. So $1-\pi$ plays the role of sparsity. However, $\mu_2$ does not introduce any true sparsity in the coefficients. We compare the power curves for the three methods using $B_0=\mu_1(\pi,h)$ for different values of $\pi$ and $h$ in Figure \ref{fig:power_ber}, while those for $B_0=\mu_2$ are presented in Figure \ref{fig:power_normal}. For empirical validation, we also plot the average empirical power curves of the re-centered test and the known-$\sigma$-$\ell$-test using $n=1000$. To make these two curves (which use cross-validation to choose $\lambda$) comparable with our theoretical power curve (which takes $\lambda$ as fixed), we set $\lambda$ for our each point on our theoretical power curve by averaging the cross-validated $\lambda$ choice over 1000 data sets generated from the appropriate distribution.


\begin{figure}[h]
    \centering      \includegraphics{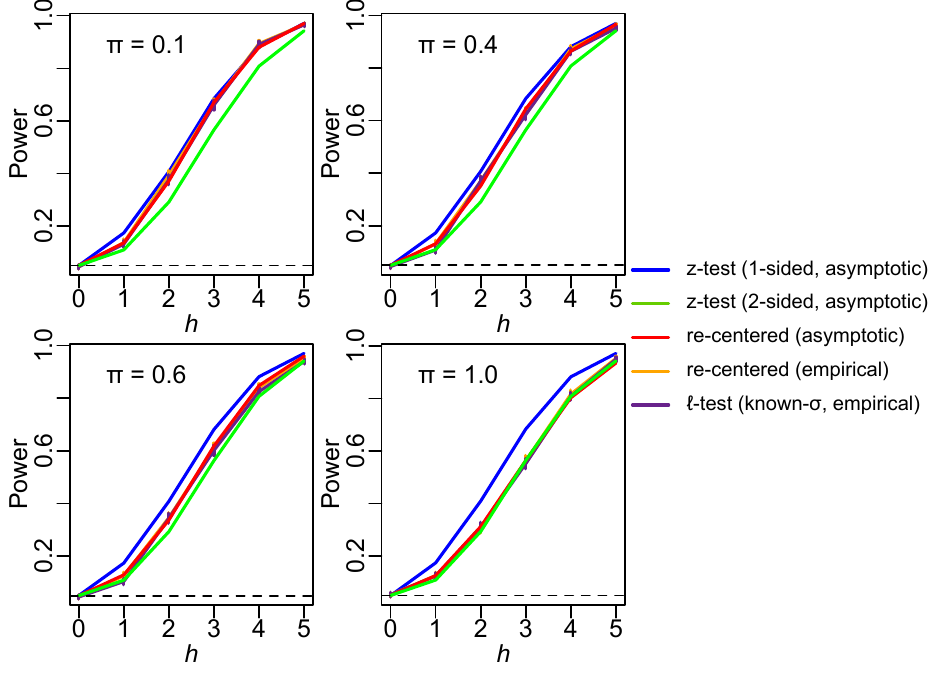}
    \caption{\color{black}Various power curves for $B_0 = \mu(\pi,h)$, with $\pi$ varying in $\{0.1,0.4,0.6,0.9\}$ across the four panels and $h$ varying in the x-axis in each of them. We used $\kappa = 0.5$ and $\sigma = 1$. 
    The error bars represent plus or minus 2 standard errors.\color{black}}
    \label{fig:power_ber}
\end{figure}

\begin{figure}[h]
    \centering      \includegraphics{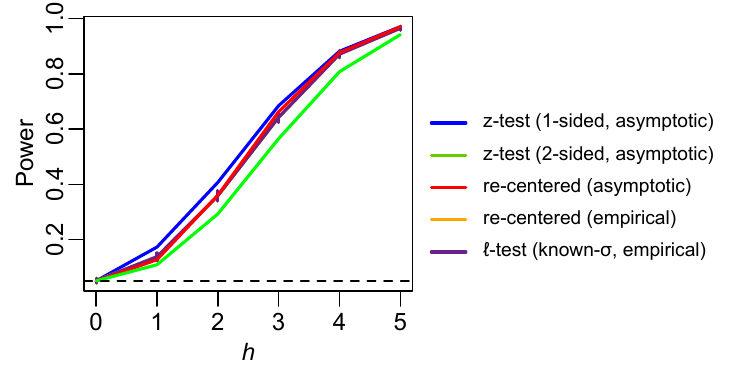}
    \caption{\color{black}Various power curves for $B_0 = \mathcal N(0,1)$, with $h$ varying in the x-axis. We used $\kappa = 0.5$ and $\sigma = 1$. 
    The error bars represent plus or minus two standard errors.\color{black}}
    \label{fig:power_normal}
\end{figure}

From both Figures \ref{fig:power_ber} and \ref{fig:power_normal}, we observe excellent agreement between the empirical and theoretical power-curves for the re-centered test. We also see that the empirical power curve for the (known-$\sigma$) $\ell$-test is extremely close to that of the re-centered test, highlighting that the two tests perform very similarly. For Figure \ref{fig:power_ber}, we see that between $\pi = 0.1$ to $\pi = 1.0$, the power of the $\ell$- and re-centered tests starts to move from being closer to the one-sided $z$-test power curve to that of its two-sided counterpart, demonstrating the importance of sparsity for the power benefit of re-centering. Notably, even with $100\%$ of the $\bm \beta$ vector populated (with signals equally strong as the coefficient we are trying to test for), the $\ell$- and re-centered tests perform indistinguishably from the two-sided $z$-test, showing that even without sparsity, at least asymptotically, our approach gives up essentially nothing compared with the standard two-sided test.

This resilience to lack of sparsity is further corroborated by Figure \ref{fig:power_normal}, where the entries of $\bm \beta_{-j}$ are drawn from $\mathcal N(0,1)$. Here, there is no exact sparsity (i.e., none of the coefficients are exactly 0), even though one can treat these coefficients as approximately or `weakly' sparse, as some of them are closer to zero than others. Even in this case, the $\ell$- and re-centered tests exhibit power significantly higher than that of the one-sided $z$-test.
\color{black}

\section{Characterization of $\hat{\bm \beta}^{\lambda}_{-j}(b)$ as a function of $b$}
\label{sec:app_beta_path}

\begin{algorithm}[h]
	\caption{Piecewise linear characterization of $\hat{\bm \beta}^{\lambda}(b)$ (as defined in \eqref{eqn:app_beta_x})}
	\label{alg:app_betapath}
    \textbf{Input:} Data: $(\bm y, \bm v, \bm X)$, Regularization parameter: $\lambda$.\\
    \textbf{Output: }Piecewise linear characterization of $\hat{\bm \beta}^{\lambda}(b)$.\\
    Find a point a point $x_0$ such that $\hat{\bm \beta}^{\lambda}$ is differentiable at $x_0$ and set $b=x_0$ \\
    Calculate $\bm \gamma(x_0)$ and set $\bm \gamma = \bm\gamma(x_0)$.\\
    \While{Not stopped}{
        Find $d_1:=\min\left\{d>0:\left\lvert\bm Z_i^T\left(\bm y - \bm v b - \bm Z (\hat{\bm \beta}^{\lambda}(b) + d \bm \gamma )\right)\right\rvert = n\lambda\right\}$.\\
        Find $d_2 := \min\left\{d>0: \textrm{ at least one coordinate of }\hat{\bm \beta}^{\lambda}_{S(b)}+d\bm \gamma_{S(b)}\textrm{ is }0\right\}$.\\
        \If {the minima could not be found in the above two steps}{
            Label $b$ as a terminal knot\\
            Record $\bm \gamma(b)$ and tag it with $b$.\\
            Break the While loop.\\
        } 
        Set $d \leftarrow \min\{d_1,d_2\}$.\\
        Set $\hat{\bm \beta}^{\lambda}(b+d) = \hat {\bm \beta}^{\lambda}(b)+d\bm\gamma$.\\
        Set $b = b+d$\\
        Set $\bm \gamma = \bm \gamma (b)$
        Label $b$ a non-terminal knot.\\
        Record $(b,\hat{\bm \beta}^\lambda(b))$.
    }
    Repeat from Step 4, but with $\bm \gamma = -\bm \gamma(x_0)$ instead.\\
    Generate piecewise linear paths by linearly interpolating $\hat{\bm \beta}^\lambda(b)$ between two non-terminal knots or a terminal and non-terminal knot. For a terminal knot, $b_t$, towards the side where there is no other knot, generate the linear path with slope $\bm \gamma(b_t)$ originating at $(b_t,\hat{\bm \beta}^\lambda(b_t))$.
\end{algorithm}

In Sections \ref{sec:ciforbeta} and \ref{sec:postselection}, we saw that for obtaining the $\ell$-test confidence intervals for $\beta_j$, one needs to evaluate the function $\hat{\bm \beta}^{\lambda}_{-j}(b)$ for different values of $b$. Recall from \eqref{eqn:betax} that $\hat{\bm \beta}^{\lambda}_{-j}(b)$ is defined as
\[
     \hat{\bm \beta}_{-j}^{\lambda}(b):= \underset{\bm{\beta}_{-j}\in \mathbb R^{d-1}}{\arg\min}\left(\frac{1}{2n}\|\bm y -b \bm X_j - \bm X_{-j}\bm {\beta}_{-j}\|^2 + \lambda \|\bm {\beta}_{-j}\|_{1}\right).
\]
Certainly, evaluating $\hat{\bm \beta}^{\lambda}_{-j}(b)$ for different values of $b$ is computationally expensive as each evaluation requires a LASSO run. In this section, we show that this computation burden can be relieved significantly by providing a characterization of $\hat{\bm \beta}^{\lambda}_{-j}(b)$. 

We will ease notations and for this section we will assume that we have data of the form $(\bm y, \bm v, \bm Z)$ and define,
\begin{equation}
\label{eqn:app_beta_x}
    \hat{\bm \beta}^{\lambda}(b):= \underset{\bm{\beta}}{\arg\min}\left(\frac{1}{2n}\|\bm y -b \bm v - \bm Z\bm {\beta}\|^2 + \lambda \|\bm {\beta}\|_{1}\right).
\end{equation}
The above notation is valid in this section only and any mention of $\hat{\bm \beta}^\lambda(b)$ would always imply the above. We will characterize $\hat{\bm \beta}^{\lambda}(b)$ as a function of $b\in \mathbb R$. Note that one can obtain $\hat{\bm \beta}^\lambda_{-j}(b)$ (as defined in \eqref{eqn:betax}) by substituting $\bm Z = \bm X_{-j}$ and $\bm v = \bm X_{j}$.

In the following proposition, we first show that $\hat{\bm \beta}^{\lambda}(b)$ is a piecewise linear function of $b$, in which we take heavy inspiration from \citet{piecewiselinear} where the authors establish that the optimizers of $\ell_1$-penalized, twice-differentiable likelihoods are piecewise linear in the regularization parameter $\lambda$ and provide algorithm for generating these paths.
\begin{prop}
    \label{prop:app_betapath}
    Assume that $\bm y,\bm v \in \mathbb R^n$ and $\bm Z \in \mathbb R^{n\times d}$, for $n>d$. Then, for $\hat{\bm \beta}^{\lambda}(b)$ defined as in \eqref{eqn:app_beta_x}, we have that 
    $\hat{\bm \beta}^{\lambda}(b)$ is a continuous, piecewise linear function of $b$.
\end{prop}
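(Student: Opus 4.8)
The plan is to run the standard solution‑path argument for $\ell_1$‑penalized least squares, but with the path parameter being the shift $b$ rather than the penalty $\lambda$, in the spirit of \cite{piecewiselinear}. Write $g_b(\bm\beta) := \frac{1}{2n}\|\bm y - b\bm v - \bm Z\bm\beta\|^2 + \lambda\|\bm\beta\|_1$. Since $\bm Z$ is full column rank (as it is in the application, where $\bm Z = \bm X_{-j}$ and $\bm X$ is full column rank), $g_b$ is strictly convex for every $b$, so $\hat{\bm\beta}^\lambda(b)$ is the \emph{unique} minimizer, characterized by the subgradient (KKT) conditions: writing $S = S(b) := \{k : \hat\beta_k^\lambda(b) \neq 0\}$ and $\bm\eta = \bm\eta(b) := \mathrm{sign}(\hat{\bm\beta}_S^\lambda(b))$, one has $\hat{\bm\beta}_{S^c}^\lambda(b) = \bm 0$, the active‑block stationarity equation $\tfrac{1}{n}\bm Z_S^T\big(\bm y - b\bm v - \bm Z_S\hat{\bm\beta}_S^\lambda(b)\big) = \lambda\bm\eta$, and the dual‑feasibility inequalities $\big|\tfrac{1}{n}\bm Z_k^T(\bm y - b\bm v - \bm Z_S\hat{\bm\beta}_S^\lambda(b))\big| \le \lambda$ for $k \notin S$.

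First I would solve the stationarity equation for each candidate pair: for a subset $S \subseteq \{1,\dots,d\}$ and a sign vector $\bm\eta \in \{-1,1\}^{|S|}$ (there are at most $3^d$ such pairs, i.e.\ finitely many), define $\bm\beta^{S,\bm\eta}(b)$ to be zero off $S$ and equal to $(\bm Z_S^T\bm Z_S)^{-1}\big(\bm Z_S^T(\bm y - b\bm v) - n\lambda\bm\eta\big)$ on $S$ — the inverse exists because $\bm Z$, hence $\bm Z_S$, is full column rank. This is an \emph{affine} function of $b$. By uniqueness of the LASSO solution, $\hat{\bm\beta}^\lambda(b) = \bm\beta^{S,\bm\eta}(b)$, with active set exactly $S$ and signs exactly $\bm\eta$, precisely for $b$ in the set $I_{S,\bm\eta}$ of those $b$ satisfying $\eta_k\beta_k^{S,\bm\eta}(b) > 0$ for all $k \in S$ and $\big|\tfrac{1}{n}\bm Z_k^T(\bm y - b\bm v - \bm Z_S\bm\beta_S^{S,\bm\eta}(b))\big| \le \lambda$ for all $k \notin S$: the first family of conditions encodes that the active coordinates are nonzero with the right sign, and the second encodes dual feasibility, while the stationarity equation holds by construction of $\bm\beta^{S,\bm\eta}$. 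Each of these constraints is an affine inequality in $b$, so $I_{S,\bm\eta}$ is an intersection of finitely many (open) half‑lines and closed intervals, hence itself an interval, and on it $\hat{\bm\beta}^\lambda$ coincides with the affine map $b \mapsto \bm\beta^{S,\bm\eta}(b)$.

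To conclude, note that the $I_{S,\bm\eta}$ are pairwise disjoint (each records the exact active set/sign pattern of the unique solution) and their union is all of $\mathbb{R}$ (every $b$ has a solution, whose active set and signs put $b$ in the corresponding $I_{S,\bm\eta}$). Hence $\mathbb{R}$ is partitioned into finitely many intervals, on each of which $\hat{\bm\beta}^\lambda(b)$ equals an affine function of $b$ — which is exactly the asserted piecewise linearity, with only finitely many knots. Continuity then follows either by invoking Berge's maximum theorem \citep{berge} exactly as in the proof of Theorem~\ref{thm:distexpression}, or directly: at a knot $b_0$ sitting at the right endpoint of some $I_{S_1,\bm\eta_1}$, let $b \uparrow b_0$ along $I_{S_1,\bm\eta_1}$ and pass to the limit in the defining inequalities (the strict sign conditions relax to $\ge 0$, the dual‑feasibility inequalities survive, and where an active coordinate hits $0$ its stationarity correlation has absolute value exactly $\lambda$, still a valid subgradient value); one checks $\bm\beta^{S_1,\bm\eta_1}(b_0)$ satisfies the KKT conditions at $b_0$, hence by uniqueness equals $\hat{\bm\beta}^\lambda(b_0)$, giving $\lim_{b\uparrow b_0}\hat{\bm\beta}^\lambda(b) = \hat{\bm\beta}^\lambda(b_0)$, and symmetrically from the right.

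The step I expect to be the main obstacle is the one tying the combinatorics of active sets to this clean geometric decomposition: verifying carefully that each region $I_{S,\bm\eta}$ really is an interval (convexity of the solution set of the affine‑inequality system in $b$) and that the finitely many affine pieces glue continuously at the knots. Both hinge on strict convexity / uniqueness of the LASSO solution, which is why full column rank of $\bm Z$ is essential — without it one would have to work with the (non‑unique) solution set and rule out the path oscillating among several solution‑consistent affine lines near a knot. A secondary bookkeeping point is ensuring $\bm Z_S^T\bm Z_S$ is invertible for every active set $S$ that actually arises, which again is immediate from full column rank of $\bm Z$.
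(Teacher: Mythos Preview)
Your proof is correct and follows essentially the same route as the paper's: use the KKT conditions to see that the solution is affine in $b$ on each region where the active set and sign pattern are fixed, and invoke Berge's theorem for continuity. The paper's (sketched) argument differentiates the stationarity equations locally rather than solving them and tiling $\mathbb{R}$ by the resulting affine constraint sets, but your explicit partition into at most $3^d$ intervals is just a more detailed version of the same idea and in fact makes the finiteness of knots more transparent than the paper does.
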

\begin{proof}[Proof sketch]
    \textcolor{black}{First note that we can use arguments similar to the ones presented in Lemma~\ref{lem:asymp_3} to conclude that $b\mapsto \hat{\bm \beta}^{\lambda}(b)$ is a continuous transformation}. Now for any $b\in \mathbb R$, define,
    \begin{equation}
    \label{eqn:app_sx}
        S(b) = \left\{j\in [1:d]: \left(\hat{\bm \beta}^{\lambda}(b)\right)_j \neq 0\right\}.
    \end{equation}
    Note that $S(b)$ is constant in a neighborhood $b\in \mathcal N$, if and only if $\hat{\bm \beta}_{S(b)}^{\lambda}(b)$ is differentiable in $b$. Pick such a $b$, then it follows that
    \begin{equation}
    \label{eqn:app_beta_der}
        \frac{\partial }{\partial b} \hat{\bm \beta}_{S(b)}^\lambda (b)= -\left(\bm Z_{S(b)}^T\bm Z_{S(b)}\right)^{-1}\bm Z_{S(b)}^T \bm v
    \end{equation}
    Thus this shows that  $\hat{\bm \beta}^{\lambda}_{S(b)}(b)$ , and hence,  $\hat{\bm \beta}^{\lambda}(b)$ is linear in that neighborhood. And hence, $\hat{\bm \beta}^{\lambda}(b)$ is piecewise linear for $b\in \mathbb R$.
\end{proof}

We now turn our attention to methods of exactly generating these piecewise linear paths, following an approach analogous to \citet{piecewiselinear}. To begin with, denote,
\[
     L(\bm \beta):= \frac{1}{2n}\|\bm y - \bm v b - \bm Z\bm \beta\|^2
\]
Then, note that we can decompose $\bm \beta = \bm \beta^+ - \bm \beta^-$ to write our LASSO minimization problem as,
\begin{align*}
    &\textrm{Minimize: }L(\bm \beta^+ - \bm \beta ^-) + \lambda \sum (\beta_i^+ - \beta_i^-)\\
    &\textrm{Subject to: }\beta_i^+, \beta_i^-\geq 0 , \forall i.
\end{align*}
We introduce Lagarange multipliers for each of these $2p$ elements to get the Lagarangian
\[
    \mathcal L = L(\bm \beta^+ - \bm \beta ^-) + \lambda \sum ( \beta_i^+ + \beta_i^-) - \sum \lambda_i^+\beta_i^+ -\sum \lambda_i^-\beta_i^-.
\]
KKT conditions on the primal show that the following relations are suggested at the optimum for all $1\leq i\leq d$:
\begin{align*}
    &(\partial_{\bm \beta} L(\bm \beta))_i + \lambda - \lambda_i^+ = 0\\
    -&(\partial_{\bm \beta} L(\bm \beta))_i + \lambda - \lambda_i^- = 0\\
    &\lambda_i^+\beta_i^+ = 0\\
    & \lambda_i^-\beta_i^- = 0
\end{align*}
The above set of equations suggest that,
\begin{equation}
\label{eqn:app_beta0cond}
\hat{\beta}^\lambda_i(b) \neq 0 \Leftrightarrow \left\lvert\left(\partial_{\bm \beta} L\big\vert_{\bm \beta = \hat{\bm \beta}^\lambda(b)}\right)_i\right\rvert = \lambda \Leftrightarrow \left\lvert\bm Z_i^T\left(\bm y - \bm v b - \bm Z \hat{\bm \beta}^\lambda(b)\right)\right\rvert = n\lambda.
\end{equation}
We use Equation \eqref{eqn:app_beta_der}, \eqref{eqn:app_beta0cond}, along with Proposition \ref{prop:app_betapath} and the fact that the set $S(b)$ (from \eqref{eqn:app_sx}) changes only at non-differentiability points of $\hat{\bm \beta}(b)$ to devise an algorithm to exactly generate the paths of $\hat {\bm \beta}(b)$ as a function of $b$ in  Algorithm \ref{alg:app_betapath}. The algorithm uses a notation defined below for the `derivative' of $\hat{\bm \beta}^\lambda(b)$:
\begin{align*}
    \bm \gamma(b) :=& \begin{cases}
    &\bm \gamma_{S(b)}(b) = -\left(\bm Z_{S(b)}^T\bm Z_{S(b)}\right)^{-1}\bm Z_{S(b)}^T \bm v\\
    &\bm \gamma_{S(b)^c}(b) = \bm 0
    \end{cases}, \textrm{ if }S(b) \neq \phi\\
    :=& \bm 0, \textrm{ otherwise}.
\end{align*}


\section{Choice of the tuning parameter, $\lambda$}
\label{sec:app_lambda_choice}
\subsection{The min rule vs. The 1se rule}
\label{sec:app_min_vs_1se}

In Section \ref{sec:lambda_choice_singletest}, we justified that a reasonable way of choosing $\lambda$ for testing $H_j$ so that it does not invalidate our theory surrounding the $\ell$-test can be to cross validate on $(\tilde{\bm y}, \bm X_{-j})$, where $\tilde{\bm y}$ is drawn from the conditional distribution of $\bm y\mid \bm S^{(j)}$, under $H_j$. For computational convenience, we have used cross-validation on $(\bm X_{-j},\bm{\tilde y})$ instead of $(\bm X,\bm{\tilde y})$, as the two choices result in almost identical performances of the resulting methods and because the LASSO estimates obtained using $(\bm X_{-j},\bm{ y})$ is the same as that using $(\bm X_{-j},\bm{\tilde y})$, this enables us to recycle some common information from the latter dataset, thereby saving computation time while obtaining the conditional distribution of $\hat\beta_j\mid \bm S^{(j)}$, under $H_{j}$.

Throughout this paper, we recommend using the min rule for choosing $\lambda$ using cross-validation---that is, choosing the $\lambda$ that results in the smallest cross-validated error. However another popular choice with cross-validation can be to pick the largest $\lambda$ resulting in a cross-validated error within one standard deviation of the minimum cross-validated error, also known as the 1se rule. The latter rule results in stricter selection, but more severe multiplicity correction post-selection, and hence, it is not entirely clear whether the trade-off that the 1se rule presents can be any better than the min rule. 

In Section \ref{sec:app_ci_with_1se}, we provide the empirical coverage and lengths of the resulting confidence intervals when using the 1se rule for selecting $\lambda$. To summarize our findings, we observe that the 1se rule and the min rule have similar performances for constructing $\ell$-test confidence intervals except for the case when the $\bm \beta$ is very dense, in which case the min rule results in intervals of shorter length. Hence, we recommend using the min rule for choosing $\lambda$.

\subsection{The randomness in the choice of $\lambda$} 
\label{sec:app_single_test_randomness_lambda}

The rule for the choice of $\lambda$ we discussed involves sampling, $\tilde{\bm y}\sim \bm y\mid \bm S^{(j)}$, under $H_{j}$ and then running cross-validation on $(\bm X_{-j}, \tilde{\bm y})$. This suggests towards some inherent sources of variability in the method---the random sampling of $\tilde{\bm y}$ and the random splitting of the dataset to perform cross-validation. In order to understand its effect, we perform $m=100$ replications of an experiment under the setting of the left panel of Figure \ref{fig:unconditional_test} where for each replicate we form a linear model with this design matrix and obtain $p$-value for testing $H_{j}:\beta_j = 0$ for $m=100$ samples of $\tilde{\bm y}$. Let $p_{ij}$ denote the $p$-value after sampling $\tilde{\bm y}$ for the $j^{\mathrm{th}}$ time for the dataset $(\bm y^{(i)}, \bm X^{(i)})$ (corresponding to the $i^{\mathrm{th}}$ replication of drawing $(\bm y, \bm X)$ from a pre-determined distribution). For these $m^2 = 10^4$ $p$-values, we plot the empirical estimate of the overall standard deviation of the $p$-values, $\sqrt{\mathrm{Var}(p_{ij})}$ against the standard deviation conditioned on a replicate, $\sqrt{\mathbb E\mathrm{Var}\left(p_{ij}\mid (\bm y^{(i)}, \bm X^{(i)})\right)}$ in Figure \ref{fig:variability_single} in the log-scale. These quantities are estimated using $\sqrt{\frac{1}{m^2-1}\left(\sum_{i,j=1}^m (p_{ij}-\bar p_{\cdot\cdot})^2\right)}$ and $\sqrt{\frac{1}{m^2-m}\left(\sum_{i,j=1}^m(p_{ij}-\bar p_{i\cdot})^2\right)}$, respectively, where, $\bar p_{i\cdot}$ represents the mean of the entries in $\{p_{ij}:1\leq j\leq m\}$, while $\bar p_{\cdot\cdot}$ represents the mean of all the $p$-values, $\{p_{ij}:1\leq i,j\leq m\}$.
\begin{figure}[H]
    \centering
    \includegraphics{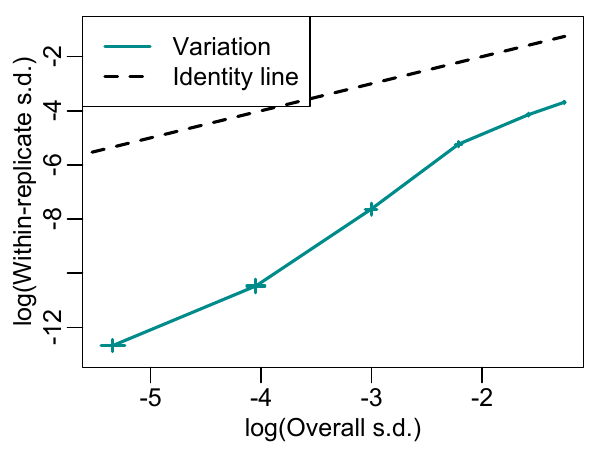}
    \caption{Relative variability in the $\ell$-test $p$-values due to the randomness in $\hat{\lambda}_j$. The error bars represent plus or minus two units of standard errors.}
    \label{fig:variability_single}
\end{figure}
This figure suggests that the variability due to sampling of $\tilde{\bm y}$ is negligible compared to the monte-carlo variability in the $p$-values due to the replication of the procedure---we see from Figure \ref{fig:variability_single} that sampling of $\tilde{\bm y}$ never accounts for more than about $9\%$ of the overall standard deviation (or about $0.8\%$ of the overall variation). This suggests that even though randomized due to sampling of $\tilde {\bm y}$, the $p$-values we obtain are stable.

Finally, note that an alternate way of doing cross-validation that does not introduce the randomness in the procedure due to sampling $\tilde{\bm y}$ is by cross-validating on $(\hat{\bm y}_j, \bm X_{-j})$ instead, where $\hat{\bm y}_j$ is the projection of $\bm y$ on the columnspace of $\bm X_{-j}$ (and is the non-zero-mean component of $\tilde{\bm y}$, when sampled from $\bm y\mid \bm S^{(j)}$ under $H_j$). Note that computation of $\ell$-distribution based on either of $(\hat{\bm y}_j, \bm X_{-j})$ or $(\tilde{\bm y}, \bm X_{-j})$ would exactly be the same. Even though we have established that the sampling of $\tilde{\bm y}$ introduces negligible randomness in the $\ell$-test $p$-value, one might wonder why introduce \emph{any} randomness at all in the first place and not cross-validate on $(\hat{\bm y}_j, \bm X_{-j})$? In Figure \ref{fig:yhat_ytilde}, we compare three possible datasets we can cross-validate on to choose $\lambda$: $(\tilde{\bm y},\bm X_{-j})$ (our default choice), $(\hat{\bm y}_j, \bm X_{-j})$ and $(\bm y, \bm X)$. Note that, as described in Section \ref{sec:lambda_choice_singletest}, the last choice is not valid as the chosen $\lambda$ will not be a function of the sufficient statistic, however as this is cross-validating on the full dataset, we can expect this chosen $\lambda$ to have the `optimal performance' and the resulting power curve can be used as a benchmark. Indeed, Figure \ref{fig:yhat_ytilde} shows that choosing $\lambda$ based on $(\hat{\bm y}_j, \bm X_{-j})$ suffers a detriment as compared to our recommended choice (which also performs almost similarly to cross-validating on the full $(\bm y, \bm X)$), providing further justification for it.

\begin{figure}[h]
    \centering
\includegraphics[height=2.7in, width = 3.6in]{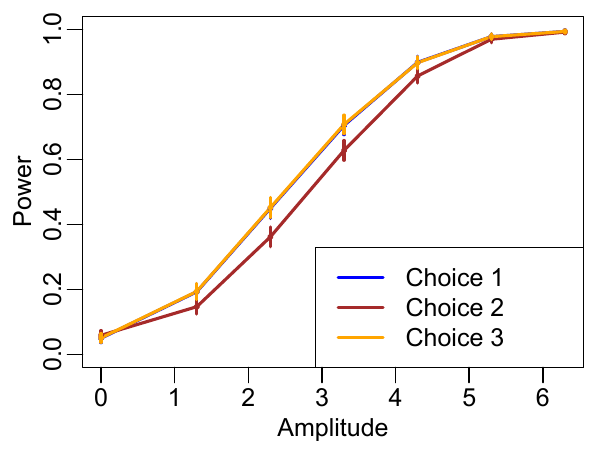}
    \caption{We use exactly the same setting as in the left panel of Figure \ref{fig:unconditional_test}. Choice 1,2 and 3 denote choosing $\lambda$ by cross-validating on $(\tilde{\bm y}, \bm X_{-j})$, $(\hat{\bm y}_j, \bm X_{-j})$ and $(\bm y, \bm X)$, respectively. The curves for Choice 1 and 3 are on top of each other. The error bars represent plus or minus two standard errors.}
    \label{fig:yhat_ytilde}
\end{figure}

\section{Further Experiments}

\subsection{Performance of the $\ell$-test under different settings}
\label{sec:app_ltest_further_expt}
To explore further aspects of the performance of the $\ell$-test, we test its performance under an additional setting, with the results are summarized in Figure \ref{fig:unconditional_test_extra}. We see that in this case, similar to Figure \ref{fig:unconditional_test}, the power of the $\ell$-test increases with increasing amplitude, but almost overlapping with that of the one-sided $t$-test. Note that, as follows from Appendix \ref{sec:app_improved_performance}, in this case where our particular choice of the design matrix is closer to un-identifiability, the $\ell$-test is more sure about its guess of the sign of the alternate $\beta_j$ as compared to the $d=50$ case. 

\begin{figure}[H]
    \centering
\includegraphics{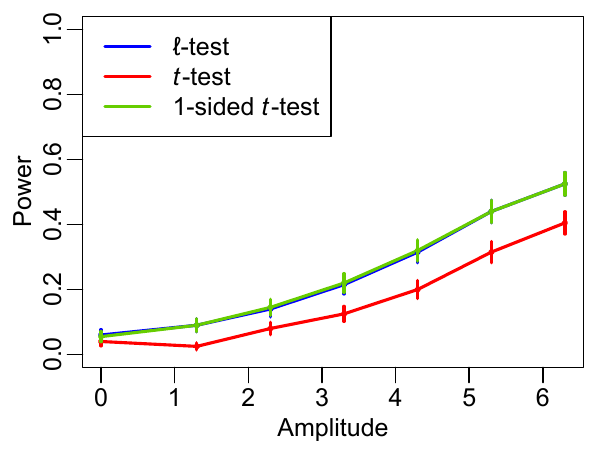}
     \caption{Exactly the same setting as in the left panel of Figure \ref{fig:unconditional_test} but with $d=90$. The error bars represent plus or minus two standard errors.}
     \label{fig:unconditional_test_extra}
\end{figure}

\subsection{Robustness of the $\ell$-test to violations of the linear model assumptions}
\label{sec:app_robustness}
In this section, we extend the results in Section \ref{sec:robust} by performing more extensive experiments to empirically evaluate the robustness of the validity of the $\ell$-test to the violations in the assumptions of the Gaussian linear model. \textcolor{black}{Aside from the details of the model violations described below, the setup is identical to that of Section \ref{sec:robust}.}
\begin{itemize}
    \item \textbf{Setting 1 (Violation of the Gaussianity of errors---effect of heavy tails): }For each specific value of $(n,d)$, we draw i.i.d. errors from a $t$-distribution with $\nu$ degrees of freedom. We vary $\nu$ between 30 and 2. For $\nu>2$, we also standardize the mean zero errors with the standard deviation of the $t_{\nu}$ distribution. We do not do this for $t_{2}$ as it does not have a finite second moment. As $\nu$ varies from 30 to 2, the tails of $t_{\nu}$ gets fatter as compared to the normal distribution. \color{black}We summarize our results in Figure \ref{fig:t_robustness} (corresponding to $\rho = 0$) and Figure \ref{fig:t_robustness_corr} (corresponding to $\rho = 0.5$).\color{black}
    \item \textbf{Setting 2 (Violation of the Gaussianity of errors---effect of skewness): }We consider a setup similar to that in Setting 1 but instead consider Gamma distributed error with scale parameter 1 and shape parameter, $\alpha$. We vary $\alpha$ between 1 and 10 and for each error draw, we standardize the error with the mean and standard deviation of the Gamma$(1,\alpha)$ distribution. This error distribution is asymmetric for smaller values of $\alpha$ and moves towards symmetry as the value of $\alpha$ increases. \color{black}We summarize our results in Figure \ref{fig:gamma_robustness} (corresponding to $\rho = 0$) and Figure \ref{fig:gamma_robustness_corr} (corresponding to $\rho = 0.5$).\color{black}
    \item \textbf{Setting 3 (Violation of homoskedasticity of error): }\color{black}We again consider a similar setup as above, but change the error distribution as follows: For design matrix, $\bm X$, let $r_i$ denote mean of the $i^{\mathrm{th}}$ row of the design matrix. We generate the error vector $\bm \epsilon = (\epsilon_1,\dots, \epsilon_n)^T$, based on the following rule:
    \begin{align}
    \label{eqn:hetero_scheme1}
         \epsilon_i \iid \begin{cases}
            \mathcal{N}(0,1),&\textrm{ if }r_i\leq0\\
            \mathcal{N}(0,\eta^2),&\textrm{ if }r_i> 0
        \end{cases}.
    \end{align}
    where $\eta^2>0$ is a quantity, specified by us, that controls the heteroskedasticity in the error term. The scheme \eqref{eqn:hetero_scheme1} represents a simple way of imparting heteroskedasticity, where the error variance of a row depends on the mean value of its entries. 
    \color{black}For $\eta = 1$, the distribution is homoskedastic while becomes heteroskedastic for larger and smaller values of $\eta$. We vary $\eta^2$ in the set $\{0.01,0.25,0.5,1,4,8\}$ and compare the performance of the two tests for each of these values. \color{black}The results are summarized in Figure \ref{fig:hetero2_robustness} (corresponding to $\rho = 0$) and Figure \ref{fig:hetero2_robustness_corr} (corresponding to $\rho = 0.5$).\color{black}
    \item \textbf{Setting 4 (Violation of the linearity assumption): }In this setting we test the robustness of the two tests to non-linearity in the model. We consider settings with similar specifications as the above three cases but with i.i.d. homoskedastic, normal errors with variance $\sigma^2$. In this case, for design matrix, $\bm X$, and error term, $\bm \epsilon$, we define,
    \[
        y_i = (\bm X_i^\delta)^T\bm \beta +\epsilon_i,
    \]
    where $\delta$ is variable we will control, and $\bm X_i^\delta$ denotes a vector whose $j^{\mathrm{th}}$ entry is the \color{black} absolute value of the $j^{\mathrm{th}}$ entry of $\bm X_i$ raised to the exponent, $\delta$, and then multiplied back by its sign\color{black}. $\delta = 1$ recovers the usual linear model and larger departure of $\delta$ from 1 imparts higher degree of non-linearity to the model. We vary $\delta$ in the set, $\{0.3,0.5,1,2,3,4\}$ and \color{black}summarize the results in Figure \ref{fig:non_linear_robustness} (corresponding to $\rho = 0$) and in Figure \ref{fig:non_linear_robustness_corr} (corresponding to $\rho = 0.5$).\color{black}
\end{itemize}
The results from all the \textcolor{black}{four} simulations suggest that the $\ell$-test and the $t$-test have similar degree of tolerance against violations of the linear model assumptions. The $t$-test exhibits robustness against the violation of the Gaussianity assumption in the error term, as Figures \ref{fig:t_robustness} and \ref{fig:gamma_robustness} indicate and we see a similar behavior for the $\ell$-test. Notably, the performance of $\ell$-test is almost similar to that of the $t$-test in the extreme cases, such as when the degrees of freedom for the error $t$-distribution is 2 in Figure \ref{fig:t_robustness} (indicating fat tails of the error distribution) or when the shape parameter of the centerd gamma distributed error is 1 (which essentially is a centerd exponential distribution with rate 1, indicating a high degree of skeweness in the error term). For Setting 3 with heteroskedastic errors, we see from Figure \ref{fig:hetero2_robustness} that the $\ell$-test's size does depart from its nominal target of $0.05$ as $\eta$ moves away from 1 (which is the homoskedastic case). However, this departure is of a similar degree as that of the $t$-test, so that both the tests exhibit similar robustness properties under this setting. From Figure \ref{fig:non_linear_robustness} as well, we see that both the $t$-test and the $\ell$-test are robust to the violation of non-linearity. \color{black}Furthermore, Figures \ref{fig:t_robustness_corr}, \ref{fig:gamma_robustness_corr}, \ref{fig:hetero2_robustness_corr} and  \ref{fig:non_linear_robustness_corr} show that the conclusions remain valid even when inter-variable dependencies are introduced with $\rho = 0.5$\color{black}. 
\begin{figure}[H]
        \centering        \includegraphics{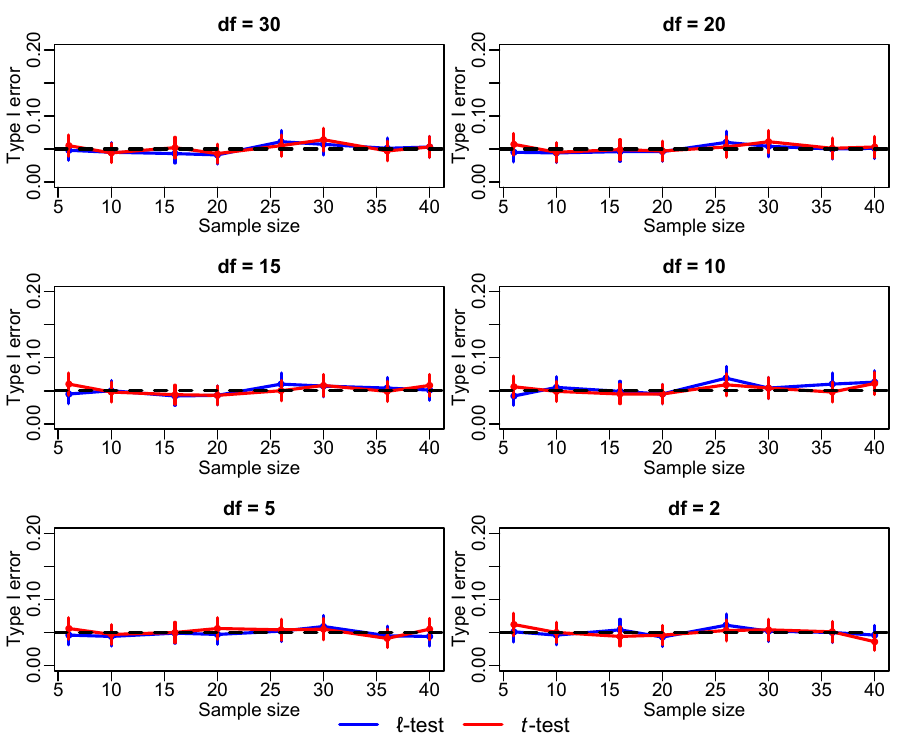}
        \caption{\color{black}$\rho = 0$. \color{black}Effect of $t$-distributed errors on the size of the $\ell$-test and the $t$-test, for different degrees of freedom. The error bars represent plus or minus two standard errors. }
        \label{fig:t_robustness}
    \end{figure}
    \begin{figure}[H]
        \centering        \includegraphics{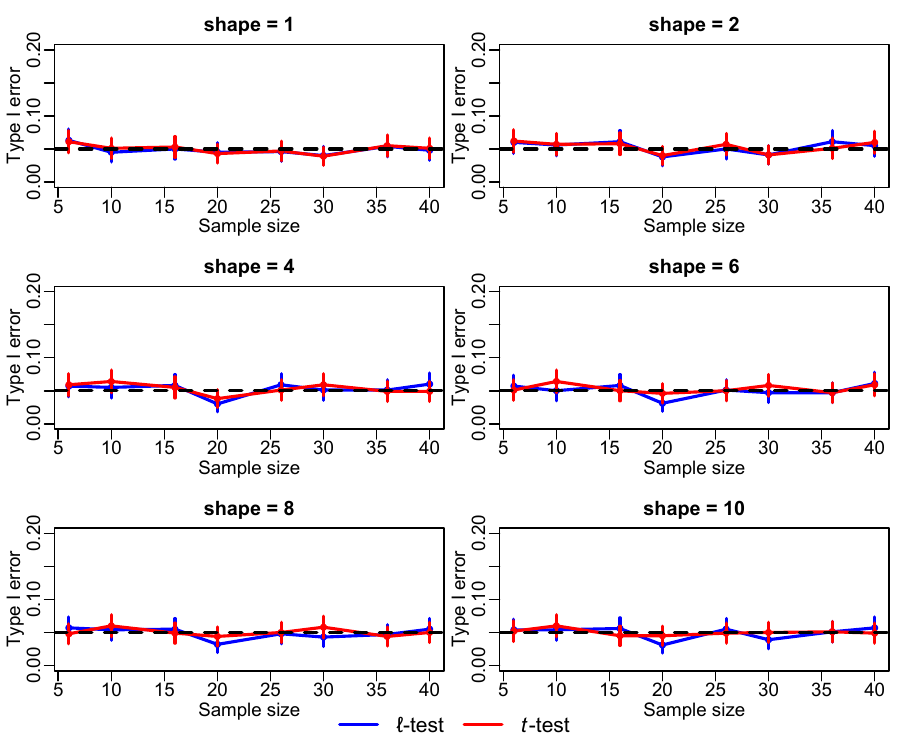}
        \caption{\color{black}$\rho = 0$. \color{black}Effect of Gamma distributed errors (with scale parameter 1) on the size of the $\ell$-test and the $t$-test, for different values of the shape parameter. The error bars represent plus or minus two standard errors.}
        \label{fig:gamma_robustness}
    \end{figure}

    \begin{figure}[H]
        \centering        \includegraphics{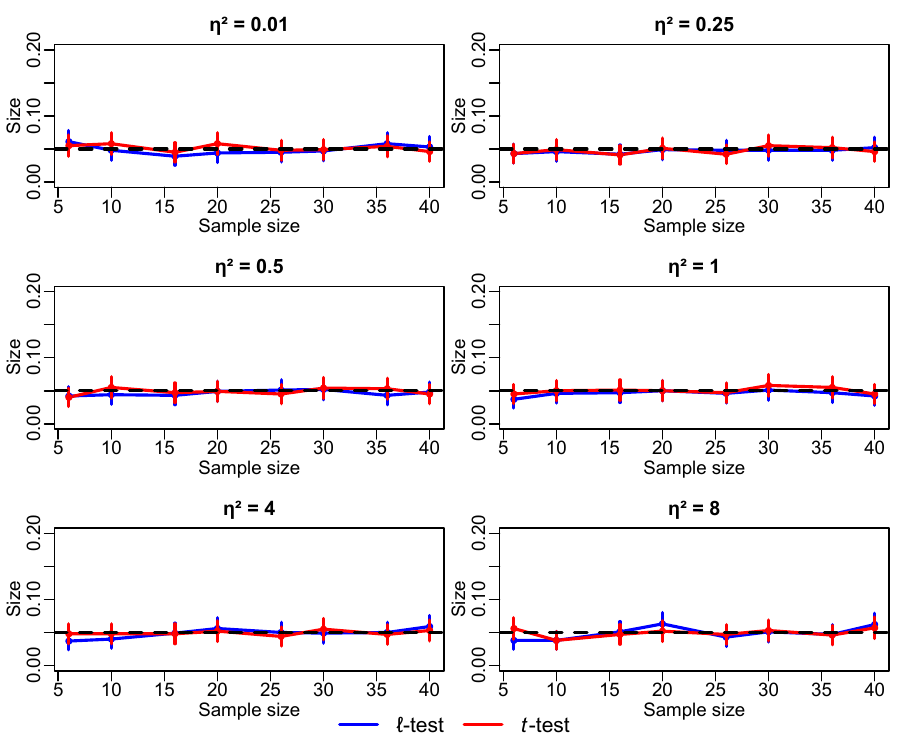}
        \caption{\color{black}$\rho = 0$. \color{black}Effect of heteroskedasticity on the size of the $\ell$-test and the $t$-test. Values of $\eta$ away from 1 indicate higher degree of heteroskedasticity. The error bars represent plus or minus two standard errors.}
        \label{fig:hetero2_robustness}
    \end{figure}
    
    \begin{figure}[H]
        \centering        \includegraphics{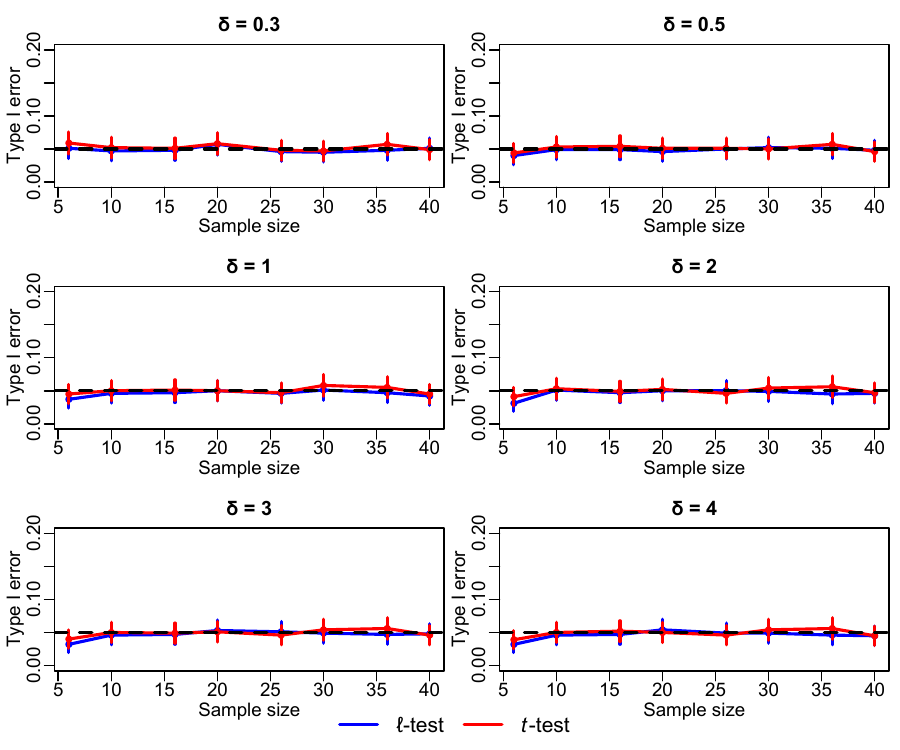}
        \caption{\color{black}$\rho = 0$. \color{black}Effect of non-linearity on the size of the $\ell$- and the $t$-test. Departure of the exponent, $\delta$ away from 1 indicates higher degree of non-linearity in the model. The error bars represent plus or minus two standard errors.}
        \label{fig:non_linear_robustness}
    \end{figure}

    \begin{figure}[H]
        \centering        \includegraphics{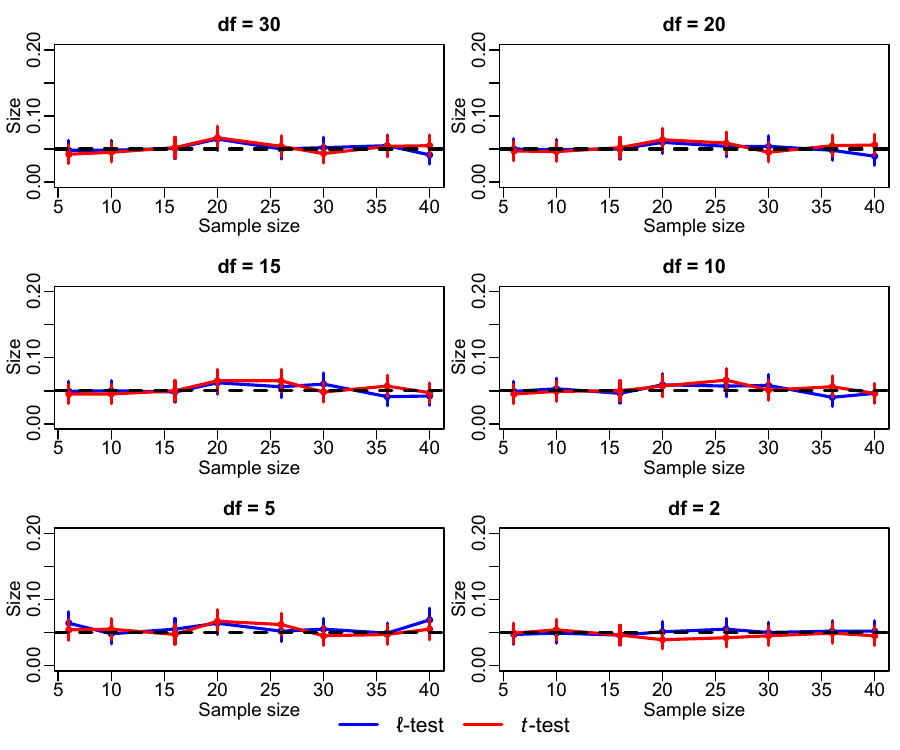}
        \caption{Same as Figure~\ref{fig:t_robustness} but with $\rho = 0.5$.}
        \label{fig:t_robustness_corr}
    \end{figure}
    \begin{figure}[H]
        \centering        \includegraphics{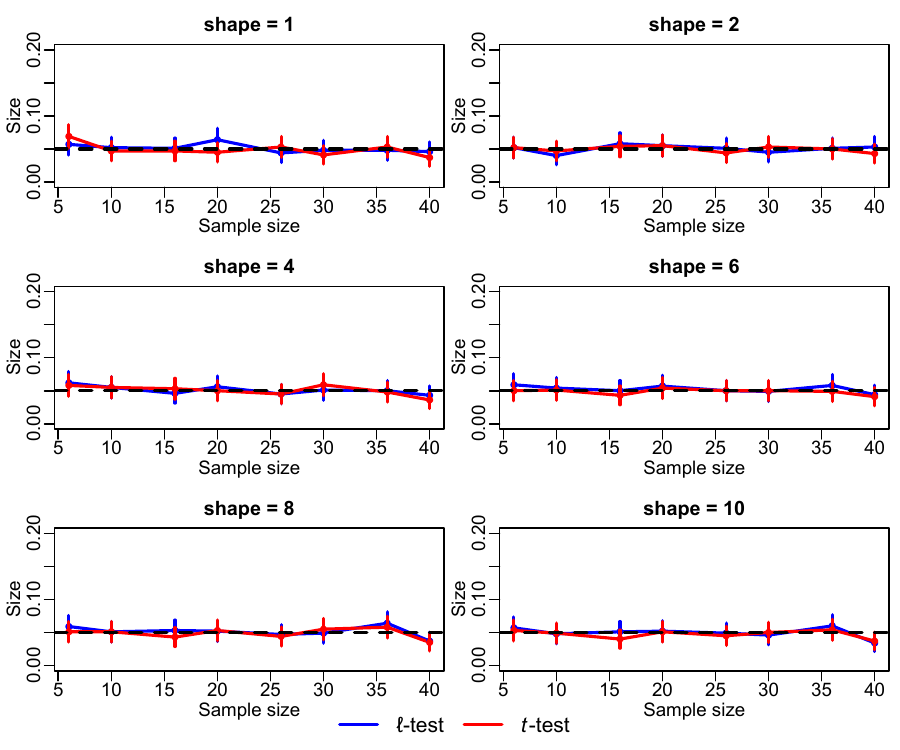}
        \caption{\textcolor{black}{Same as Figure~\ref{fig:gamma_robustness} but with $\rho = 0.5$.}}
        \label{fig:gamma_robustness_corr}
    \end{figure}
    \begin{figure}[H]
        \centering        \includegraphics{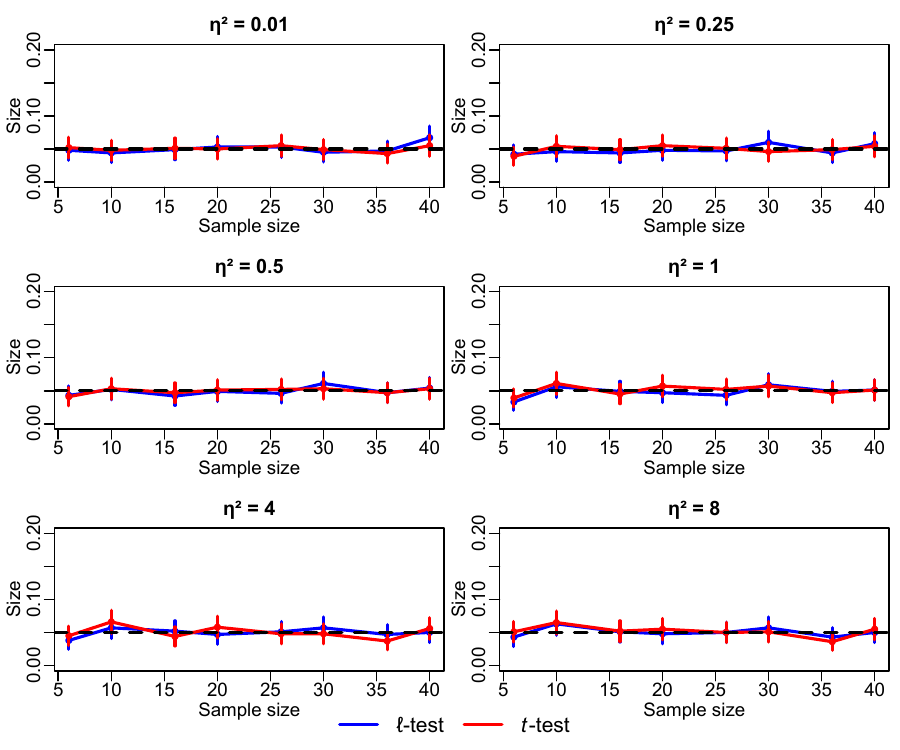}
        \caption{\textcolor{black}{Same as Figure~\ref{fig:hetero2_robustness} but with $\rho = 0.5$.}}
        \label{fig:hetero2_robustness_corr}
    \end{figure}
    \begin{figure}[H]
        \centering        \includegraphics{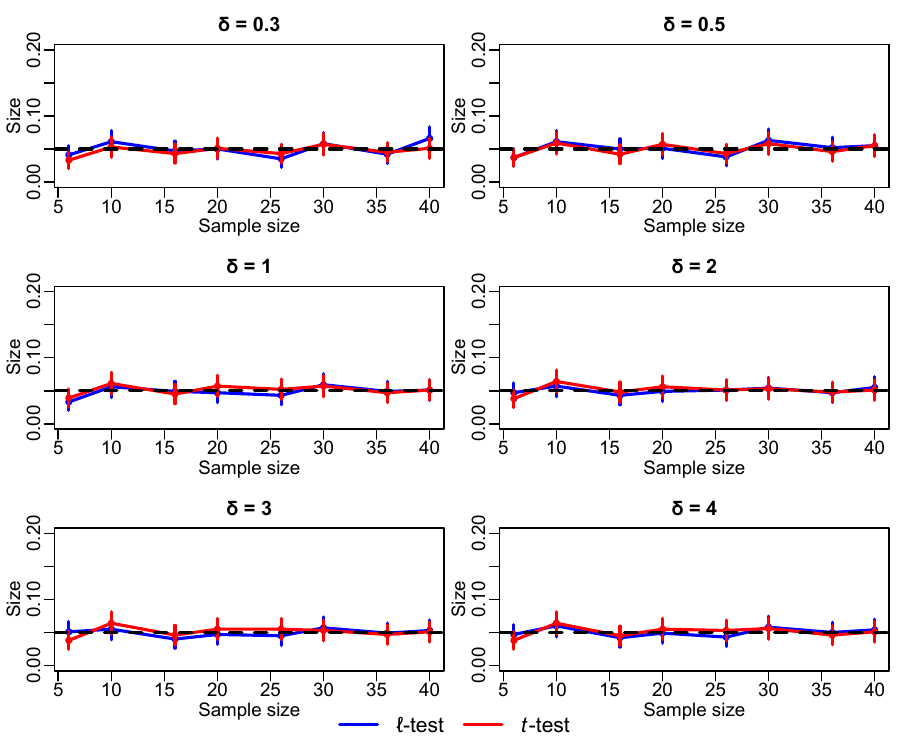}
        \caption{\textcolor{black}{Same as Figure~\ref{fig:non_linear_robustness} but with $\rho = 0.5$.}}
        \label{fig:non_linear_robustness_corr}
    \end{figure}

\subsection{Comparison of the various confidence intervals in the linear model}
\label{sec:app_ci_with_1se}
In this section, we summarize the full results of our simulations comparing the various procedures of obtaining confidence intervals for $\beta_j$. We consider four different experimental settings in Figure \ref{fig:setting_1_unadjusted_1se} (studying the effect of varying the amplitude with $d$ much smaller than $n$ ), Figure \ref{fig:setting_2_unadjusted_1se} (studying the effect of varying the amplitude with $d$ and $n$ close), Figure \ref{fig:setting_sparsity_unadjusted_1se} (studying the effect of varying the sparsity in $\bm \beta$) and Figure \ref{fig:setting_corr_unadjusted_1se} (studying the effect of varying the inter-variable correlations). The linear model we consider has exactly the same specifications as in Section \ref{sec:expt_ci_unconditional} and the captions of the respective figures contain the further details. For the $\ell$-test confidence interval, we consider the performance of the procedure when both the min rule and the 1se rule is used to choose $\lambda$. In addition to our observations in Section \ref{sec:expt_ci_unconditional} and similar to those in Section \ref{sec:app_ltest_further_expt}, we observe from Figure \ref{fig:setting_2_unadjusted_1se} that the gap between the average length of the confidence interval obtained by inverting the one-sided $t$-test and the $\ell$-test confidence interval is even smaller when $d=90$, and can again be justified by our observations in Appendix \ref{sec:app_improved_performance}. We also see that, except for the cases when the vector $\bm \beta$ is very dense, the performance of the 1se rule and the min rule is almost identical, despite the former being a stricter selection rule. Even though the former chops-off a larger mass from the distribution of $u_1$, it is less likely that this region significantly overlaps with the corresponding $5\%$ rejection region and hence the corresponding smoothed out statistic (as described in Section \ref{sec:smoothing}) under both the rules have almost similar distribution in the 5$\%$ rejection region. However, as is evident from Figure \ref{fig:setting_sparsity_unadjusted_1se}, the 1se rule does perform worse than the min rule when the most of the entries of $\bm \beta$ are signals and can be attributed to the fact that in this case the LASSO is not a good estimator of $\bm \beta$ and most of our intuitions break down.

\begin{figure}[H]
    \centering
    \includegraphics{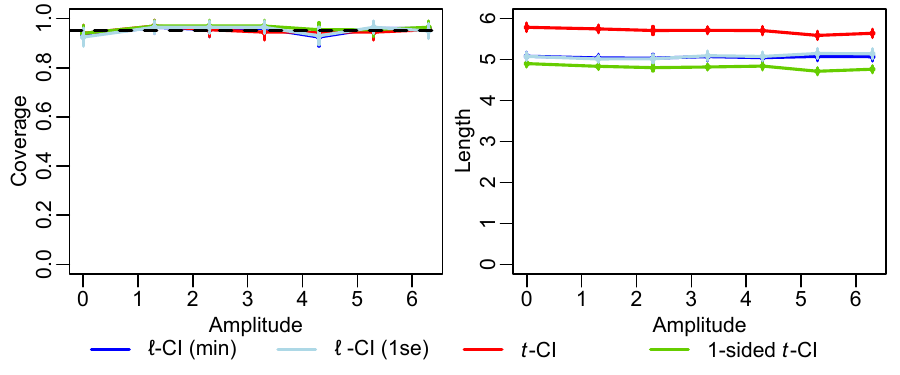}
    \caption{Length and coverage of various $95\%$ confidence intervals. We set $n=100,d=50,k=5,\bm \Sigma = \bm I, \sigma = 1$ and vary $A$. The error bars represent plus or minus two standard errors.}
    \label{fig:setting_1_unadjusted_1se}
\end{figure}

\begin{figure}[H]
    \centering
    \includegraphics{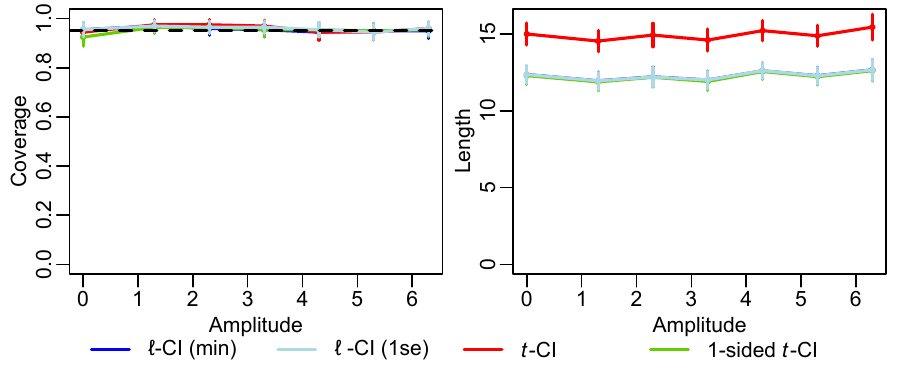}
    \caption{Length and coverage of various $95\%$ confidence intervals. We set $n=100,d=90,k=5,\bm \Sigma = \bm I, \sigma = 1$ and vary $A$. The error bars represent plus or minus two standard errors.}
    \label{fig:setting_2_unadjusted_1se}
\end{figure}

\begin{figure}[H]
    \centering
    \includegraphics{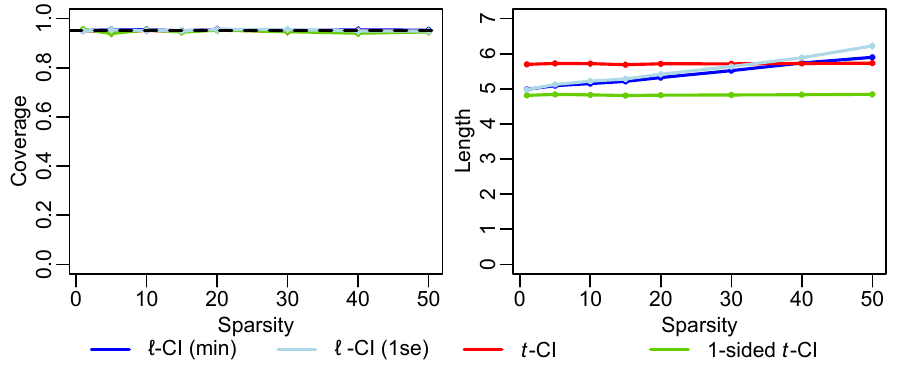}
    \caption{Length and coverage of various $95\%$ confidence intervals while varying the sparsity. We set $n=100,d=90,A=4.3,\bm \Sigma = \bm I, \sigma = 1$ and vary $k$. The error bars represent plus or minus two standard errors.}
    \label{fig:setting_sparsity_unadjusted_1se}
\end{figure}

\begin{figure}[H]
    \centering
    \includegraphics{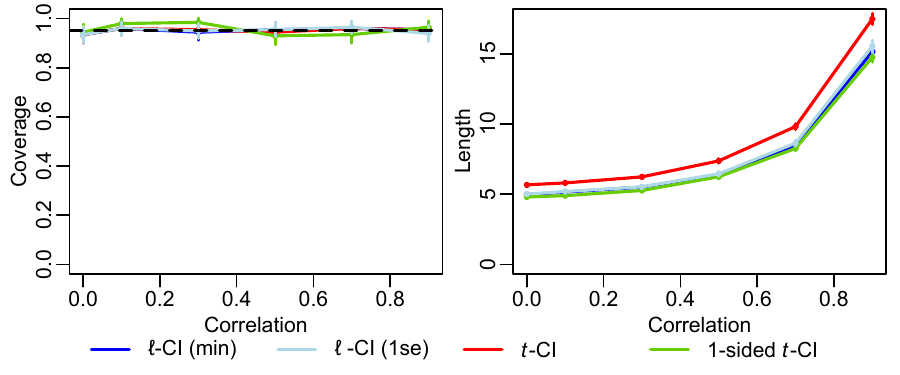}
    \caption{Length and coverage of various $95\%$ confidence intervals while varying the inter-variable correlation. We set $n=100,d=90,A=4.3,k=4,\bm \Sigma_{ij} = \rho^{|i-j|}, \sigma = 1$ and vary $\rho$. The error bars represent plus or minus two standard errors.}
    \label{fig:setting_corr_unadjusted_1se}
\end{figure}

\subsection{LASSO selection-adjusted confidence sets for linear model coefficients}
\label{sec:app_expt_lasso_adjusted_ci}
Expanding on Section \ref{sec:expt_lasso_adjusted_ci}, we compare the various conditional-on-LASSO-selection inference methods under another setting where $d=90$, with the results plotted in Figure \ref{fig:setting_2_adjusted}. As in Figure~\ref{fig:setting_1_adjusted}, there is practically no difference between the performance of $\underline{\hat C}_j^\lambda$ and $\underline{\hat C}_j^{*\lambda}$ but, unlike in Figure~\ref{fig:setting_1_adjusted}, \citet{liu2018powerful} performs quite similarly to them as well. The reason for this is that, as noted in Section~\ref{sec:relatedwork}, \citet{liu2018powerful}'s method assumes known $\sigma^2$ (and hence we provide it the true value of $\sigma^2$ in both this simulation and that of Figure~\ref{fig:setting_1_adjusted}) while the $\ell$-test does not. Unknown $\sigma^2$ is what necessitates the second component of the sufficient statistic $\bm S^{(j)}$, namely, $\bm y^T \bm y$, which the $\ell$-test conditions on but \citet{liu2018powerful}'s method does not. This difference becomes information-theoretically more pronounced as $d$ approaches $n$, as it does in the simulation in Figure~\ref{fig:setting_2_adjusted}, because the residual degrees of freedom decreases, resulting in a relative power loss for the $\ell$-test-based confidence intervals that approximately offsets the benefit of the $\ell$-test. To confirm this explanation, we can easily derive a version of the $\ell$-test and corresponding conditional confidence intervals that assumes $\sigma^2$ is known and hence does not condition on $\bm y^T \bm y$ (see footnote in Section \ref{sec:discussions}). This is plotted as dashed curves in Figure~\ref{fig:setting_2_adjusted}, and as expected these methods again provide consistently shorter confidence intervals than the method in \citet{liu2018powerful}.

\begin{figure}[H]
    \centering
    \includegraphics{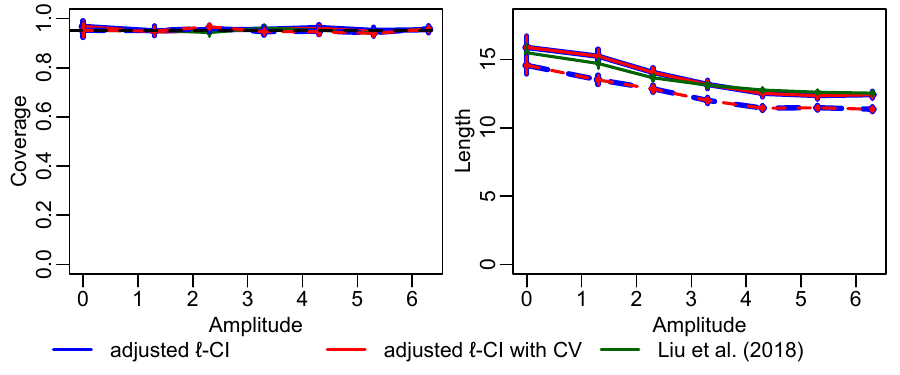}
    \caption{We consider the exact same setting as Figure \ref{fig:setting_1_adjusted} but with $d=90$. The dashed curves represent the counterparts of the corresponding $\ell$-test procedures with known $\sigma^2$. The error bars represent plus or minus two standard errors.}
    \label{fig:setting_2_adjusted}
\end{figure}

\section{Inverting the one-sided $t$-test}
\label{sec:app_1-sided-interval}
In this section, we describe the confidence interval obtained by inverting the one-sided $t$-test, as mentioned in Section \ref{sec:expt_ci_unconditional}. For the linear model, $\bm y \sim N_n(\bm X\bm \beta, \sigma^2 \bm I_n)$, the one-sided $t$-test tests $H_{j}(\gamma): \beta_j = \gamma$ by rejecting for large values of the $t$-test statistic if $\beta_j>\gamma$ and and for small values, if $\beta_j<\gamma$. In case $\beta_j = \gamma$, we, without any loss of generality, fix the convention of rejecting for small values of the $t$-test statistic, however, the validity of the test is not affected if rejecting for large values as well. Thus, with this convention, the one-sided $t$-test rejects $H_{j}(\gamma)$ when $\frac{\hat \beta_{j, \mathrm{OLS}}-\gamma}{\hat{\mathrm{SE}}\left(\hat \beta_{j, \mathrm{OLS}}\right)}> t_{\alpha;n-d}$, if $\beta_j>\gamma$ and when $\frac{\hat \beta_{j, \mathrm{OLS}}-\gamma}{ \hat{\mathrm{SE}}\left(\hat \beta_{j, \mathrm{OLS}}\right)}< -t_{\alpha;n-d}$, if $\beta_j\leq \gamma$. Here $t_{\alpha;n-d}$ is the quantile of the $t_{n-d}$ distribution putting mass $\alpha$ on its upper tail. Inverting this test gives the following one-sided $100(1-\alpha)\%$ confidence interval,
\begin{align*}
    C^t_{\mathrm{1-sided}} = \begin{cases}
        \left[\hat{\beta}_{j,\mathrm{OLS}}\pm t_{\alpha;n-d}\hat{\mathrm{SE}}\left(\hat \beta_{j, \mathrm{OLS}}\right)\right], &\textrm{ if } \beta_j \in  \left[\hat{\beta}_{j,\mathrm{OLS}}\pm t_{\alpha;n-d}\hat{\mathrm{SE}}\left(\hat \beta_{j, \mathrm{OLS}}\right)\right]\\
        \left[\beta_j, \hat{\beta}_{j,\mathrm{OLS}}+ t_{\alpha;n-d}\hat{\mathrm{SE}}\left(\hat \beta_{j, \mathrm{OLS}}\right)\right), &\textrm{ if } \beta_j < \hat{\beta}_{j,\mathrm{OLS}}- t_{\alpha;n-d}\hat{\mathrm{SE}}\left(\hat \beta_{j, \mathrm{OLS}}\right)\\
        \left(\hat{\beta}_{j,\mathrm{OLS}}- t_{\alpha;n-d}\hat{\mathrm{SE}}\left(\hat \beta_{j, \mathrm{OLS}}\right), \beta_j\right), &\textrm{ if } \beta_j > \hat{\beta}_{j,\mathrm{OLS}} + t_{\alpha;n-d}\hat{\mathrm{SE}}\left(\hat \beta_{j, \mathrm{OLS}}\right).
    \end{cases}
\end{align*}
Note that the above interval is indeed an oracle interval as one needs to know the exact value of $\beta_j$ to construct it. Furthermore note that when $\beta_j > \hat{\beta}_{j,\mathrm{OLS}} + t_{\alpha;n-d}\hat{\mathrm{SE}}\left(\hat \beta_{j, \mathrm{OLS}}\right)$, the interval surely misses the target parameter, $\beta_j$, and this is because of the one-sided nature of the test when we are testing at the true parameter, $\gamma = \beta_j$. In case we were rejecting for large values of the $t$-test statistic on observing $\beta_j$, we would get an open interval in the case when $\beta_j < \hat{\beta}_{j,\mathrm{OLS}} - t_{\alpha;n-d}\hat{\mathrm{SE}}\left(\hat \beta_{j, \mathrm{OLS}}\right)$ instead.

\end{document}